\newcounter{axioms}
\def\RR{{\mathbb R}}
\def\CC{{\mathbb C}}
\def\NN{{\mathbb N}}
\def\ZZ{{\mathbb Z}}
\def\A{{\mathcal A}}
\def\B{{\mathcal B}}
\def\F{{\mathcal F}}
\def\H{{\mathcal H}}
\def\K{{\mathcal K}}
\def\M{{\mathcal M}}
\def\N{{\mathcal N}}
\def\P{{\mathcal P}}
\def\R{{\mathrm R}}
\def\U{{\mathcal U}}
\def\V{{\mathcal V}}
\def\W{{\mathcal W}}
\def\Poi{{\mathcal {P}_+^\uparrow}}
\def\a{\alpha}
\def\G{\Gamma}
\def\i{\iota}
\def\k{\kappa}
\def\l{\lambda}
\def\L{{\mathrm L}}
\def\Ad{{\hbox{\rm Ad\,}}}
\def\sp{{\rm sp}\,}
\def\1{{\mathbbm 1}}
\def\dual{{\rm d}}
\def\diff{{\rm Diff}}
\def\diffs1{\diff(S^1)}
\def\mob{{\rm M\ddot{o}b}}
\def\uMob{{\widetilde{\Mob}}}
\def\vir{{\rm Vir}}
\def\supp{{\rm supp\,}}
\def\psl2r{{\rm PSL}(2,\RR)}
\def\sl2r{{\rm SL}(2,\RR)}
\def\su11{{\rm SU}(1,1)}
\def\2dmob{{\overline{\psl2r}\times\overline{\psl2r}}}
\def\<{\langle}
\def\>{\rangle}
\def\Im{\mathrm{Im}\,}
\def\re{\mathrm{Re}\,}
\def\im{\mathrm{Im}\,}
\def\poincare{{\P^\uparrow_+}}
\def\poincarej{{\P_+}}
\def\cyl{{\widetilde{M}}}
\DeclareMathOperator{\Tr}{Tr}
\DeclareMathOperator{\Mob}{M\ddot ob}
\newtheorem{theorem}{Theorem}[section]
\newtheorem{corollary}[theorem]{Corollary}
\newtheorem{proposition}[theorem]{Proposition}
\newtheorem{lemma}[theorem]{Lemma}
\theoremstyle{remark}
\newtheorem{remark}[theorem]{Remark}
\title{Scale and M\"obius covariance in two-dimensional Haag-Kastler net}
\date{} 
\author{
{\bf Vincenzo Morinelli}\footnote{Supported in part by the ERC Advanced Grant 669240 QUEST ``Quantum Algebraic Structures and
Models'', MIUR FARE R16X5RB55W QUEST-NET, GNAMPA-INdAM .}, \quad
{\bf Yoh Tanimoto}\footnote{Supported by Programma per giovani ricercatori, anno 2014 ``Rita Levi Montalcini''
of the Italian Ministry of Education, University and Research.
}
\\
   Dipartimento di Matematica, Universit\`a di Roma Tor Vergata\\
   Via della Ricerca Scientifica 1, I-00133 Roma, Italy\\
   email: {\tt morinell@mat.uniroma2.it}, {\tt hoyt@mat.uniroma2.it}
      \vspace{0.5cm}
}
\begin{document}
\maketitle

\begin{abstract}
Given a two-dimensional Haag-Kastler net which is Poincar\'e-dilation covariant
with additional properties, we prove that it can be extended to a M\"obius covariant net.
Additional properties are either a certain condition on modular covariance,
or a variant of strong additivity. The proof relies neither on the existence of stress-energy tensor
nor any assumption on scaling dimensions.
We exhibit some examples of Poincar\'e-dilation covariant net which cannot be extended to
a M\"obius covariant net, and discuss the obstructions.
\end{abstract}

\section{Introduction}\label{introduction}
For a relativistic quantum field theory, there has been a long-standing question
whether scale invariance (dilation covariance) implies conformal covariance \cite{Nakayama15}.
In $(1+1)$-dimensions, we call the latter {\it M\"obius covariance}\footnote{The word ``conformal'' is reserved for diffeomorphism covariance,
c.f.\! \cite{KL04-2}.}
in order to distinguish it from diffeomorphism covariance (an action of the Virasoro algebra).
This claim, of course, should not be taken literally.
A simple counterexample can be given based on a generalized free field
which is dilation covariant but not M\"obius or conformally covariant
(see Section \ref{counterexample}).
On the other hand, in $(3+1)$-dimensions, there is no known example (even in the physical sense)
of relativistic (unitary) dilation-covariant quantum field theory with certain additional conditions
which is not conformally covariant,
although there is currently no proof of the enhancement either.
In $(1+1)$-dimensions, the implication ``dilation $\Longrightarrow$ M\"obius''
is considered a ``theorem'', whose proof exploits the existence
of stress-energy tensor and the discreteness of scaling dimension \cite{Zomolodchikov86, Polchinski88}.

Dilation covariance is believed to appear naturally in physical models.
If one looks at longer and longer length scale of a physical system,
the behaviour of the system should not depend on the details in the smaller scale
and may obtain a low-energy effective theory which is scale invariant.
Alternatively, one might look at smaller and smaller spacetime regions in the quantum chromodynamics,
and should be able to see quarks which are otherwise confined and not visible.
Such a limiting theory is expected to be simpler and to obtain the dilation symmetry
(yet this is not automatic, see \cite{BDM10}).
Now, a dilation-covariant theory has often an additional symmetry, the conformal symmetry.
Indeed, in $(1+1)$-dimensions, most of important dilation-covariant theories are indeed
M\"obius covariant. Although not all dilation-covariant theories have
M\"obius covariance, it is natural to expect some additional conditions
should imply the enhancement of symmetry.

In theoretical physics, 
the problem is considered to be solved in $(1+1)$-dimensions by the argument by
Zamolodchikov \cite{Zomolodchikov86} 
and Polchinski \cite{Polchinski88}, which are based on the existence of scale current
and the discreteness of scale dimensions.
On the other hand, the enhancement of symmetry can be clearly stated even in terms of axiomatic/algebraic
quantum field theory, hence it is natural to expect that certain additional assumptions
should really imply M\"obius covariance in the mathematical level.
In this respect, 
Guido, Longo and Wiesbrock proved that a dilation-translation covariant net of von Neumann algebras
on the real line $\RR$ satisfying the Bisognano-Wichmann property can be extended to the compactified
real line $S^1$ and obtain  M\"obius covariance \cite{GLW98}.
Remarkably, this last result does not assume any other physical requirement such as
the existence of current or scaling dimensions,
but the proof is based on the modular theory of von Neumann algebras.
Hence one might expect a similar result for two-dimensional dilation-covariant quantum field theories.

In this paper, we present a proof of enhancement to M\"obius covariance, in addition to the standard Haag-Kastler axioms,
under the following operator-algebraic conditions
\begin{itemize}
 \item The vacuum is cyclic and separating for the lightcone algebra.
 \item The theory is covariant under dilations and it is implemented by the modular group for the lightcone algebra,
 and one of the following holds.
 \begin{enumerate}[{(}a{)}]
  \item The half-hand algebra and the double cone algebra (see Figure \ref{fig:regions})
  consist a half-sided modular inclusion \cite{Wiesbrock93-1, AZ05}
  \item The theory satisfies a variation of strong additivity.
 \end{enumerate}
\end{itemize}
Only one of the last two conditions is needed in our proof.
The first one is an assumption about the modular group of a certain infinitely extended region, and might look too strong,
but actually we show that it is a consequence of the second, which appears to have little to do with conformal covariance.
With these conditions, which again are not concerned with either the current/stress-energy tensor or
scaling dimensions, we can extend the symmetry group to the two-dimensional M\"obius group
by the modular theory.

We present two families of counterexamples.
In one of them, we simply break the Bisognano-Wichmann property for the future lightcone $V_+$ which is a necessary condition for M\"obius covariance \cite{GLW98}.
In the other, we take a certain representation of the two-dimensional
M\"obius group and apply the BGL construction \cite{BGL93}. This itself is M\"obius covariant, but its dual net
is the dual net of a generalized free field which cannot be M\"obius covariant. This last example provides
a M\"obius covariant net with the trace class property whose dual net is not M\"obius covariant and does not have the split property.
The reason why this dual net cannot be M\"obius covariant (the vacuum is not separating for the algebra of $V_+$) is
different from the reason why some generalized free fields cannot be M\"obius covariant (wrong scaling dimension).
We also examine arguments in physics literature and see to which extent they work.

This paper is organized as follows.
In Section \ref{preliminaries} we explain the geometric setting and the symmetry structure of two-dimensional M\"obius covariant net,
and state our additional assumptions on dilation covariant nets.
In Section \ref{proofmc} we give a proof of M\"obius covariance based on these assumptions.
In Section \ref{counterexample}, examples of dilation-covariant nets which do not extend to M\"obius covariant nets are provided.
In Section \ref{comments} we discuss to what extent our assumptions are necessary,
some arguments in physics literature and open problems.
Beside, we need the two-dimensional spin statistics theorem in the course of the proof,
and we exhibit a proof in Appendix \ref{spinstatistics} for self-containedness.
In Appendix \ref{directintegral} we provide basic results on direct integrals of Hilbert subspaces which we were not able to find in literature,
and are essential for our counterexamples.

\section{Preliminaries}\label{preliminaries}
Here we are going to describe the operator-algebraic setting for quantum field theory
and various spacetime symmetries.
\subsection{One-dimensional M\"obius group}
The $(1+1)$-dimensional Minkowski space is the product of two lightrays.
The subgroup of lightlike translations and dilations of the Poincar\'e group
acts on each lightray $\overline \RR = \RR \cup \{\infty\}$.
This action can be extended to the M\"obius group $\Mob = \mathrm{PSL}(2,\RR) \cong \mathrm{PSU}(1,1)$,
which we review here.
See \cite{Longo08, Weinerthesis} for our notations.

Consider the Cayley transform $C:\overline\RR\ni x\mapsto -\frac{x-i}{x+i}\in S^1$,
 where $S^1$ is the complex unit circle  $\{z\in\CC: |z|=1\}$ and $C(\infty)$ is defined to be equal to $-1$ by convention.
 With this map $C$, we can pass from the line to the circle picture. The Cayley transform is the inverse of the stereographic projection
 $C^{-1}:S^1\ni z\mapsto -i\,\frac{z-1}{z+1}\in\overline\RR$ and sends $S^1$ onto $\overline\RR$. 
 With this convention, the upper semicircle is mapped in to the right half-line $(0,+\infty)$.

 The group $\mathrm{SL}(2,\RR)$ acts on the compactified line $\overline \RR$
 by linear fractional transformations.
 The kernel of its action is $\{\pm1\}$ and $\mathrm{PSL}(2,\RR)=\mathrm{SL}(2,\RR)\slash\{\pm1\}$ defines the {\bf M\"obius group},
 the group of orientation preserving conformal transformations of $\overline \RR$. We denote it by $\Mob$.
 On the unit circle in $\CC$, the action of $\Mob$ translates to that of $\mathrm{SU}(1,1)$ again by linear fractional transformations
 through the Cayley transform.
 
The group $\Mob$ is a three-dimensional Lie group and can be generated by the following one-parameter subgroups:
 \begin{itemize}
 \item Rotations $\rho_\theta$: for $\theta \in \RR/2\pi\ZZ$, $\rho_\theta = e^{i\theta}z\in S^1$,  in the circle picture;
 \item Dilations $\delta_s$: for $s \in \RR$, $\delta_s a =  e^s a\in \RR$, on the line picture.
 \item Translation $\tau_t$: for $t \in \RR$, $\tau_t a = a +s\in \RR$, on the line picture;
 \end{itemize}
In literature
they are respectively denoted  with $\mathrm{\mathbf{K}}$, $\mathrm{\mathbf{A}}$ and $\mathrm{\mathbf{N}}$,
and any element $g\in\Mob$ can be uniquely decomposed 
following the $\mathrm{\mathbf{KAN}}$ decomposition (Iwasawa decomposition), i.e.\!
the product of elements from each of these groups.
The subgroups $\mathrm{\mathbf{A}}$ and $\mathrm{\mathbf{N}}$
generate the translation-dilation group  $\mathrm{\mathbf{P}}$ which preserves the point $\infty$ in the real line picture.

In general, an element $g \in \Mob$ is determined by its action on three points of the circle.
Any pair of points on $S^1$, hence any interval on $S^1$, can be brought to another pair, respectively another interval,
by a M\"obius transformation. If $g \in \Mob$ takes $\RR_+$ (in the line picture) to a general  interval $I$
(in the circle picture), then we denote by $\Lambda_I(t) = g\delta_{-t} g^{-1}$,
and call them the dilations\footnote{By convention, the sign is reversed: $\Lambda_{\RR_+}(t) = \delta_{-t}$,
in accordance with \cite{GLW98}.} associated with $I$.
Note that $\Lambda_I$ does not depend on the choice of $g$.
By this correspondence, $\Lambda_{\RR_-}(t) = \delta_{t}$,
and $\Lambda_{\RR_+ + 1}(t)\cdot a = e^{-t}(a-1)+1$.
The two subgroups $\{\Lambda_{\RR_+}(t)\}$ and $\{\Lambda_{\RR+1}(t)\}$ generate a two-dimensional subgroup of $\Mob$
which is isomorphic to $\textbf{P}$ and preserves the point $\infty$.
Furthermore, any element of a small neighborhood of the unit element can be written as a simple product:
indeed, we have
\begin{align*}
 \Lambda_{\RR_+}(t)\Lambda_{\RR_+ + 1}(s) = \Lambda_{\RR_++1}\left(-\ln({e^{-t-s}+1-e^{-t}})\right)
 \Lambda_{\RR_+}\left(-\ln\left(\frac{e^{-t-s}}{{e^{-t-s}+1-e^{-t}}}\right)\right)
\end{align*}
and it is immediate that any finite product of $\Lambda_{\RR_+}$ and $\Lambda_{\RR_+ + 1}$,
as long as the parameters are sufficiently small (namely
when $e^{-t-s}+1-e^{-t} > 0$),
can be reduced to a product of two in the desired order.
A similar relation holds for $\Lambda_{\RR_+}$ and $\Lambda_{(0,1)}$:
 \begin{align}\label{eq:commutation}
  \Lambda_{\RR_+}(t)\Lambda_{(0,1)}(s) = 
  \Lambda_{(0,1)}\left(\ln({e^{t+s}+1-e^{t}})\right)
  \Lambda_{\RR_+}\left(\ln\left(\frac{e^{t+s}}{{e^{t+s}+1-e^{t}}}\right)\right)
 \end{align}
By bringing the three points $0,1,\infty$ to another three points, an analogous relation holds
for $\Lambda_{I_1}, \Lambda_{I_2}$, where $I_1 \supset I_2$ and there is one and only one of the endpoints 
shared by $I_1$ and $I_2$.
We call these relations simply the {\bf commutation relations} of $\Lambda_{I_1}, \Lambda_{I_2}$.

The group $\Mob$ can be generated by different subgroups. 
Consider $I_1, \,I_2,\, I_3$, disjoint intervals whose union is dense in $S^1$, then
$\Lambda_{I_k}, k=1,2,3$ generate $\Mob$. This can be seen from the fact that they together can move any ordered three points to any other ordered three points. 
In particular, for $I_1 = (-\infty, 1), I_2 = (0,1), I_3 = (0,\infty)$,
$\Lambda_{I_k}, k \in \ZZ_3$ generate $\Mob$ and any pair
$\Lambda_{I_k}, \Lambda_{I_{k+1}}$ generates a subgroup isomorphic to $\mathrm{\mathbf{P}}$.

\subsection{{\texorpdfstring{$(1+1)$}{(1+1)}-dimensional Minkowski space and Einstein cylinder}}\label{2dminkowski}

Consider the set of coordinates given by the lightrays
$\left(\frac{a_0-a_1}{\sqrt2},\frac{a_0+a_1}{\sqrt 2}\right)$,
where $a_0$ is the time coordinate and $a_1$ is the space coordinate.
The following spacetime regions play important roles in our work.
\begin{itemize}
\item Forward and backward light-cones: $V_+=\RR_+\times \RR_+$ and $V_- = \RR_-\times \RR_-$
\item Right and left standard wedges: $W_\R=\RR_-\times\RR_+ $ and $W_\L=\RR_+\times \RR_-$
\item Right and left half-bands: $B_{\R,(c,d)}^\pm=(a,b)\times \RR_\pm$ and $B_{\L,(a,b)}^{\pm}=\RR_\pm\times (a,b)$
\item Double cones: $D_{(a,b),(c,d)}=(a,b)\times(c,d)$ 
\end{itemize}
Then, we take also some specific regions (see Figure \ref{fig:regions}):
\begin{itemize}
\item $B_{\R}=(0,1)\times \RR_+$ and $B_{\L}=\RR_+\times (0,1)$
\item  $D_0=(0,1)\times(0,1)$ 
\end{itemize}

\begin{figure}[ht]\centering
\begin{tikzpicture}[path fading=north,scale=0.75]
         \draw [thick, ->] (-4,0) --(4,0) node [above left] {$a_1$};
         \draw [thick, ->] (0,-1)--(0,5) node [below right] {$a_0$};
         \draw [ultra thick] (0,0)-- (-3.5,3.5) node [above right] {$B_\L$};
         \draw [ultra thick,dotted] (-3.5,3.5)--(-4,4);
         \draw [ultra thick] (1,1) -- (-2.5,4.5);
         \node at(1.2,0.5) {$D_0$};
         \draw [ultra thick] (-1,1)-- (2.5,4.5);
         \draw [ultra thick,dotted] (2.5,4.5)--(3,5);
         \draw [ultra thick,dotted] (-2.5,4.5)--(-3,5);
         \draw [ultra thick,dotted] (3.5,3.5)--(4,4);
         \draw [ultra thick] (3.5,3.5)node [above left] {$B_\R$} -- (0,0);
              \fill [color=black,opacity=0.4]
               (0,0)--(4,4)--(3,5)--(-1,1)--(0,0);
              \fill [color=black,opacity=0.4]
               (0,0)--(-4,4)--(-3,5)--(1,1)--(0,0);
              \fill [color=black,opacity=0.4]
               (0,0)--(1,1)--(0,2)--(-1,1)--(0,0);
\end{tikzpicture}
\caption{Double cone $D_0$ and half-bands $B_\R, B_\L$.}
\label{fig:regions}
\end{figure}
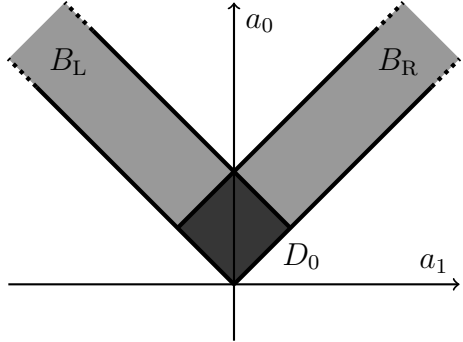

Let $\uMob$ be the universal covering group of $\Mob$.
It is again generated by three one-parameter subgroups $\rho, \delta, \tau$
(we use the same symbols for elements in $\uMob$, as long as non confusion arises),
and $\rho$ is now lifted from $\RR/2\pi\ZZ$ to $\RR$.
Let $G$ be the quotient group of $\uMob\times\uMob$ by the normal subgroup generated by
$(\rho_{-2\pi}, \rho_{2\pi})$. The group $G$ acts locally on the Minkowski space $M$
identified with the product of two lightrays $\RR\times\RR$,
and its action can be promoted to an action on the Einstein cylinder $\cyl$
\cite{BGL93}\footnote{$\cyl$ is homeomorphic to $\RR \times S^1$, but this product decomposition is different from
the lightlike decomposition above: $\RR$ goes in the $a_0$-direction while $S^1$ is the $a_1$-direction.
This decomposition will not be used in this paper.
}.
The Minkowski space is identified with a maximal square $(-\pi,\pi)\times (-\pi,\pi)$,
where the product is intended for the lightlike decomposition parametrized by the lifted rotations.
Let $\i$ be the unit element of $\uMob$.
For any $g\in\uMob$, elements of the form $g\times \i$ (respectively $\i\times g$)
act trivially on the positive lightray $a_0 = a_1$ (respectively the negative lightray $a_0 = -a_1$)

We introduce the following elements in $G$ in terms of the lightlike components:
\begin{itemize}
\item let $\Lambda_{V_+}$ be the two-dimensional dilation of $M$: $\Lambda_{V_+}(t)=\delta(-t)\times\delta(-t)$.
\item let $\Lambda_{W_L}$ be the one-parameter group of Lorentz boosts associated with the left standard wedge $W_L$:
$\Lambda_{W_L}(t)=\delta(-t)\times\delta(t)$.

\end{itemize}
In some literature a different convention is used where the parameter is reversed.
The sign of our convention coincides with that of the modular group (see the Bisognano-Wichmann property in Section \ref{haag-kastler}).

\subsection{The modular theory of von Neumann algebras and half-sided modular inclusions}\label{modular}
Let $\M\subset\B(\H)$ be a von Neumann algebra with a cyclic and separating vector $\Omega\in\H$.
The associated Tomita operator $S_{\M,\Omega}$ is an antilinear involution which is the closure of
\[
\H\supset \M\Omega\ni x\Omega\longmapsto x^*\Omega\in \M\Omega\subset\H. 
\]
Through its polar decomposition $S_{\M,\Omega}=J_{\M,\Omega}\Delta^{\frac12}_{\M,\Omega}$
one obtains the modular conjugation $J_{\M,\Omega}$ and the modular operator $\Delta_{\M,\Omega}$.
They satisfy the relation $J_{\M,\Omega}\Delta_{\M,\Omega} J_{\M,\Omega}=\Delta^{-1}_{\M,\Omega}.$
Furthermore $\Delta_{\M,\Omega}$ is the generator of a one parameter group of automorphisms called
the \textbf{modular automorphism group}, namely $\Delta^{it}_{\M,\Omega}\M\Delta^{-it}_{\M,\Omega}=\M$
(see e.g.\! \cite{TakesakiII}).
For the anti-unitary conjugation $J_{\M,\Omega}$ we have $J_{\M,\Omega}\M J_{\M,\Omega}=\M'$. 
\begin{lemma}\label{lem:fix} 
Let $\M\subset\B(\H)$ be a von Neumann algebra with a cyclic and separating vector $\Omega\in\H$
and a $U$ such that $U\M U^* = \M$ and $U\Omega = \Omega$.
Then it holds that $US_{\M,\Omega}U^*=S_{\M,\Omega}$ and
hence
\[
 U\Delta_{\M,\Omega}U^*=\Delta_{\M,\Omega}, \qquad UJ_{\M,\Omega}U^*=J_{\M,\Omega}.
\]
\end{lemma}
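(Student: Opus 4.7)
The plan is to reduce everything to the action of $S_{\M,\Omega}$ on the core $\M\Omega$, verify the identity there, and then extract the modular operator and conjugation from uniqueness of the polar decomposition.

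First I would note that since $U$ is unitary with $U\M U^* = \M$, we also have $U^*\M U = \M$, and $U\Omega = \Omega$ implies $U^*\Omega = \Omega$. Now pick $x\in\M$ and compute the action of $US_{\M,\Omega}U^*$ on the vector $x\Omega$: since $U^*xU \in \M$, we have
\[
  U^*x\Omega \;=\; (U^*xU)U^*\Omega \;=\; (U^*xU)\Omega \;\in\; \M\Omega,
\]
which lies in the domain of $S_{\M,\Omega}$, and hence
\[
  S_{\M,\Omega}U^* x\Omega \;=\; (U^*xU)^*\Omega \;=\; U^*x^*\Omega.
\]
Applying $U$ gives $US_{\M,\Omega}U^* x\Omega = x^*\Omega = S_{\M,\Omega}x\Omega$. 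Thus the two closed antilinear operators $US_{\M,\Omega}U^*$ and $S_{\M,\Omega}$ agree on the core $\M\Omega$; since $US_{\M,\Omega}U^*$ is the closure of the restriction of $US_{\M,\Omega}U^*$ to $U(\M\Omega) = (U\M U^*)(U\Omega) = \M\Omega$ (which is also a core), the two closed operators coincide.

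Having $US_{\M,\Omega}U^* = S_{\M,\Omega}$, I would conclude the two equalities by invoking uniqueness of the polar decomposition. Writing $S_{\M,\Omega} = J_{\M,\Omega}\Delta_{\M,\Omega}^{1/2}$, conjugation by the unitary $U$ yields
\[
  US_{\M,\Omega}U^* \;=\; (UJ_{\M,\Omega}U^*)\,(U\Delta_{\M,\Omega}^{1/2}U^*),
\]
where $UJ_{\M,\Omega}U^*$ is antiunitary and $U\Delta_{\M,\Omega}^{1/2}U^*$ is positive selfadjoint; these are therefore the antiunitary and positive parts of the polar decomposition of $S_{\M,\Omega}$. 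Uniqueness forces $UJ_{\M,\Omega}U^* = J_{\M,\Omega}$ and $U\Delta_{\M,\Omega}^{1/2}U^* = \Delta_{\M,\Omega}^{1/2}$, and squaring (using functional calculus preservation under unitary conjugation) gives $U\Delta_{\M,\Omega}U^* = \Delta_{\M,\Omega}$.

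I do not anticipate a serious obstacle: the only delicate point is the domain argument showing the equality on $\M\Omega$ lifts to equality of closed operators, which is handled by recognising $\M\Omega$ as a common core invariant (as a set) under both $U$ and $U^*$.
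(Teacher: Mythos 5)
Your proof is correct: you verify $US_{\M,\Omega}U^*=S_{\M,\Omega}$ on the core $\M\Omega$ (which is invariant under $U$ and $U^*$, so it remains a core for the conjugated operator), and then extract $U\Delta_{\M,\Omega}U^*=\Delta_{\M,\Omega}$ and $UJ_{\M,\Omega}U^*=J_{\M,\Omega}$ from uniqueness of the polar decomposition. The paper states this lemma without proof, as a standard fact of modular theory, and your argument is exactly the standard one it implicitly relies on.
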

The following theorem, due to Borchers \cite{Borchers92} (and significantly simplified by Florig \cite{Florig98}),
ensures that when there is a one-parameter semigroup of endomorphisms implemented by unitaries with positive generator,
these unitaries and the modular group generate a representation of the group $\mathrm{\mathbf{P}}$
of dilations and translations. In this representation, they are assigned
$\Delta_{\M,\Omega}^{\frac{it}{2\pi}}$ and $U(1)\Delta_{\M,\Omega}^{\frac{is}{2\pi}}U(1)^*$,
respectively and indeed satisfy Equation \eqref{eq:commutation}.

\begin{theorem}[Borchers]\label{thm:bor}
Let $\M\subset\B(\H)$ be a von Neumann algebra with a cyclic and separating vector $\Omega\in\H$,
and $t\mapsto U(t)=e^{iHt}$ be a unitary one parameter group such that $\sp H \subset \RR_\pm$, $U(t)\Omega = \Omega$ and
$\Ad U(t)(\M) \subset \M,\;t\geq0$. Then the following hold:
\begin{align*}
\Delta^{is}_{\M,\Omega}U(t)\Delta^{-is}_{\M,\Omega}&=U(e^{\mp 2\pi s}t),\\
J_{\M,\Omega}U(t)J_{\M,\Omega}&=U(-t),\quad t,s\in\RR .
\end{align*}
\end{theorem}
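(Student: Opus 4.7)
The plan is to follow Florig's streamlined version \cite{Florig98} of Borchers' original argument. Without loss of generality assume $\sp H \subset \RR_+$; the case $\sp H \subset \RR_-$ is handled by replacing $U(t)$ with $U(-t)$ and flipping the sign of the exponent in the conclusion.

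The argument rests on two analytic inputs. Positivity of $H$ implies that for every $\xi \in \H$, the map $w \mapsto U(w)\xi$ extends to a continuous function on the closed upper half plane $\{\Im w \geq 0\}$, analytic in the interior and of norm at most $\|\xi\|$. Tomita--Takesaki theory gives, for every $m \in \M$, that $z \mapsto \Delta_{\M,\Omega}^{iz} m\Omega$ extends continuously to the closed strip $\{-\tfrac12 \leq \Im z \leq 0\}$, analytic in the interior, with boundary value $\Delta_{\M,\Omega}^{1/2} m\Omega = J_{\M,\Omega} m^* \Omega$ at $\Im z = -\tfrac12$. The semigroup invariance $\Ad U(t)(\M) \subset \M$ for $t \geq 0$, together with $U(\pm t)\Omega = \Omega$, further yields the boundary identity
\begin{align*}
S_{\M,\Omega} U(t) m\Omega \;=\; \bigl(U(t) m U(-t)\bigr)^* \Omega \;=\; U(t) m^* \Omega \;=\; U(t) S_{\M,\Omega} m\Omega, \qquad t \geq 0,\ m \in \M,
\end{align*}
which couples positivity of $H$ to the modular data.

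The main step is then a two-variable Phragm\'en--Lindel\"of argument. For $m_1 \in \M$ and $m_2 \in \M'$, the scalar function
$f(z, w) := \langle m_2^* \Omega,\, \Delta_{\M,\Omega}^{iz} U(w) \Delta_{\M,\Omega}^{-iz} m_1 \Omega \rangle$
extends, by the two analytic inputs, to the product of a horizontal strip in $z$ and an upper half plane in $w$. The boundary identity above lets one rewrite $f$ on the edge $\Im z = -\tfrac12$ in terms of $J_{\M,\Omega}$ and $U(t)$ alone; on the other hand, $g(z, w) := \langle m_2^*\Omega, U(e^{-2\pi z} w) m_1\Omega\rangle$ is analytic on the same bidomain by positivity of $H$, and the rewritten boundary identity makes $f$ and $g$ agree on a set of positive measure on the boundary. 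A Phragm\'en--Lindel\"of/Liouville-type uniqueness then forces $f = g$ throughout; specializing to real $z$ gives the first identity on a dense set of vectors, and hence as an operator identity. Specializing the same equation to the edge $\Im z = -\tfrac12$, where $e^{-2\pi(s - i/2)} = -e^{-2\pi s}$ and $\Delta_{\M,\Omega}^{1/2} = J_{\M,\Omega} S_{\M,\Omega}$, produces the second identity $J_{\M,\Omega} U(t) J_{\M,\Omega} = U(-t)$ after unravelling the antiunitary conjugation.

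The hard part is the Phragm\'en--Lindel\"of step itself: the vector $\Delta_{\M,\Omega}^{iz} m\Omega$ only exists on a horizontal strip, so the two bivariate extensions have different natural domains whose overlap must be handled carefully; growth at $|\Re z|,|\Re w| \to \infty$ must be controlled to apply the three-lines theorem; and one must propagate the boundary identity from a single edge through the interior. This is precisely the technical core where Florig's reorganization of Borchers' original manipulation provides the decisive simplification.
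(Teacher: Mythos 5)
The first thing to note is that the paper itself contains no proof of Theorem \ref{thm:bor}: it is quoted as Borchers' theorem with references to \cite{Borchers92} and \cite{Florig98}, so there is no internal argument to compare yours with. Judged on its own, your outline assembles the correct ingredients of the Borchers--Florig proof: the extension of $w\mapsto U(w)\xi$ to the closed upper half plane from $\sp H\subset\RR_+$, the strip analyticity of $z\mapsto\Delta_{\M,\Omega}^{iz}m\Omega$ with boundary value involving $J_{\M,\Omega}m^*\Omega$, and the identity $S_{\M,\Omega}U(t)m\Omega=U(t)S_{\M,\Omega}m\Omega$ for $t\ge 0$, which indeed follows from $\Ad U(t)(\M)\subset\M$ and $U(t)\Omega=\Omega$. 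But the proposal does not carry out the decisive step, and the specific two-variable set-up you describe would not go through as stated.

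Concretely: (i) the comparison function $g(z,w)=\langle m_2^*\Omega, U(e^{-2\pi z}w)m_1\Omega\rangle$ is \emph{not} analytic on the product of a horizontal strip and the upper half plane, because for $\Im z$ in the interior of the strip the factor $e^{-2\pi z}$ rotates the upper half plane and $e^{-2\pi z}w$ can acquire negative imaginary part, where $U$ has no bounded continuation; the known proofs avoid this by fixing $t\ge 0$ and working with the single variable $z\mapsto U(e^{\pm 2\pi z}t)$ on a strip of width $\tfrac12$, where the argument of $U$ stays in the closed upper half plane. (ii) As written, $f(z,w)=\langle m_2^*\Omega,\Delta_{\M,\Omega}^{iz}U(w)\Delta_{\M,\Omega}^{-iz}m_1\Omega\rangle$ applies the unbounded operator $\Delta_{\M,\Omega}^{iz}$ (for $\Im z\neq 0$) to a vector not known to be in its domain; the rigorous version pairs against $\Delta_{\M,\Omega}^{-i\bar z}$ applied to the $\M'$-vector, using the analyticity of commutant vectors in the complementary strip, and this is precisely what makes the function well defined and bounded. (iii) The core step --- ``agreement on a set of positive measure on the boundary'' plus ``a Phragm\'en--Lindel\"of/Liouville-type uniqueness forces $f=g$'' --- is asserted rather than proved: you do not exhibit the boundary set on which $f$ and $g$ agree (at real $z$ their equality is exactly the statement to be proved, not an input), nor the uniqueness theorem invoked; in \cite{Borchers92,Florig98} the matching happens on a full edge of the strip via the commutation identity, and the conclusion comes from a reflection/three-lines/Liouville argument, which you explicitly defer to the reference. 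Deferring that step is what the paper does too, but then your text is a citation with a sketch attached, not a proof; as an independent argument it has a genuine gap at its central step, and the derivation of $J_{\M,\Omega}U(t)J_{\M,\Omega}=U(-t)$ is likewise only gestured at.
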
 
The first equality shows that $\Delta_{\M,\Omega}^{is}$ and $U(t)$ provide
a positive energy representation of $\mathrm{\mathbf{P}}$.
We note that the group $\mathrm{\mathbf{P}}$ can be also generated by the $\Lambda_{(0,\infty)}$ and $\Lambda_{(1,\infty)}$,
namely dilations based on $0$ and $1$, respectively.

Let $\N\subset \M\subset\B(\H)$ be an inclusion of von Neumann algebra
with a common cyclic and separating vector $\Omega\in\H$, then the inclusion is said
to be a {\bf half-sided modular inclusion ($\pm$-HSMI)} if
\[
 \Ad \Delta^{-it}_\M (\N)\subset \N,\qquad \pm t \geq 0. 
\]

The following is a fundamental result on HSMIs \cite{Wiesbrock93-1, AZ05}.
\begin{lemma}[Wiesbrock, Araki-Zsido]\label{lm:hsmi}
 If $(\N\subset \M, \Omega)$ is a $+$- (respectively $-$-)HSMI,
 then the modular groups $\Delta_\M^{it}, \Delta_\N^{is}$ satisfy the same commutation relations
 as those of $\Lambda_{\RR_-}$ and $\Lambda_{\RR_- - 1}$ (respectively those of $\Lambda_{\RR_+}$ and $\Lambda_{\RR_+ + 1}$).
\end{lemma}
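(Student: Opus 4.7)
My plan is to reduce the lemma to Borchers' theorem (Theorem \ref{thm:bor}) by producing, from the HSMI data, a one-parameter unitary group $U(t) = e^{itH}$ with (i) $\sp H \subset \RR_+$, (ii) $U(t)\Omega = \Omega$, (iii) $\Ad U(t)(\M) \subset \M$ for $t \geq 0$, and (iv) $\Ad U(1)(\M) = \N$. Granted such a $U$, Theorem \ref{thm:bor} applied to $(\M, U)$ gives $\Delta_\M^{is}U(t)\Delta_\M^{-is} = U(e^{-2\pi s}t)$, while (iv) combined with $U(1)\Omega = \Omega$ yields $\Delta_\N^{is} = U(1)\Delta_\M^{is}U(1)^*$ by the canonical nature of modular objects. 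Eliminating $U$ between these two relations realises the pair $(\Delta_\M^{it}, \Delta_\N^{is})$ as a representation of the affine group $\mathbf{P}$. On the other hand, $\Lambda_{\RR_-}(t)$ and $\Lambda_{\RR_- - 1}(s)$ also generate a copy of $\mathbf{P}$ in $\Mob$, since they both fix the point at infinity, and an elementary computation of $\Lambda_{\RR_-}(t)\Lambda_{\RR_- - 1}(s)\Lambda_{\RR_-}(-t)$ shows the two commutation relations agree. The $-$-HSMI case is obtained by reversing signs and replacing $(\RR_-, \RR_- - 1)$ by $(\RR_+, \RR_+ + 1)$.

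Following Wiesbrock's construction, the candidate $U$ is built out of $\Delta_\M^{it/(2\pi)}$ and $\Delta_\N^{it/(2\pi)}$, heuristically with generator $H = (\log\Delta_\N - \log\Delta_\M)/(2\pi)$. Property (ii) is automatic because $\Omega$ is fixed by both modular operators; property (iii) is a consequence of the HSMI hypothesis $\Ad\Delta_\M^{-it}(\N) \subset \N$ for $t \geq 0$ combined with Takesaki's theorem on modular covariance of subalgebras; and property (iv) follows, once $U$ is known to be a genuine one-parameter group, from a direct modular identity identifying the translation by one unit with the map that carries $\M$ onto $\N$.

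The main obstacle -- and the heart of Wiesbrock's theorem, later clarified by Araki--Zsido -- is to prove simultaneously that the family $U(t)$ satisfies the group law $U(t+s) = U(t)U(s)$ (so that $H$ is well-defined) and that $\sp H \subset \RR_+$. The standard route is to consider, for $x\in\M$ and $y\in\N$, the function $t \mapsto \langle y^*\Omega,\, U(t)x\Omega\rangle$ and, using the KMS properties of $\Omega$ with respect to both $\Delta_\M$ and $\Delta_\N$ together with the inclusion $\N\subset\M$, analytically continue it to a bounded holomorphic function on the strip $\RR + i(0, 2\pi)$. A density argument followed by the Paley--Wiener/Hille--Yosida characterisation of generators of positive contraction semigroups then yields both claims at once. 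With (i)--(iv) thus established, the conclusion follows as outlined above.
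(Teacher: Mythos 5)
First, a framing remark: the paper does not prove this lemma at all; it is quoted from Wiesbrock and Araki--Zsid\'o, and the entire nontrivial content is the structure theorem for half-sided modular inclusions. Your reduction step is standard and correct in outline: \emph{granted} a unitary one-parameter group $U$ with $U(t)\Omega=\Omega$, $\sp H\subset\RR_+$, $\Ad U(t)(\M)\subset\M$ for $t\ge0$ and $\Ad U(1)(\M)=\N$, Theorem \ref{thm:bor} together with $\Delta_{\N}^{is}=U(1)\Delta_{\M}^{is}U(1)^*$ does yield affine-group relations for the two modular groups. But the existence of such a $U$ \emph{is} Wiesbrock's theorem, and your paragraph sketching it is not a proof. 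As written it is circular: you propose to establish the group law by analytically continuing $t\mapsto\langle y^*\Omega, U(t)x\Omega\rangle$, but $U(t)$ has no definition before the group law is proved; one must start from a concrete candidate built from the modular data (for instance the unitaries $\Delta_\M^{it}\Delta_\N^{-it}$ at suitably reparametrized times) and show that these assemble into a one-parameter group with positive generator. Moreover, ``bounded continuation to a strip plus Paley--Wiener/Hille--Yosida yields both claims at once'' is exactly the kind of shortcut that failed in Wiesbrock's original article; the complete argument of Araki--Zsid\'o is a long, delicate multi-variable analytic-continuation analysis. So either you cite the structure theorem (as the paper in effect does, in which case the commutation relations are already part of the cited statement and most of your proposal is redundant), or you must actually carry out that analysis, which the sketch does not do. The side remarks that (iii) follows ``from Takesaki's theorem'' and (iv) ``from a direct modular identity'' likewise defer precisely the points that need proof.

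Second, the sign bookkeeping in your final identification is wrong, and for this lemma the signs are the whole point. With your package ($\sp H\subset\RR_+$, $\Ad U(t)(\M)\subset\M$ for $t\ge0$, $\Ad U(1)(\M)=\N$), Borchers gives $\Delta_\M^{is}U(t)\Delta_\M^{-is}=U(e^{-2\pi s}t)$, so the dictionary is $\Delta_\M^{is}\leftrightarrow\Lambda_{\RR_+}(2\pi s)$, $U(t)\leftrightarrow\tau_t$, and hence $\Delta_\N^{is}=U(1)\Delta_\M^{is}U(1)^*\leftrightarrow\Lambda_{\RR_++1}(2\pi s)$: your hypotheses force the relations of $(\Lambda_{\RR_+},\Lambda_{\RR_++1})$, not those of $(\Lambda_{\RR_-},\Lambda_{\RR_--1})$. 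These two sets of relations are genuinely different: no M\"obius transformation carries the ordered pair $(\RR_-,\RR_--1)$ (subinterval attached to the point at infinity) onto $(\RR_+,\RR_++1)$; on generators the relations read $[X,Y]=Y-X$ versus $[X,Y]=X-Y$, corresponding to the two distinct displayed formulas in the paper (Equation \eqref{eq:commutation} and the one preceding it). Your claim that ``an elementary computation shows the two commutation relations agree'' is therefore false, and attaching the correct one of the two inequivalent relations to each sign of HSMI is exactly what the lemma is used for downstream (Lemma \ref{lem:pxp} and Theorem \ref{th:mob} invoke the specific formula \eqref{eq:commutation}). In fairness, the paper's own definition of $\pm$-HSMI appears to carry a sign slip relative to its later use, so the conventions are treacherous; but the inconsistency in your argument is internal: whatever convention one adopts, a positive-generator $U$ translating $\M$ into itself for $t\ge0$ with $\Ad U(1)(\M)=\N$ produces the $(\Lambda_{\RR_+},\Lambda_{\RR_++1})$ relations, so the conclusion you state for the $+$-case does not follow from the package you construct.
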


The following is a slight variation of \cite[Lemma 1.1]{GLW98}, see also \cite[Theorem 6]{Wiesbrock98}.
\begin{lemma}\label{lm:group}
Let $\Upsilon$ be the universal group algebraically generated by 3 one-parameter subgroups
$t_k\mapsto \Lambda_k(t_k), k \in \ZZ_3$, such that $\Lambda_k$ and $\Lambda_{k+1}$ satisfy
the same commutation relation as $\Lambda_{I_k}, \Lambda_{I_{k+1}}$,
where $I_1 = (-\infty, 1), I_2 = (0,1), I_3 = (1, \infty)$ for $t_k$ in an open neighborhood of the origin.
Then $\Upsilon$ can be made a topological group and
there is a continuous isomorphism between $\Upsilon$ and $\uMob$ which intertwines $\Lambda_{I_k}$ and $\Lambda_k$.
\end{lemma}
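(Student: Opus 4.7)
The plan is to construct a surjective homomorphism $\pi\colon\Upsilon\to\uMob$, transport a chart at the identity from $\uMob$ to $\Upsilon$, and then conclude by the simple connectedness of $\uMob$. The universal property of $\Upsilon$ immediately yields $\pi$: since the subgroups $\Lambda_{I_k}\subset\uMob$ satisfy the commutation relations defining $\Upsilon$, there is a unique group homomorphism with $\pi(\Lambda_k(t))=\Lambda_{I_k}(t)$, and it is surjective because the three $\Lambda_{I_k}$ together generate $\uMob$ (as already observed in the paragraph preceding the statement).

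The key step is to set up a local normal form. Consider the product map
\[
\psi\colon (t_1,t_2,t_3)\longmapsto \Lambda_1(t_1)\Lambda_2(t_2)\Lambda_3(t_3)\in\Upsilon,
\]
defined for $(t_1,t_2,t_3)$ in a small neighborhood $U$ of the origin in $\RR^3$. Iterating the pairwise commutation relations (of which \eqref{eq:commutation} is an instance) one rewrites any sufficiently short word in the $\Lambda_k$'s into the canonical form $\Lambda_1(s_1)\Lambda_2(s_2)\Lambda_3(s_3)$, by pushing occurrences of $\Lambda_1$ to the left past $\Lambda_2$ and occurrences of $\Lambda_3$ to the right past $\Lambda_2$. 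This shows that $\psi(U)$ algebraically covers a neighborhood of the identity in $\Upsilon$. Injectivity of $\psi$ on $U$ is then inherited from $\uMob$: the composition $\pi\circ\psi$ is the corresponding product map in $\uMob$, which is a local diffeomorphism at the origin because the Lie-algebra generators of $\Lambda_{I_1},\Lambda_{I_2},\Lambda_{I_3}$ are linearly independent; hence if $\psi(t)=\psi(t')$ then $t=t'$.

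I then declare $\psi(U)$ to be an open chart at the identity and extend the topology to all of $\Upsilon$ by left translation. Overlap compatibility and continuity of the group law follow because the rewriting prescriptions are smooth functions of the parameters, verified explicitly inside each two-dimensional subgroup generated by a consecutive pair $(\Lambda_k,\Lambda_{k+1})$, which by the commutation relation alone is isomorphic to $\mathbf{P}$. Thus $\Upsilon$ becomes a connected topological group and $\pi$ is by construction a local homeomorphism near the identity, i.e.\ a covering homomorphism onto the connected group $\uMob$. Since $\uMob$ is simply connected, $\pi$ must be an isomorphism of topological groups, intertwining the one-parameter subgroups as required.

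The principal obstacle is the third step, namely verifying that the candidate chart $\psi$ is well defined and that the reductions to normal form are coherent, i.e.\ independent of the sequence of moves used. This coherence is forced by the existence of $\pi$: any two reductions of the same word agree in $\uMob$, where the product parameterization is injective, so they must already agree in $\Upsilon$. In other words, the same commutation rules that make $\uMob$ a Lie group with the specified decomposition also suffice to endow $\Upsilon$ with the identical local structure, so no additional relations can arise and no existing ones can be lost.
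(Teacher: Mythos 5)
Your argument is correct and follows essentially the same route as the paper: a homomorphism $\pi\colon\Upsilon\to\uMob$ supplied by the universal property, a local identification of $\Upsilon$ with $\uMob$ near the identity via the ordered product $\Lambda_1(t_1)\Lambda_2(t_2)\Lambda_3(t_3)$, and the simple connectedness of $\uMob$ to upgrade the resulting covering homomorphism to an isomorphism. The only real difference is presentational: the paper delegates the local normal form (your $\psi$, its $\Phi$ and $\Phi_{\uMob}$) to \cite[Lemma 1.1]{GLW98} and topologizes $\Upsilon$ by pulling back open sets along the quotient map, whereas you rederive the normal form from the pairwise commutation relations and topologize by left-translating the chart; once local injectivity is in hand these give the same topology and the same covering argument.
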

\begin{proof}
This is essentially covered by \cite[Lemma 1.1]{GLW98}
by noting that the group generated by $\Lambda_{I_k}$ is isomorphic to $ \uMob$.
Indeed, $\Lambda_{\RR_-}=\Lambda_{I_3}^{-1}$, $\Lambda_{(1,\infty)}=\Lambda_{I_1}^{-1}$
and since $(-\infty,0)$, $(0,1)$, $(1,\infty)$ is a factorization of $S^1$,
hence they satisfy the commutation relations of the corresponding intervals.

Let us just make the topology on $\Upsilon$ more explicit.
By the group structure, there is a quotient of $\Upsilon$ which is algebraically isomorphic to $\uMob$.
Let $p$ the quotient map.
We declare that any inverse image of an open set by $p$ is an open set of $\Upsilon$,
and that the topology of $\Upsilon$ is generated by them.
It is easy to check that they form a neighborhood basis, and with this topology,
$\Upsilon$ is a Hausdorff, path-connected (because they are generated by one-parameter groups), locally simply connected space.
It is also immediate that $\Upsilon$ is a topological group,
by taking small neighborhoods of elements in $\Upsilon$ corresponding to
those in $\uMob$.

As $\Upsilon$ is connected and $p$ is a covering map,
$\Upsilon$ is continuously isomorphic $\uMob$ by universality of $\uMob$,
see e.g.\! \cite[Theorem 63]{Pontryagin46}. 
\end{proof}

As pointed out in \cite[Lemma 1.1]{GLW98},
$\Upsilon$ has a natural structure as a Lie group:
there exists an open neighborhood of the origin $\N\subset\RR^3$ and the sets $\U$ and $\V$
containing the identity in $\Upsilon$ and $\uMob$ respectively, such that the maps
\[
 \Phi:\N\ni(t_1,t_2,t_3)\mapsto \Lambda_1(t_1)\Lambda_2(t_2)\Lambda_3(t_3)\in\U
\]
and
\[
 \Phi_\uMob:\N\ni(t_1,t_2,t_3)\mapsto \Lambda_{I_1}(t_1)\Lambda_{I_2}(t_2)\Lambda_{I_3}(t_3)\in\V
\]
are 1-1 and surjective. 
Note that the map $\Phi_\uMob$ is a diffeomorphism respecting
the Lie structure, and through $p$, we can also introduce a manifold structure on $\Upsilon$.
The set $\{g\Phi(\N)\}_{g\in \Upsilon}$ provides an atlas for $\Upsilon$.

\subsection{Haag-Kastler nets}\label{haag-kastler}
Let $\K$ be the set of all the double cones in the Minkowski spacetime $\RR^{1+1}$.
A two-dimensional {\bf Haag-Kastler net} $(\A, U, \Omega)$ is
a net of von Neumann algebras $\{\A(D)\}_{D\in \K}$ in $\B(\H)$ on a fixed Hilbert space $\H$,
together with a strongly continuous unitary representation $U$ of the Poincar\'e group $\P_+^\uparrow$ and
the vacuum vector $\Omega$
satisfying the following assumptions (see e.g.\! \cite[Section 2.1]{Tanimoto12-2})
\begin{enumerate}[{(HK}1{)}]
 \item \textbf{Isotony: }if $D_1\subset D_2$, then $\A(D_1)\subset\A(D_2)$. \label{isotony}
 \item \textbf{Locality:} if $D_1$ and $D_2$ are spacelike separated, then $\A(D_1)\subset\A(D_2)'$. \label{locality}
 \item\textbf{Poincar\'e covariance:} \label{poincare} it holds that
 \[
  U(g)\A(D)U(g)^*=\A(gD), \qquad \text{ for } g\in\P_+^\uparrow,\, D\in K.
 \]

 \item\textbf{Positivity of the energy:}
 the joint spectrum of the translation subgroup in $U$ is contained in the closed forward light cone $\overline {V_+}=\{(a_0,a_1)\in\RR^{1+1}: a_0^2-a_1^2\geq0, a_0\geq0\}$. \label{positiveenergy}

 \item \textbf{Vacuum and the Reeh-Schlieder property: }there exists a unique (up to a phase)
 vector $\Omega \in\H$ such that $U(g)\Omega=\Omega$ for $g\in \poincare$ and is cyclic for any local algebra, namely $\overline{\A(D)\Omega}=\H$. \label{vacuum}
 
 \item\textbf{The Bisognano-Wichmann property:} \label{bw} Let $\Lambda_{W_\L}$
 be the boost one-parameter group associated with the wedge $W_\L$
 (see Section \ref{2dminkowski}), and $\A(W_\L)=\left(\bigvee_{D\subset W_\L} \A(W_\L)\right)'' $ then
 \[
  U(\Lambda_{W_\L}(2\pi t))=\Delta_{\A(W_\L),\Omega}^{it}.
 \]
 \setcounter{axioms}{\value{enumi}}
 \end{enumerate}
We included the Reeh-Schlieder property already in the axioms, because
it follows from weak additivity which is traditionally included in the axioms.
The Bisognano-Wichmann property is not automatic in general (see e.g.\! \cite{Morinelli18}),
but is a consequence of M\"obius covariance (see below) \cite{BGL93}, and it is a natural necessary condition
for M\"obius covariance.

Let $(\A,U,\Omega)$ be a net satisfying (HK\ref{isotony})--(HK\ref{bw}).
We can also define algebras for more general open regions $X$ by $\A(X)=(\bigvee_{D\subset X}\A(D))''$ such as,
for instance, wedges, forward and backward lightcones and half-bands.

A Poincar\'e covariant net $(\A,U,\Omega)$ is said to be {\bf M\"obius covariant}
if the representation $U$ extends to the two-dimensional M\"obius group $G$ (see Section \ref{2dminkowski})
which acts covariantly on the extension of the net $\A$ to the cylinder $\cyl$.
In such a case we shall say that $(\A,U,\Omega)$ is a M\"obius covariant
net.

In order to deduce M\"obius covariance, we further introduce the following conditions.
\begin{enumerate}[{(HK}1{)}]
\setcounter{enumi}{\value{axioms}} 
\item \textbf{Dilation covariance:}\label{dilation} we assume that $U$ extends to representation of the group
of Poincar\'e transformations $\poincare$ and the dilation group $\Lambda_{V_+}$,
which still acts covariantly on the net, namely 
$U(g)\A(D)U(g)^*=\A(gD)$ for all $g$ in the Poincar\'e-dilation group.
 \item \label{rsv} \textbf{Reeh-Schlieder property for $V_+$:} $\Omega$ is cyclic and separating for $\A(V_+)$.
\item \textbf{Bisognano-Wichmann property for dilations:}\label{bwcone}
  \[
   U(\Lambda_{V_+}(2\pi t))=\Delta_{\A(V_+),\Omega}^{it}.
  \]
\item\label{hsmip} One of the following conditions holds.
\begin{enumerate}[{(}a{)}]
 \item\label{hsmib} \textbf{Modular covariance\footnote{This does not require that the actions
 $\Ad \Delta_{B_\L}^{it}$ and $\Ad U(\delta(-2\pi t)\times \Lambda_{(0,1)}(2\pi t))$ should coincide,
 indeed we see that $\Delta_{B_\L}^{it} = U(\delta(-2\pi t)\times\Lambda_{(0,1)}(2\pi t))$,
 where $U$ is an extension to $G$.}:}
 \begin{align*}
 \Ad \Delta_{B_\L}^{it}(\A(D_0)) &= \Ad U(\delta(-2\pi t)\times \iota)(\A(D_0)) \\
 &\left(= \Ad U(\Lambda_{\RR_+}(2\pi t)\times \iota)(\A(D_0))\right), 
 \end{align*}
 especially, $\A(D_0) \subset \A(B_\L)$ is a $+$-HSMI.
 \item\label{sa} \textbf{$\cyl$-strong additivity:} 
Let $a<b<c$ and $d>0$ in $\RR$ and $B_{L,(a,c)}$ and $B_{L,(a,b)}+(d,0)$ two half-band with a common edge
(see Figure \ref{fig:sa}), then
 \[
  \A(D_{(0,d)(b,c)})=\A(B_{L,(a,c)})\cap \A(B_{L,(a,b)}+(d,0))'.
 \]
\end{enumerate}
\end{enumerate}

\begin{figure}[ht]\centering
\begin{tikzpicture}[scale=0.75]
         \draw [thick, ->] (-5,0) --(3,0) node [above left] {$a_1$};
         \draw [thick, ->] (0,-3)--(0,5) node [below right] {$a_0$};
         \draw [ultra thick] (0,0)-- (-3.5,3.5);
         \draw [ultra thick,dotted] (-3.5,3.5)--(-4,4) node [above left] {$B_{\L, (a,1)}$};
         \draw [ultra thick] (1,1)-- (-2.5,4.5);
         \node at(1.2,0.5) {$D_0$};
         \draw [ultra thick] (-2.5,-0.5)-- (0,2);
         \draw [ultra thick,dotted] (-2.5,4.5)--(-3,5);
         \draw [ultra thick] (-1.5,-1.5)-- (-5,2);
         \draw [ultra thick,dotted] (-5,2)--(-5.5,2.5) node [above] {$B_{\L,(a,0)} + (1,0)$};
         \draw [ultra thick,dotted] (2.5,2.5)--(3,3);
         \draw [ultra thick] (2.5,2.5) -- (-2.5,-2.5);
         \draw [ultra thick,dotted] (-2.5,-2.5)--(-3,-3);
         \draw [ultra thick,dotted] (-0.5,-0.5)--(-1,-1);
              \fill [color=black,opacity=0.2]
               (1,1)--(-2.5,4.5)--(-3.5,3.5)--(-1,1)--(-2.5,-0.5)--(-1.5,-1.5)--(0,0);
               \fill [color=black,opacity=0.3]
                (-1,1)--(-3.5,3.5)--(-5,2)--(-2.5,-0.5)--(-1,1);
               \fill [color=black,opacity=0.6]
                (0,0)--(1,1)--(0,2)--(-1,1)--(0,0);
\end{tikzpicture}
\caption{Double cone $D_0$ as the relative causal complement of the shifted half-band
$B_{\L,(a,0)} + (1,0)$ in $B_{\L, (a,1)}$ with $a<0$.}
\label{fig:sa}
\end{figure}
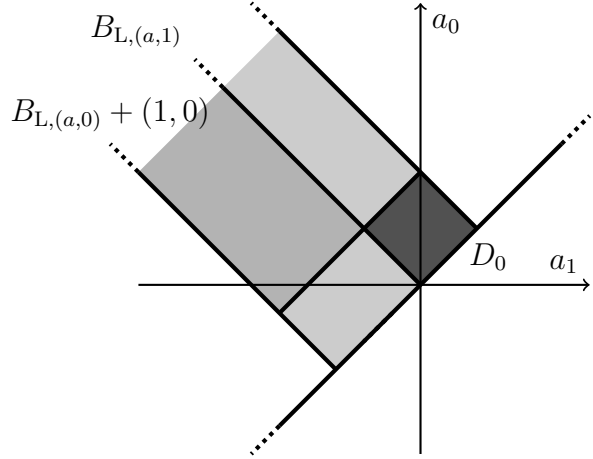

The condition (HK\ref{hsmib}) is concerned with the modular groups of half-bands,
and might look too strong.
On the other hand, (HK\ref{sa}) can be considered as a variation of strong additivity:
indeed, let the net be M\"obius covariant and extend to $\cyl$.
We say the net is strongly additive if $\A(D) = \A(D_1) \vee \A(D_2)$,
where $D, D_1, D_2$ are double cones such that $D_1$ and $D_2$ share one boundary point,
are spacelike to each other and the causal completion of their union is $D$.
It is immediate to see that this condition is equivalent to Haag duality on $M$
(see for the latter e.g.\! \cite[Section 2.1 a)]{KL04-2}, \cite[Proposition 5.2, a)]{CLTW12-2}).
Furthermore, if $\A$ is M\"obius covariant, then Haag duality on $\cyl$ is automatic \cite[Theorem 2.3(i)]{BGL93}.
Therefore, under M\"obius covariance, Haag duality on $M$ is equivalent to (HK\ref{sa}).
For this reason we call (HK\ref{sa}) a (variant of) strong additivity, although $\A$ does not {\it a priori}
extend to $\cyl$.

Differently from (HK\ref{hsmib}), (HK\ref{sa}) does not refer to modular groups which are
without {\it a priori} M\"obius covariance difficult to determine, hence is more transparent for a sufficient condition
for the implication ``dilation + $\a$ $\Longrightarrow$ M\"obius''.

In Proposition \ref{pr:ba} below we show that (HK\ref{sa}) implies (HK\ref{hsmib}).
Only (HK\ref{isotony})--(HK\ref{hsmib}) are needed for the proof of conformal covariance in Theorem \ref{th:mob}.

\section{Proof of M\"obius covariance}\label{proofmc}

\begin{lemma}\label{lem:pxp}
Let $(\A,U,\Omega)$ be a von Neumann algebra net satisfying conditions (HK\ref{isotony})--(HK\ref{bwcone}).
Then $U$ extends to a representation $U_\R$ of the group $\mathrm{\mathbf{P}}\times \uMob$
(and analogously $\mathrm{\uMob\times\mathbf{P}}$)
and local covariance holds for left half-band algebras:
$\Ad U_\R(g)(\A(B_{\L,(a,b)})) = \A(g\cdot B_{\L,(a,b)})$
as long as $g$ is in a neighborhood of the unit element of $\mathrm{\mathbf{P}}\times \uMob$
whose action does not take $B_{\L,(a,b)}$ out of the Minkowski space $M$. 
\end{lemma}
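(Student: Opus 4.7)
The plan is to use the half-band $B_\L=\RR_+\times(0,1)$ to extract, via modular theory, the missing one-parameter subgroup that enlarges the $\mathbf{P}$-action on the second lightray factor to a full $\uMob$-action. Writing $\Delta_{V_+}$ and $\Delta_{B_\L}$ for the modular operators of $\A(V_+)$ and $\A(B_\L)$ with respect to $\Omega$, I would first verify that $(\A(B_\L)\subset\A(V_+),\Omega)$ is a $-$-HSMI: $\Omega$ is cyclic for $\A(B_\L)$ by Reeh-Schlieder (HK\ref{vacuum}) and separating since $\A(B_\L)\subset\A(V_+)$ and HK\ref{rsv} gives $\Omega$ separating for $\A(V_+)$; HK\ref{bwcone} identifies $\Delta_{V_+}^{-it}=U(\Lambda_{V_+}(-2\pi t))$ as the two-dimensional dilation by $e^{2\pi t}$, sending $B_\L$ to $\RR_+\times(0,e^{2\pi t})\subset B_\L$ precisely when $t\le 0$. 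Lemma~\ref{lm:hsmi} then gives that $\Delta_{V_+}^{it},\Delta_{B_\L}^{is}$ satisfy the commutation relations of $\Lambda_{\RR_+},\Lambda_{\RR_++1}$.

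Next I would apply Borchers' theorem (Theorem~\ref{thm:bor}) to $\A(V_+)$ and the lightlike translation $T_+(s)$ along the second lightray direction: positivity of energy (HK\ref{positiveenergy}) and Poincar\'e covariance (HK\ref{poincare}) give that $T_+(s)$ has positive generator and $T_+(s)\A(V_+)T_+(-s)\subset\A(V_+)$ for $s\ge 0$, yielding a second $\mathbf{P}$-commutation, now between $\Delta_{V_+}^{it}$ and $T_+(s)$. The three one-parameter unitary groups $T_+,\Delta_{V_+}^{i\cdot},\Delta_{B_\L}^{i\cdot}$ then supply the data required by Lemma~\ref{lm:group}: once the three pairwise commutation relations are in place, they generate a representation of $\uMob$ acting on the second lightray.

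The main technical obstacle is the third commutation, between $T_+$ and $\Delta_{B_\L}$, which is not directly delivered by either HSMI or Borchers. I would establish it by introducing the auxiliary $+$-HSMI $(\A(V_+ +(0,1))\subset\A(V_+),\Omega)$, itself a consequence of HK\ref{bwcone} and translation-invariance of $\Omega$, together with $\Delta_{V_+ +(0,1)}^{it}=T_+(1)\Delta_{V_+}^{it}T_+(-1)$ from covariance of modular data, and Equation~\eqref{eq:commutation}, to express $T_+$ as an appropriate product of modular unitaries and thereby reduce the third commutation to the first two. With all three commutations in hand I would define $U_\R$ by combining the new $\uMob$-action on the second lightray with the pre-existing $\mathbf{P}$-action on the first lightray extracted from $\Lambda_{V_+}$ and $\Lambda_{W_\L}$ (using the Bisognano-Wichmann properties HK\ref{bw} and HK\ref{bwcone}), and verify local covariance $\Ad U_\R(g)(\A(B_{\L,(a,b)}))=\A(g\cdot B_{\L,(a,b)})$ by tracking each generator's action on $B_{\L,(a,b)}=\RR_+\times(a,b)$ near the identity. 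The symmetric statement for $\uMob\times\mathbf{P}$ follows by replacing $B_\L$ with $B_\R=(0,1)\times\RR_+$ throughout.
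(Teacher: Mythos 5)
Your overall strategy---extracting the missing hyperbolic one-parameter group on the second lightray from the modular data of $\A(V_+)$ and $\A(B_\L)$ via half-sided modular inclusions, Borchers' theorem and Lemma~\ref{lm:group}---is the right one and is close to the paper's, but two steps do not go through as stated. First, the triple $T_+(s)=U(\iota\times\tau(s))$, $\Delta_{V_+}^{it}$, $\Delta_{B_\L}^{it}$ cannot be fed into Lemma~\ref{lm:group}: that lemma requires three \emph{dilation-type} subgroups $\Lambda_{I_1},\Lambda_{I_2},\Lambda_{I_3}$ attached to a three-interval factorization of the circle, each adjacent pair generating a two-dimensional subgroup with product-form commutation relations. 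The lightlike translation is parabolic, and, worse, $\tau$ and $\Lambda_{(0,1)}$ have no common fixed point in $\overline\RR$, so the connected subgroup they generate is all of $\Mob$; consequently there is no two-dimensional ``commutation relation'' between $T_+$ and $\Delta_{B_\L}$ for you to establish---the ``main technical obstacle'' you isolate is not a hard identity but an ill-posed one. The paper circumvents this by taking a third \emph{hyperbolic} generator, the modular group of $\A(W_\L+(0,1))$, whose shadow interval $\RR_-+1=(-\infty,1)$ together with $(0,1)$ and $\RR_+$ gives exactly the factorization Lemma~\ref{lm:group} needs; the two nontrivial pairs of relations then come from the HSMIs $\A(B_\L)\subset\A(W_\L+(0,1))$ and $\A(B_\L)\subset\A(V_+)$, while the third pair is automatic because $\Lambda_{\RR_+}$ and $\Lambda_{\RR_-+1}$ already lie in $U$ restricted to $\iota\times\mathrm{\mathbf{P}}$.

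Second, even granting a representation of $\uMob$ generated by your three unitaries, it would not be a representation of $\iota\times\uMob$: by (HK\ref{bwcone}) one has $\Delta_{V_+}^{it}=U(\delta(-2\pi t)\times\delta(-2\pi t))$, and $\Delta_{B_\L}^{it}$ likewise moves the first lightray (a posteriori it equals $U(\delta(-2\pi t)\times\Lambda_{(0,1)}(2\pi t))$). The essential device in the paper, absent from your proposal, is to correct each modular unitary by the compensating first-lightray dilation, replacing $\Delta_k^{it}$ by $\Delta_k^{it}U(\delta(2\pi t)\times\iota)$; Lemma~\ref{lem:fix} guarantees that $U(\delta(s)\times\iota)$ commutes with all three modular groups (it preserves each of the three regions and $\Omega$), so the corrected unitaries satisfy the same commutation relations while acting trivially on the first lightray. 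Only after this correction does the resulting copy of $\uMob$ commute with $U(g\times\iota)$ for $g\in\mathrm{\mathbf{P}}$---a fact your proposal uses implicitly when ``combining'' the two factors but never proves, and which fails for the uncorrected generators (and even after the correction requires a separate argument via Theorem~\ref{thm:bor} for the translations). The remaining items you sketch (the HSMI verification for $\A(B_\L)\subset\A(V_+)$, and covariance for general half-bands via a decomposition $g=g_1g_2g_3$ with the middle factor a $(0,1)$-dilation preserving $\A(B_\L)$) are essentially as in the paper.
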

\begin{proof}
We shall denote $\M_1=\A(W_\L + (0,1))$, $\M_2= \A(B_\L)$, $\M_3= \A(V_+) $ and $\Delta_k$, $k=1,2,3$
the associated modular operators w.r.t the vacuum vector $\Omega$.
By (HK\ref{bw}) and (HK\ref{bwcone}), we have that
$ \M_2=\A(B_\L) \subset \A(V_+) = \M_3$ and $\M_2 = \A(B_\L) \subset \A(W_\L + (0,1)) = \M_1$
are $+$- and $-$-HSMI, respectively (see Figure \ref{fig:wedge-cone}).
Therefore, by Lemma \ref{lm:hsmi}, their modular groups $\Delta_2^{it_1}, \Delta_3^{it_3}$
($\Delta_1^{it_1}, \Delta_2^{it_2}$ respectively) satisfy the commutation relations
of $\Lambda_{(0,1)}$ and $\Lambda_{\RR_+}$ (respectively those of $\Lambda_{\RR_-+1}$ and $\Lambda_{(0,1)}$).
Let $\delta(\cdot)\times\iota$ be dilations along the line $a_0+a_1=0$.
They are included in the Poincar\'e-dilation group: indeed,
$\delta(t)\times\iota=\Lambda_{V_+}(-\frac t 2)\Lambda_{W_\L}(-\frac t 2)$, and hence we have
$U(\delta(t)\times\iota)=U(\Lambda_{V_+}(-\frac t 2))U(\Lambda_{W_\L}(-\frac t 2))$.
By Lemma \ref{lem:fix}, all the above modular groups $\Delta_k^{it}$ commute with $U(\delta(t)\times\iota)$, since 
the latter preserves each algebra and the vacuum.
Note also that
\begin{align}\label{eq:commutation2}
 &U(\delta(t)\times\iota)U(\delta(s)\times\iota) \nonumber\\
 &= 
 U\left(\delta\left(\ln({e^{t+s}+1-e^{t}})\right)\times\iota\right)
 U\left(\delta\left(\ln\left(\frac{e^{t+s}}{{e^{t+s}+1-e^{t}}}\right)\right)\times\iota\right)
\end{align}
as $U(\delta(\cdot)\times\iota)$ is a one-parameter group. In particular,
they satisfy the commutation relations of $\Lambda_{(0,1)}$ and $\Lambda_{\RR_+}$.
Similarly, it also satisfies the commutation relations of $\Lambda_{(0,1)}$ and $\Lambda_{\RR_-+1}$.
Therefore, by straightforward computations, we have the following:
\begin{itemize}
 \item $\Delta_2^{it_2}U(\delta(2\pi t_2)\times\iota)$ and $\Delta_3^{it_3}U(\delta(2\pi t_3)\times\iota) = U(\iota\times\delta(-2\pi t_3))$
 satisfy the commutation relations of $\Lambda_{(0,1)}$ and $\Lambda_{\RR_+}$.
 \item $\Delta_2^{it_2}U(\delta(2\pi t_2)\times\iota)$ and $\Delta_1^{it_1}U(\delta(2\pi t_1)\times\iota) = U(\iota\times\Lambda_{\RR_-+1}(2\pi t_1))$
 satisfy the commutation relations of  $\Lambda_{(0,1)}$ and $\Lambda_{\RR_-+1}$.
\end{itemize}

\begin{figure}[ht]\centering
\begin{tikzpicture}[scale=0.75]
         \draw [thick, ->] (-5,0) --(3,0) node [above left] {$a_1$};
         \draw [thick, ->] (0,-3)--(0,5) node [below right] {$a_0$} node [below left] {$V_+$};
         \draw [ultra thick] (0,0)-- (-3.5,3.5);
         \node at(-3.6,4.5) {$B_{\L}$};
         \node at(-4,1.5) {$W_\L$};
         \draw [ultra thick,dotted] (-3.5,3.5)--(-4,4);
         \draw [ultra thick] (1,1)-- (-2.5,4.5);
         \draw [ultra thick,dotted] (-2.5,4.5)--(-3,5);
         \draw [ultra thick,dotted] (3.5,3.5)--(4,4);
         \draw [ultra thick] (3.5,3.5) -- (-3.5,-3.5);
         \draw [ultra thick,dotted] (-3.5,-3.5)--(-4,-4);
         \draw [ultra thick,dotted] (-0.5,-0.5)--(-1,-1);
               \fill [color=black,opacity=0.3]
                 (0,0)--(3.5,3.5)--(-3.5,3.5)--(0,0);
               \fill [color=black,opacity=0.6]
                (1,1)--(-2.5,4.5)--(-3.5,3.5)--(0,0);
               \fill [color=black,opacity=0.4]
                (0,0)--(-3.5,3.5)--(-3.5,-3.5)--(1,1);
\end{tikzpicture}
\caption{Regions $V_+, B_\L, W_\L + (1,1)$ and their shadows $\RR_+, (0,1), \RR_- +1$ on the line $a_0=a_1$.}
\label{fig:wedge-cone}
\end{figure}
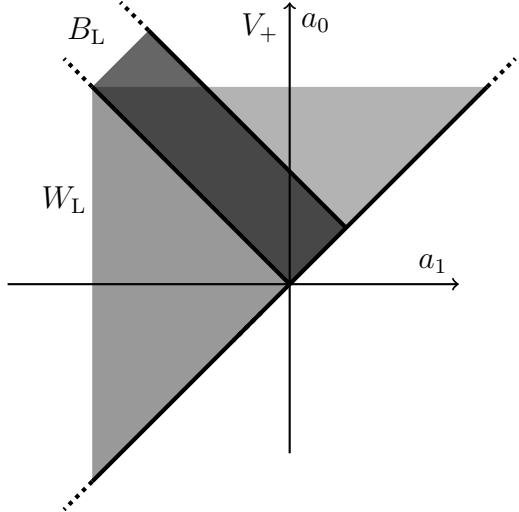

On the other hand, it is immediate that
also $U(\iota\times\delta(-2\pi t_3)) = U(\iota\times\Lambda_{\RR_+}(2\pi t_3))$ and $U(\iota\times\Lambda_{\RR_-+1}(2\pi t_1))$
satisfy the commutation relations of $\Lambda_{\RR_+}$ and $\Lambda_{\RR_-+1}$.
Therefore, by Lemma \ref{lm:group},
we obtain a strongly continuous representation $U_\R$ of $\uMob$ which coincides with $U(\iota\times\cdot\,)$
when restricted to $\iota\times\mathrm{\mathbf{P}}$ by construction.

We claim that this representation commutes with $U(g\times\iota), g\in \mathrm{\mathbf{P}}$.
We only have to show that $\Delta_2^{it}U(\delta(2\pi t)\times\iota)$
commutes with $U(g\times\iota)$.
By Lemma \ref{lem:fix}, $\Delta_2^{it}$ and hence $\Delta_2^{it}U(\delta(2\pi t)\times\iota)$
commute with
$U(\delta(s)\times\iota)$ since $\Ad U(\delta(2\pi t)\times\iota)(\A(B_\L)) = \A(B_\L)$
and $U(\delta(s)\times\iota)\Omega = \Omega$.
Furthermore, as $\Ad U(\tau(a)\times\iota)(\A(B_\L)) \subset \A(B_\L)$ for $a \ge 0$
and $U(\tau(a)\times\iota)$ preserves $\Omega$,
by Theorem \ref{thm:bor} 
we have that $\Ad \Delta_2^{it}(U(\tau(a)\times\iota)) = U(\tau(e^{-2\pi t}a)\times\iota)$.
Moreover it also holds that $\Ad U(\delta(2\pi t_2)\times\iota)(U(\tau(a)\times\iota))  = U(\tau(e^{2\pi t}a)\times\iota)$
as it is a representation of $\textbf{P}$. It follows that $\Delta_2^{it}U(\delta(2\pi t)\times\iota)$
commutes with $U(\tau(a)\times\iota)$.

Altogether, $U(g_1\times \iota)$ and $U_\R(\iota\times g_2)$ commute for
$g_1 \in \mathrm{\mathbf{P}}, g_2 \in \uMob$.
We define a representation $U_\R$ of $\mathrm{\mathbf{P}} \times \uMob$
by $U_R(g_1\times g_2) := U(g_1)U_\R(g_2)$.

Now we prove local covariance of $\{\A(B_{\L,(a,b)})\}$ with respect to $U_\R$.
It is enough to check it for $\iota\times g\in \iota\times\uMob$, because covariance for elements in $\mathrm{\mathbf{P}}\times\iota$
is the assumption (HK\ref{poincare}) and (HK\ref{dilation}).
We view that $\uMob$ acts on the universal covering $\widetilde{S^1}$ of $S^1$
(which is homeomorphic to $\RR$, see Figure \ref{fig:circlecover}).
Let us denote by $I_{(a,b)}$ the interval in $\widetilde{S^1}$ corresponding to $(a,b)$ in $\RR$.
We also identify $\RR$ with an interval $\widetilde{S^1}$ and denote it by $I_\RR$.
{\it In this paragraph, we consider $\A$ as a net defined on
open regions in $I_\RR \times I_\RR$}.
Let $I \Subset I_\RR$ be a bounded interval. Any $g \in \uMob$ such that $gI \Subset I_\RR$ can be written
as a product of three elements $g=g_1 g_2 g_3$, such that $g_1,g_3\in \textbf{P}$, $g_3 I=I_{(0,1)}$, $g_1\cdot I_{(0,1)} = gI$
and $g_2=\Lambda_{(0,1)}(t_2)$ for some $t_2\in\RR$. 
For such $g \in \uMob$, note that
$U_\R(\iota\times g_1)$ and $U_\R(\iota\times g_3)$ acts geometrically by (HK\ref{poincare}) and (HK\ref{dilation}),
and $U_\R(\iota\times g_2) = \Delta_2^{it_2}U(\delta(2\pi t_2)\times\iota)$
preserves $\A(B_\L) = \A(I_{\RR_+}\times g_3 I) = \A(I_{\RR_+}\times I_{(0,1)})$,
therefore, we have
\begin{align*}
 \Ad U_\R(\iota\times g)(\A(I_{\RR_+}\times I)) &=  \Ad U_\R(\iota\times g_1g_2g_3)(\A(I_{\RR_+}\times I)) \\
 &=  \Ad U_\R(\iota\times g_1g_2)(\A(I_{\RR_+}\times g_3 I)) \\
 &=  \Ad U_\R(\iota\times g_1)(\A(I_{\RR_+}\times g_3 I)) \\
 &= \A(I_{\RR_+}\times gI),
 \end{align*}
which is the desired local covariance.
\begin{figure}[ht]\centering
\begin{tikzpicture}[path fading=north,scale=0.75]
         \draw [thick] (-6,0) --(6,0);
         \draw [thick,dotted] (-6,0) --(-7,0);
         \draw [thick,dotted] (6,0) --(7,0);
         \node at(0,1) {$\RR$};
         \draw [ultra thick] (-3,0) node{$($}--(-0.5,0)node{{\footnotesize$($}}--(0.5,0)node[below]{$(a,b)$}--(1.5,0)node{{\small$)$}}--(3,0)node{$)$};
\end{tikzpicture}
\caption{The universal covering of $S^1$, and $\RR$ as an interval on it.}
\label{fig:circlecover}
\end{figure}
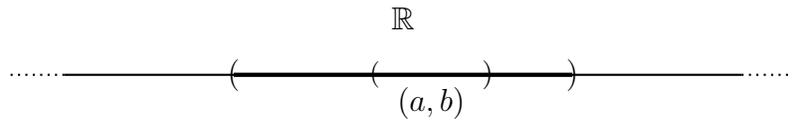
At this point, we can prove {\bf continuity from below} (c.f.\! \cite[(24)]{FJ96}) for left half-bands, namely,
$\A(\RR_+\times I) = \bigvee_{I_\a \Subset I}\A(\RR_+\times I_\a)$.
Indeed, for each $I_\a$ we can find $g_\a \in \uMob$ such that $g_\a I =  I_\a$
and $g_\a \to \i$ as $I_\a$ tends to $I$.
Now by continuity of $U_\R$ in the strong operator topology,
$\bigvee_{I_\a \Subset I}\A(I_{\RR_+}\times I_\a) = \bigvee_\a \Ad U_\R(\iota\times g_\a) \A(I_{\RR_+}\times I)
\supset \A(I_{\RR_+}\times I)$,
and the converse inclusion is trivial.

Now we bring back the original notations and $I, I_\a$ are intervals in $\RR$.
As for wedges, we have by definition that $\A(W_\L) = \bigvee_{I \Subset \RR_-} \A(\RR_+\times I)$.
Therefore, for $g$ which takes any interval of $\RR_-$ into a compact interval in $\RR$, we have
$\Ad U_\R(\iota\times g) (\A(W_{L})) = \bigvee_{I \Subset \RR_-} \A(\RR_+\times gI) = \A(\RR_+\times g \RR_-)$,
where the last equality follows from continuity from below.
In other words, covariance holds also for $W_\L$.
By taking $g = g_1g_2$ where $g_2 \in \mathrm{\mathbf{P}}$,
covariance holds for any $W_\L + (0, a_\R)$, $a_\R\in\RR$.
Finally, as $U_\R(\tau(a_\L)\times\iota)$ commutes with $U_\R(\iota\times g)$, the above covariance holds for any left wedge.

\end{proof}

\begin{proposition}\label{pr:ba} Let $(\A,U,\Omega)$ be a von Neumann algebra net satisfying conditions
(HK\ref{isotony})--(HK\ref{bwcone})(HK\ref{sa}). Then it also satisfies (HK\ref{hsmib}).
\end{proposition}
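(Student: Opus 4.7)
The plan is to extract the geometric meaning of $\Delta_{B_\L}^{it}$ from the construction carried out in the proof of Lemma~\ref{lem:pxp} and then to transport covariance from half-band algebras, where Lemma~\ref{lem:pxp} already supplies it, to the double cone algebra $\A(D_0)$ by means of the strong-additivity formula (HK\ref{sa}).

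The proof of Lemma~\ref{lem:pxp} defines $U_\R(\iota\times\Lambda_{(0,1)}(2\pi t))$ precisely as $\Delta_{B_\L}^{it}\,U(\delta(2\pi t)\times\iota)$, so
\[
\Delta_{B_\L}^{it} \;=\; U_\R(\iota\times\Lambda_{(0,1)}(2\pi t))\,U(\delta(-2\pi t)\times\iota).
\]
Writing $D_0^t := (0,e^{-2\pi t})\times(0,1)$, the identity in (HK\ref{hsmib}) then becomes equivalent to
\[
\Ad U_\R(\iota\times\Lambda_{(0,1)}(2\pi t))\,\A(D_0^t) \;=\; \A(D_0^t),
\]
which is a covariance statement on the second lightray only. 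I would prove this for each fixed $t\in\RR$.

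For such a $t$, choose $a<0$ sufficiently close to $0$ that $a' := \Lambda_{(0,1)}(2\pi t)(a)$ still lies in $(-\infty,0)$ and that the pole of the M\"obius map $\Lambda_{(0,1)}(2\pi t)$ does not belong to $(a,1)$; both conditions hold for $a$ near $0^-$ because $\Lambda_{(0,1)}$ fixes $0$ with positive derivative. Applying (HK\ref{sa}) with parameters $(a,0,1,e^{-2\pi t})$ gives
\[
\A(D_0^t) \;=\; \A(\RR_+\times(a,1))\,\cap\,\A\bigl((\RR_+\times(a,0))+(e^{-2\pi t},0)\bigr)'.
\]
By Lemma~\ref{lem:pxp}, $U_\R(\iota\times\Lambda_{(0,1)}(2\pi t))$ takes $\A(\RR_+\times(a,1))$ to $\A(\RR_+\times(a',1))$. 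For the second factor, I would rewrite the shifted half-band as a first-lightray translate of $\RR_+\times(a,0)$, use that $U_\R(\iota\times\,\cdot\,)$ commutes with first-lightray translation unitaries (both being mutually commuting factors of the product representation $U_\R$ of $\mathrm{\mathbf{P}}\times\uMob$), and apply Lemma~\ref{lem:pxp} to $\A(\RR_+\times(a,0))$; the image is $\A((\RR_+\times(a',0))+(e^{-2\pi t},0))$. Intersecting and invoking (HK\ref{sa}) a second time, with $a'$ in place of $a$, returns $\A(D_0^t)$, which is the required equality.

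The main subtlety is the range of validity of Lemma~\ref{lem:pxp}: because $\Lambda_{(0,1)}(2\pi t)$ is a genuine M\"obius transformation on the second lightray with a pole for every $t\neq 0$, one cannot hope to fix $a$ uniformly in $t$. Since the final identity of algebras does not depend on $a$, the $t$-dependent choice is nonetheless harmless. The half-sided modular inclusion assertion in (HK\ref{hsmib}) then follows from the proved equality together with the evident inclusion $\delta(-2\pi t)(0,1)\subset(0,1)$ for the relevant sign of $t$.
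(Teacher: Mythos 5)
Your proposal is correct and follows essentially the same route as the paper's own proof: write the double cone algebra as a relative commutant of half-band algebras via (HK\ref{sa}), transport both half-bands with the covariance of Lemma~\ref{lem:pxp} (using that $U_\R(\iota\times\,\cdot\,)$ commutes with the first-lightray part of $U$), and apply (HK\ref{sa}) a second time to recover the same algebra, then translate this invariance into the modular statement of (HK\ref{hsmib}). The only difference is cosmetic: the paper keeps $D_0$ fixed and proves $\Ad U_\R(\iota\times\Lambda_{(0,1)}(t))(\A(D_0))=\A(D_0)$ for $t>0$, where $\Lambda_{(0,1)}(t)$ maps $\RR_-$ into itself so no pole issue arises, and obtains $t\le 0$ by taking inverses, whereas you work with the dilated cone $D_0^t$ and a $t$-dependent choice of $a$ near $0^-$.
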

\begin{proof}
By (HK\ref{sa}), it holds that $\A(D_0) = \A(\RR_+\times (a_\L, 1))\cap \A((\RR_+ +1)\times (a_\L, 0))'$
for any ${a_\L} < 0$. For $t > 0$, $\Lambda_{(0,1)}(t)$ takes any interval in $\RR_-$ in $\RR_-$ itself,
and say, $\Lambda_{(0,1)}(t)\cdot a_\L = b_\L < 0$, and $\Lambda_{(0,1)}(t) \cdot 1 = 1, \Lambda_{(0,1)}(t) \cdot 0 = 0$.
By covariance of Lemma \ref{lem:pxp}, we have
\[
 \Ad U_\R(\iota\times\Lambda_{(0,1)}(t))(\A(D_0)) = \A(\RR_+\times(b_\L, 1))\cap \A((\RR_+ + 1)\times (b_\L, 0))'
= \A(D_0),
\]
where the last equality holds for any negative number $b_\L$ by (HK\ref{sa}).
Now we can invert $t$ and the equality holds also for $t \le 0$.
Recalling the definition $U_\R(\iota\times\Lambda_{(0,1)}(t)) = \Delta_{B_\L}^{it}U(\delta(2\pi t)\times\iota)$,
and that $\Ad U_\R(\iota\times\Lambda_{(0,1)}(t))$ preserves $\A(D_0)$,
we conclude that $\Ad U(\delta(-2\pi t)\times\iota)(\A(D_0)) = \Ad \Delta_{B_\L}^{it}(\A(D_0))$.
\end{proof}

\begin{theorem}\label{th:mob}
Let $(\A,U,\Omega)$ be a von Neumann algebra net satisfying conditions (HK\ref{isotony})--(HK\ref{hsmib}).
Then $U$ extends to the two-dimensional M\"obius group $G$ and with this extension
$(\A, U, \Omega)$ is a M\"obius covariant net.
\end{theorem}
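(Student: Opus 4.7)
The plan is to construct the desired representation of $G = (\uMob\times\uMob)/\langle(\rho_{-2\pi},\rho_{2\pi})\rangle$ by gluing two one-sided extensions. Applying Lemma \ref{lem:pxp} in both orientations gives a representation $U_\R$ of $\mathbf{P}\times\uMob$ with local covariance on left half-bands and, dually, a representation $U_\L$ of $\uMob\times\mathbf{P}$ with local covariance on right half-bands; both are strongly continuous, extend $U$, and agree on $\mathbf{P}\times\mathbf{P}$. The candidate two-dimensional M\"obius representation is then
\[
 \tilde U(g_1,g_2) := U_\L(g_1\times\iota)\,U_\R(\iota\times g_2),
\]
which defines a homomorphism of $\uMob\times\uMob$ provided $U_\R(\iota\times g)$ and $U_\L(g'\times\iota)$ commute for all $g,g'\in\uMob$.

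From Lemma \ref{lem:pxp} itself, $U_\R(\iota\times\uMob)$ already commutes with $U(\mathbf{P}\times\iota)$ and, by the symmetric statement, $U_\L(\uMob\times\iota)$ commutes with $U(\iota\times\mathbf{P})$, so the only nontrivial residual commutator is between the ``extra'' one-parameter groups $U_\R(\iota\times\Lambda_{(0,1)}(t)) = \Delta_{B_\L}^{it}U(\delta(2\pi t)\times\iota)$ and $U_\L(\Lambda_{(0,1)}(s)\times\iota) = \Delta_{B_\R}^{is}U(\iota\times\delta(2\pi s))$. This is exactly where hypothesis (HK\ref{hsmib}) enters: it says precisely that $U_\R(\iota\times\Lambda_{(0,1)}(t))$ preserves $\A(D_0)$ (and the vacuum), so by Lemma \ref{lem:fix} it commutes with $\Delta_{D_0}^{iu}$ and $J_{D_0}$. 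A parallel argument---deriving the right half-band analog of (HK\ref{hsmib}) from the established $U_\R$-covariance of double cones, via the intersection decomposition provided by (HK\ref{sa}), or equivalently by repeating the proof of Proposition \ref{pr:ba} with left and right exchanged---yields that $U_\L(\Lambda_{(0,1)}(s)\times\iota)$ likewise preserves $\A(D_0)$ and also commutes with $\Delta_{D_0}^{iu}$ and $J_{D_0}$. From this common modular framework around $\A(D_0)$, the commutation of the two extra one-parameter groups can be extracted.

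Once $\tilde U$ is established as a bona fide representation of $\uMob\times\uMob$, descent to $G$ amounts to checking that the central element $(\rho_{-2\pi},\rho_{2\pi})$ acts as the identity, which is the two-dimensional spin-statistics theorem proved in the appendix. Full covariance $\tilde U(g)\A(D)\tilde U(g)^* = \A(gD)$ for every double cone $D$ and every $g\in G$ then follows by combining the half-band covariance of Lemma \ref{lem:pxp} with the intersection decomposition of $\A(D)$ from (HK\ref{sa}), and extending from a neighborhood of the identity to all of $G$ by continuity and the group law. The main obstacle is the middle paragraph: (HK\ref{hsmib}) is stated asymmetrically---only for $B_\L$---whereas the 2D M\"obius structure requires matching right-half-band behavior. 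Bridging this asymmetry through the shared modular data of $\A(D_0)$ is the technical heart of the proof.
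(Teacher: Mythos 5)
Your overall architecture (define $U_\R$ and $U_\L$ via Lemma \ref{lem:pxp} in both orientations, prove the two extra one-parameter groups $U_\R(\iota\times\Lambda_{(0,1)}(t))=\Delta_{B_\L}^{it}U(\delta(2\pi t)\times\iota)$ and $U_\L(\Lambda_{(0,1)}(s)\times\iota)=\Delta_{B_\R}^{is}U(\iota\times\delta(2\pi s))$ commute, then descend to $G$ by the spin-statistics theorem) matches the paper. But the middle step, which you yourself identify as the heart of the proof, has a genuine gap, in fact two. First, Theorem \ref{th:mob} assumes only (HK\ref{isotony})--(HK\ref{hsmib}); the strong additivity condition (HK\ref{sa}) is \emph{not} among its hypotheses (it enters only in Proposition \ref{pr:ba} and Corollary \ref{cr:mob}). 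So your proposed bridge of the left--right asymmetry---``deriving the right half-band analog of (HK\ref{hsmib}) \ldots via the intersection decomposition provided by (HK\ref{sa}), or equivalently by repeating the proof of Proposition \ref{pr:ba} with left and right exchanged''---uses an assumption that is simply not available (and note that even (HK\ref{sa}) as stated concerns only left half-bands, so its left--right exchanged version would be yet another hypothesis). The right analog of (HK\ref{hsmib}), i.e.\ geometric behaviour of $\Delta_{B_\R}^{it}$ on $\A(D_0)$, is in effect a \emph{consequence} of the theorem, not something you may invoke to prove it; as written your argument is circular at exactly the point where the asymmetry must be overcome. Your later appeal to (HK\ref{sa}) for covariance of double cone algebras has the same problem, and is also unnecessary: covariance follows from the decomposition $g=(g_{\L,1}\Lambda_{(0,1)}(t))\times(g_{\R,1}\Lambda_{(0,1)}(s))$ with $g_{\L,1},g_{\R,1}\in\mathbf{P}$, once one knows that the $\Lambda_{(0,1)}$-parts preserve $\A(D_0)$.

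Second, even granting that both one-parameter groups preserve $\A(D_0)$ and the vacuum, hence commute with $\Delta_{D_0}^{iu}$ and $J_{D_0}$ by Lemma \ref{lem:fix}, the assertion that ``from this common modular framework \ldots the commutation of the two extra one-parameter groups can be extracted'' is not a proof: two unitary groups each commuting with the same modular data need not commute with each other. The paper's actual mechanism is different and more delicate: using the triple $+$-HSMI $\A(D_0)\subset\A(B_\L)\subset\A(V_+)$, the commutation relations of Lemma \ref{lm:hsmi}, and an explicit computation, one first shows that $U(\iota\times\delta(2\pi t))$ commutes with $\Delta_{D_0}^{is}U_\R(\iota\times\Lambda_{(0,1)}(-2\pi s))$; then, applying Theorem \ref{thm:bor} and covariance to $\Ad U(\iota\times\delta(2\pi s))$ and taking the limit $s\to\infty$, one identifies $\Delta_{D_0}^{it}U_\R(\iota\times\Lambda_{(0,1)}(-2\pi t))$ with $\Delta_{B_\R}^{it}U(\iota\times\Lambda_{\RR_+}(-2\pi t))$, which yields the factorization $\Delta_{D_0}^{it}=U_\R(\iota\times\Lambda_{(0,1)}(2\pi t))\,U_\L(\Lambda_{(0,1)}(2\pi t)\times\iota)$ and, with it, the desired commutation---all from the asymmetric hypothesis (HK\ref{hsmib}) alone. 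Without an argument of this kind (or some substitute for the limit/Borchers step), your proposal does not close.
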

\begin{proof}
Let us similarly define $U_\L$
(the only assumption which is not symmetric between left and right is (HK\ref{hsmib}),
but we have not used it for Lemma \ref{lem:pxp}).
Namely, $U_\L(\Lambda_{(0,1)}(2\pi s)\times\iota) = \Delta_{B_\R}^{is}U(\iota\times\delta(2\pi s))$.
We are going to prove that $U_\R(\iota\times\Lambda_{(0,1)}(2\pi t))$ and $U_\L(\Lambda_{(0,1)}(2\pi s)\times\iota)$ commute,
and hence we will have a representation of $\uMob\times\uMob$.

First, we show that $U(\iota\times\delta(2\pi t))$ commutes with $\Delta_{D_0}^{is}U_\R(\iota\times\Lambda_{(0,1)}(-2\pi s))$.
Note that
\begin{itemize}
\item From (HK\ref{hsmib}) and Lemma \ref{lem:fix},
$U_\R(\iota\times\Lambda_{(0,1)}(2\pi s))$ commutes with the modular group $\Delta_{D_0}^{it}$ of $\A(D_0)$.
\item By (HK\ref{bwcone}) and (HK\ref{hsmib}) 
respectively,
namely $\Delta_{V_+}^{it} = U(\delta(-2\pi t)\times\delta(-2\pi t))$
and $\Delta_{B_\L}^{is} = U_\R(\delta(-2\pi s)\times\Lambda_{(0,1)}(2\pi s))$
(see Lemma \ref{lem:pxp}),
the following three are all $+$-HSMI (see Figure \ref{fig:triple-hsmi}):
\[
\A(D_0) \subset \A(V_+), \quad
\A(D_0) \subset A(B_\L), \quad
\A(B_\L)\subset \A(V_+).
\]
\item The same relation as \eqref{eq:commutation2} holds for any one-parameter group.
\end{itemize}
Therefore, by putting $s_1 =  \ln({e^{t+s}+1-e^t}), t_1= \ln\left(\frac{e^{t+s}}{{e^{t+s}+1-e^t}}\right)$
for small $t$ we have
\begin{align*}
 &U(\iota\times\delta(-2\pi t))\Delta_{D_0}^{is}U_\R(\iota\times\Lambda_{(0,1)}(-2\pi s))U(\iota\times\delta(2\pi t)) \\
 &= U(\delta(2\pi t)\times\iota)\Delta_{V_+}^{it}\Delta_{D_0}^{is}\Delta_{B_L}^{-is}U(\delta(-2\pi s)\times\iota)\Delta_{V_+}^{-it}U(\delta(-2\pi t)\times\iota) &\text{(def.\! of } U_\R, \text{ (HK\ref{bwcone}))}\\
 &= U(\delta(2\pi t)\times\iota)\, \Delta_{V_+}^{it}\Delta_{D_0}^{is}\Delta_{B_\L}^{-is}\Delta_{V_+}^{-it}\, U(\delta(-2\pi s)\times\iota)U(\delta(-2\pi t)\times\iota) &\text{(reordering)}\\
 &= U(\delta(2\pi t)\times\iota)\, \Delta_{D_0}^{is_1}\Delta_{V_+}^{it_1}\Delta_{V_+}^{-it_1}\Delta_{B_\L}^{-is_1}\, U(\delta(-2\pi s)\times\iota)U(\delta(-2\pi t)\times\iota) &\text{(by Eq.\! \eqref{eq:commutation})}\\
 &= U(\delta(2\pi t)\times\iota)\, \Delta_{D_0}^{is_1}\Delta_{B_\L}^{it_1}\Delta_{B_\L}^{-it_1}\Delta_{B_\L}^{-is_1}\, U(\delta(-2\pi s)\times\iota)U(\delta(-2\pi t)\times\iota) &\text{(cancelling factors)} \\
 &= U(\delta(2\pi t)\times\iota)\, \Delta_{B_\L}^{it}\Delta_{D_0}^{is}\Delta_{B_\L}^{-is}\Delta_{B_\L}^{-it}\, U(\delta(-2\pi s)\times\iota)U(\delta(-2\pi t)\times\iota) &\text{(by Eqs.\! \eqref{eq:commutation}\eqref{eq:commutation2})}\\
 &= U_\R(\iota\times\Lambda_{(0,1)}(2\pi t))\, \Delta_{D_0}^{is}\, U_\R(\iota\times\Lambda_{(0,1)}(2\pi (-s-t))) &(\text{def.\! of } U_\R)\\
 &= \Delta_{D_0}^{is}\, U_\R(\iota\times\Lambda_{(0,1)}(-2\pi s)). &([U_\R(\Lambda_2(t)), \Delta_{D_0}^{is}] = 0)
\end{align*}
But if this is valid for small $t$, it is valid also for any $t$ by iteration.

We claim that $\Delta_{D_0}^{it} = U_\R(\iota\times\Lambda_{(0,1)}(2\pi t)) U_\L(\Lambda_{(0,1)}(2\pi t)\times\iota)$.
One one hand, we know from the previous paragraph that
\[
 \Ad U(\iota\times\delta(2\pi s))(\Delta_{D_0}^{it}U_\R(\iota\times\Lambda_{(0,1)}(-2\pi t)))
 = \Delta_{D_0}^{it}U_+(\iota\times\Lambda_{(0,1)}(-2\pi t)).
\]
On the other hand,
by Theorem \ref{thm:bor} and Proposition \ref{lem:pxp}, we have
\[
 \Ad U(\iota\times\delta(2\pi s))(\Delta_{D_0}^{it}U_\R(\iota\times\Lambda_{(0,1)}(-2\pi t)))
 = \Delta_{\iota\times\delta(2\pi s)\cdot D_0}^{it}U_\R(\iota\times\Lambda_{(0,s)}(-2\pi t)).
\]
Combining these two equalities and the limit $s\to\infty$,
we obtain $\Delta_{D_0}^{it}U_\R(\iota\times\Lambda_{(0,1)}(-2\pi t)) = \Delta_{B_R}^{it}U(\iota\times\Lambda_{\RR_+}(-2\pi t))$.
Recall that $\Delta_{B_R}^{it} = U(\iota\times\delta(-2\pi t))U_\L(\Lambda_{(0,1)}(t)\times\iota)$ by definition,
and we have $U(\iota\times\delta(-2\pi t)) = U(\iota\times\Lambda_{\RR_+}(2\pi t))$,
hence, $\Delta_{D_0}^{it} = U_\R(\iota\times\Lambda_{(0,1)}(2\pi t)) U_\L(\Lambda_{(0,1)}(2\pi t)\times\iota)$. 
In particular, $U_\R(\iota\times\Lambda_{(0,1)}(t))$ and $U_\L(\Lambda_{(0,1)}(s)\times\iota)$ commute
and we obtain a representation $U$
of $\uMob \times \uMob$ by $U(g_\L\times g_\R) = U_L(g_\L\times\iota)U_\R(\iota\times g_\R)$.
\begin{figure}[ht]\centering
\begin{tikzpicture}[path fading=north,scale=0.75]
         \draw [thick, ->] (-4,0) --(4,0) node [above left] {$a_1$};
         \draw [thick, ->] (0,-1)--(0,5) node [below right] {$a_0$};
         \draw [ultra thick] (0,0)-- (-3.5,3.5);
         \node at(-2,3) {$B_\L$};
         \draw [ultra thick,dotted] (-3.5,3.5)--(-4,4);
         \draw [ultra thick] (1,1) -- (-2.5,4.5);
         \node at(1.2,0.5) {$D_0$};
         \draw [ultra thick] (-1,1)-- (0,2);
         \node at(1,3.5) {$V_+$};
         \draw [ultra thick,dotted] (-2.5,4.5)--(-3,5);
         \draw [ultra thick,dotted] (3.5,3.5)--(4,4);
         \draw [ultra thick] (3.5,3.5)-- (-0.5,-0.5);
         \draw [ultra thick,dotted] (-0.5,-0.5)--(-1,-1);
              \fill [color=black,opacity=0.2]
               (0,0)--(1,1)--(0,2)--(-1,1)--(0,0);
              \fill [color=black,opacity=0.2]
               (0,0)--(1,1)--(-3,5)--(-4,4)--(0,0);
              \fill [color=black,opacity=0.3]
               (0,0)--(1,1)--(4,4)--(-4,4)--(0,0);
\end{tikzpicture}
\caption{Triple HSMI, $D_0\subset B_\L, B_\L\subset V_+, D_0\subset V_+$.}
\label{fig:triple-hsmi}
\end{figure}
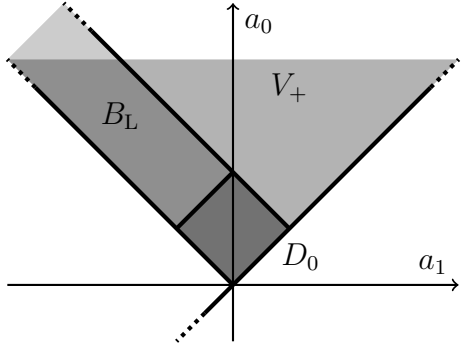

Now we prove local covariance of double cone algebras under $\uMob\times\uMob$ action.
By dilation and translation covariance, it is enough to consider $D_0$.
Let $I_\L$ and $I_\R$ be intervals on the line $\RR$ such that $D_0=I_\L\times I_\R$.
Recall that in the last step we proved $\Delta_{D_0}^{it} = U_+(\Lambda_{(0,1)}(-2\pi t))U_-(\Lambda_{(0,1)}(-2\pi t))$.
Now let us take an element of the form $g\times \iota \in \uMob\times \uMob$,
such that 
$g_\L=g_{\L,1}g_{\L,2}, g_\R=g_{\R,1}g_{\R,2}$, where
$g_{\L,1}, g_{\R,1} \in \mathrm{\mathbf{P}}, g_{\L,1}=\Lambda_{(0,1)}(t), g_{\R,1}=\Lambda_{(0,1)}(s)$ for some $t,s\in\RR$.
For such element, $\Ad U(g_\L\times g_\R)(\A(D_0)) = \Ad U(g_{\L,1}\times g_{\R,1})U(g_{\L,2}\times g_{\R,2})(\A(D_0))
= \Ad U(g_{\L,1}\times g_{\R,1})(\A(D_0)) = \A((g_{\L,1}\times g_{\R,1})\cdot D_0)$,
because $U(g_{\L,2}\times g_{\R,2}) = U(\Lambda_{(0,1)}(t)\times \Lambda_{(0,1)}(s))$ preserves $\A(D_0)$
and covariance for $g_{\L,1}\times g_{\R,1} \in \mathrm{\mathbf{P}}\times \mathrm{\mathbf{P}}$ holds by assumptions.
As any element $g \in \uMob\times\uMob$ which does not take $D_0$ outside the Minkowski space $M$
can be written as above, this establishes local covariance with respect to $\uMob\times\uMob$.

From here, by the conformal spin-statistic theorem (Theorem \ref{th:ss}), $U$ factors through $G$ and
we conclude that $(\A,U,\Omega)$ is a M\"obius covariant net.
\end{proof}

\begin{corollary}\label{cr:mob}
Let $(\A, U, \Omega)$ be a conformal net satisfying (HK\ref{isotony})--(HK\ref{bwcone}) and (HK\ref{sa}).
Then $U$ extends to the two-dimensional M\"obius group $G$ and with this extension
$(\A, U, \Omega)$ is a M\"obius covariant net.
\end{corollary}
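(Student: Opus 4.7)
The plan is essentially to observe that this corollary is an immediate combination of the two main technical results already proved in the section, namely Proposition \ref{pr:ba} and Theorem \ref{th:mob}. There is no new content, only a reshuffling of hypotheses: the corollary replaces the assumption (HK\ref{hsmib}) in Theorem \ref{th:mob} with the more concrete axiom (HK\ref{sa}), which is the situation one most often cares about in applications since (HK\ref{sa}) does not refer to modular groups of half-bands.

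First, I would invoke Proposition \ref{pr:ba}: under the hypotheses (HK\ref{isotony})--(HK\ref{bwcone}), the $\cyl$-strong additivity axiom (HK\ref{sa}) implies the modular covariance condition (HK\ref{hsmib}). Thus the net $(\A,U,\Omega)$ automatically satisfies all of (HK\ref{isotony})--(HK\ref{hsmib}).

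With the full list of axioms (HK\ref{isotony})--(HK\ref{hsmib}) in hand, I would then directly apply Theorem \ref{th:mob}, which yields that $U$ extends to a representation of the two-dimensional M\"obius group $G$ and that with this extension $(\A,U,\Omega)$ is a M\"obius covariant net, as required.

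There is no real obstacle: the whole work has been carried out in Proposition \ref{pr:ba} (showing that strong additivity produces the half-sided modular inclusion structure on $\A(D_0)\subset\A(B_\L)$ through the covariance already established in Lemma \ref{lem:pxp}) and in Theorem \ref{th:mob} (producing the commuting $\uMob\times\uMob$ representation and factoring through $G$ via the spin-statistics theorem). The only thing to remark, to keep the statement clean, is that the word \emph{conformal} in the corollary should be read in the weak sense of a Haag-Kastler net satisfying the listed axioms, since M\"obius covariance is precisely the conclusion and is not assumed.
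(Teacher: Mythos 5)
Your proposal matches the paper's own proof exactly: Corollary \ref{cr:mob} is obtained there by applying Proposition \ref{pr:ba} to deduce (HK\ref{hsmib}) from (HK\ref{sa}) and then invoking Theorem \ref{th:mob}. Nothing is missing, and your remark on reading ``conformal'' as a Haag-Kastler net with the listed axioms is consistent with the paper's usage.
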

\begin{proof}
 Immediate from Proposition \ref{pr:ba} and Theorem \ref{th:mob}.
\end{proof}

\section{Counterexamples}\label{counterexample}

In this section we discuss several Haag-Kastler nets which are covariant with respect to the Poincar\'e-dilation group
but cannot be extended to a M\"obius covariant net.
Constructions and results of these sections can be easily adapted to $(3+1)$-dimensions.
One should substitute the representation of $\uMob\times\uMob$ in Section \ref{nochiral} by ``massive'' representations of the conformal group
in the sense of \cite{Mack77}, and note that any double cone $O$ can be
obtained as the intersection of countably many wedges.

\subsection{Breaking the Bisognano-Wichmann property for future light cone}\label{internal}

The Bisognano-Wichmann property for boosts holds in any M\"obius covariant net on $(1+1)$-dimensional the Minkowski spacetime \cite[Theorem 2.3]{BGL93}.
Furthermore, since wedges are mapped to lightcones by M\"obius transformations,
the Bisognano-Wichmann property for $V_+$ holds in a M\"obius covariant net
(corresponding results hold for any conformally covariant net on higher-dimensional Minkowski space).

Let $(\A, U, \Omega)$ be a M\"obius covariant net which admits a one-parameter inner symmetry,
namely, there is a one-parameter unitary group $\{V(s)\}$ such that $\Ad V(s)\A(O) = \A(O)$
and $[U(g), V(s)] = 0$ for $g \in G$ and $s \in \RR$ and $V(s)\Omega = \Omega$.
In this situation, we can construct a new representation of the Poincar\'e-dilation group by setting
$U_V(g) = U(g)$ for $g \in \poincare$ and $U_V(\delta(t)\times\delta(t)) = U(\delta(t)\times\delta(t))V(t)$.
It is easy to check that $U_V$ is still a representation of the Poincar\'e-dilation group,
$\A$ is covariant under $U_V$ and $\Omega$ is invariant under $U_V$.
While the original net $(\A, U, \Omega)$ satisfies the Bisognano-Wichmann property,
the new net $(\A, U_V, \Omega)$ violates it since the algebra $\A(V_+)$ and the vacuum $\Omega$
stay the same while $U_V(\delta(t)\times\delta(t))$ has been modified.
Therefore, this $U_V$ cannot be extended to $G$ in such a way that $\A$ is still covariant,
because that would contradicts the Bisognano-Wichmann property for $V_+$ \cite[Theorem 2.3]{BGL93}.

It is easy to find such M\"obius covariant nets.
For example, if $(\A_0, U_0, \Omega_0)$ is a M\"obius covariant net on $S^1$ with a one-parameter inner symmetry $V_0(s)$,
one can just take the tensor product $\A(I_\L\times I_\R) := \A_0(I_\L)\otimes \A_0(I_\R),
U(g_\L\times g_\R) := U_0(g_\L)\otimes U_0(g_\R), \Omega := \Omega_0\otimes \Omega_0$,
and $V(s) := V_0(s)\otimes V_0(s)$.
As concrete examples, one can take a loop group net $\A_{G,k}$ with a compact group $G$ at level $k$,
and as $V_0$ one can just take any one-parameter group in $G$ which acts as inner symmetry \cite[Section III]{GF93}.
One can also consider the tensor product of two copies of the $\mathrm{U}(1)$-current net as a chiral component
(see \cite[Section 5]{Tanimoto14-1} for the construction of inner symmetry on the complex massive free field, and \cite[Section 5.2]{BT15} for restricting it to a lightray to obtain a net on $S^1$).

As for the converse, we do not know whether it is always possible to satisfy (HK\ref{bwcone})
by modifying $U$ of a given Poincar\'e-dilation covariant net with (HK\ref{rsv}),
see the discussions in Section \ref{oa}.
The fact that it is possible to modify the representation of the Poincar\'e-dilation group is 
known in the physics literature \cite[Section 2.2, below (2.6)]{Nakayama15}.
Finally, we remark that it is also easy in $(1+1)$-dimension to violate the Bisognano-Wichmann property for wedges \cite[Section 5]{Tanimoto14-1}.

\subsection{BGL construction and generalized free fields}\label{generalized}
The simplest two-dimensional quantum field theory is the massive free field.
It cannot be dilation covariant because it has an isolated mass shell.
Yet, if one glue together continuously many massive free fields, the mass spectrum becomes continuous
and dilation may act on it. This idea can be indeed realized as generalized free field.
Here we take the construction based on the one-particle representation of the Poincar\'e group:
to a positive energy (anti-)unitary representation of the Poincar\'e group with the CPT transformation,
Brunetti, Guido and Longo associated a net of real subspaces on wedges,
and its second quantized net \cite{BGL02}.

When the given representation of the Poincar\'e group extends to M\"obius group $G = (\uMob\times\uMob) / \ZZ_2$,
there are two choices for double cones: either one defines the real subspaces for double cones by covariance \`a la BGL,
or by duality for wedges.
We take a representation of $G$ such that
the former construction yields a M\"obius covariant net without chiral components,
while the latter contains a generalized free field and fails to have both M\"obius covariance and the split property.

Generalized free fields can have have dilation covariance but fail to be M\"obius covariant, see e.g.\! \cite[Section 3.1]{DR09}.
In this case, a generalized free field  has scaling dimension which is
not consistent with the unitarity condition of the M\"obius group (or the conformal group in the case of $(3+1)$-dimensions),
and consequently, the field cannot be M\"obius covariant. 
This is different from our counterexamples. We consider first a M\"obius covariant net
(which corresponds to a generalized free field with unitary scaling dimension),
and show that its dual net is dilation covariant but not M\"obius covariant.

\subsubsection{Nets of standard subspaces and first quantization nets}\label{standard}
A general reference for this section is \cite{Longo08}.
Firstly, we recall the notion of real subspaces. A linear, real, closed subspace $H$ of a complex Hilbert space $\H$ is called {\bf cyclic} if 
$H+iH$ is dense in $\H$, {\bf separating} if $H\cap iH=\{0\}$ and 
{\bf standard} if it is cyclic and separating.

If $H$ is a  real linear subspace of $\H$, the \emph{symplectic complement} of $H$ is defined by
\[
H' \equiv \{\xi\in\H\ ;\ \Im\<\xi,\eta\>=0, \text{for } \eta\in H\} = (iH)^{\bot_\RR}\ ,
\]
where $\bot_\RR$ denotes the orthogonal in $\H$  with respect to the real part of the scalar product on $\H$.
$H'$ is a closed, real linear subspace of $\H$. If $H$ is standard, then $H = H''$. 
$H$ is cyclic (respectively separating) if and only if $H'$ is separating (respectively cyclic), thus $H$ is standard if and only if $H'$ is standard.
The {\bf Tomita operator} $S_H$ associated to a standard subspace $H\subset\H$ is the densely defined closed anti-linear involution $H+iH\ni \xi + i\eta \mapsto \xi - i\eta\in H+iH$. The polar decomposition $S_H = J_H\Delta_H^{1/2}$ defines the positive self-adjoint {\bf modular operator} $\Delta_H$ and the anti-unitary {\bf modular conjugation} $J_H$. In particular, $\Delta_H$ is invertible and
\begin{equation}\label{eq:1}
J_H\Delta_H J_H=\Delta_H^{-1}.
\end{equation} We further have that
$S_{H'} = S^*_H$,
 $J_H H = H'$ and $\Delta_H^{it}H = H$  for every $t\in\RR.$
The one-parameter, strongly continuous group $t\mapsto \Delta_H^{it}$ is the {\bf modular group} of $H$. There is a 1--1 correspondence between Tomita operators and  standard subspaces, namely between
standard subspaces $H\subset\H$, operators $S$ which are closed, densely defined anti-linear involutions on $\H$ and pairs $(J,\Delta)$ of an anti-unitary involution $J$ and  a positive self-adjoint operator $\Delta$ on $\H$ satisfying \eqref{eq:1}.
We recall the following analogue of Takesaki's theorem \cite[Proposition 2.1.10]{Longo08}.
\begin{lemma}\label{inc}.
Let $K\subset H\subset \H$ be an inclusion of  standard subspaces in $\H$.
If $\Delta_H^{it}K=K$ for every $t\in\RR$, then $K=H$.
\end{lemma}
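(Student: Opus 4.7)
The plan is to show that the Tomita operators coincide, $S_K = S_H$, which forces $\dom(S_K) = \dom(S_H)$, i.e., $K + iK = H + iH$. Once this is achieved, any $\xi \in H$ may be written as $\xi = k_1 + ik_2$ with $k_1, k_2 \in K \subset H$, and then $ik_2 = \xi - k_1 \in H \cap iH = \{0\}$ by the separating property of $H$, so $\xi = k_1 \in K$; this yields $H \subset K$ and thus $K = H$.

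The first step is the direct observation that for any $\xi, \eta \in K$ we have $\xi + i\eta \in K + iK \subset H + iH = \dom(S_H)$ and $S_H(\xi + i\eta) = \xi - i\eta = S_K(\xi + i\eta)$. Hence $S_H|_{K+iK} = S_K$, and in particular $\|S_K \zeta\|^2 = \|S_H \zeta\|^2$ for all $\zeta \in K + iK$, which translates into the quadratic form identity
\[
 \<\Delta_K \zeta, \zeta\> = \<\Delta_H \zeta, \zeta\> \quad \text{on } K + iK.
\]
The hypothesis $\Delta_H^{it} K = K$ implies that $\Delta_H^{it}$ preserves the dense subspace $K + iK$ and intertwines $S_K$ with itself, namely $\Delta_H^{it} S_K \Delta_H^{-it} = S_K$; by uniqueness of the polar decomposition this gives $\Delta_H^{it} \Delta_K \Delta_H^{-it} = \Delta_K$ and $\Delta_H^{it} J_K \Delta_H^{-it} = J_K$, so the modular groups $\Delta_H^{it}$ and $\Delta_K^{is}$ commute and the invariance of $K + iK$ under $\Delta_H^{it}$ is promoted to invariance under $\Delta_K^{is}$ as well.

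I would then use this double invariance of $K + iK$, together with its density in $\H$, to argue that $K + iK$ is a common form core for $\Delta_K^{1/2}$ and $\Delta_H^{1/2}$; the form identity on this core then extends to equality of the closed positive quadratic forms globally, yielding $\Delta_K = \Delta_H$ as positive self-adjoint operators on $\H$. Combined with $S_K \subset S_H$ and the uniqueness of the polar decomposition, this forces $J_K = J_H$ and therefore $S_K = S_H$, completing the reduction described above.

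The main obstacle is the upgrade from the form identity on $K + iK$ to the operator identity $\Delta_K = \Delta_H$: a closed positive form is determined by its values on a form core, not merely on an arbitrary dense subspace contained in its domain, and $K + iK$ is only a core for $\Delta_K^{1/2}$ by construction. The crucial role of the hypothesis $\Delta_H^{it} K = K$ is precisely to make $K + iK$ into a form core for $\Delta_H^{1/2}$ as well, via the standard fact that a dense subspace contained in the domain of a positive self-adjoint operator and invariant under its associated modular-type unitary group is a form core. Without this invariance, the equality of forms on $K + iK$ would not suffice and the conclusion $K = H$ would generally fail.
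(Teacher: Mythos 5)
Your proof is correct, but note that the paper itself offers no proof of this lemma to compare against: it is recalled from \cite[Proposition 2.1.10]{Longo08}. Your route --- reducing to $S_K=S_H$ by first establishing $\Delta_K=\Delta_H$ via equality of the quadratic forms on $K+iK$ --- is a legitimate self-contained argument. The individual steps check out: $S_K\subset S_H$ gives $\|\Delta_K^{1/2}\zeta\|=\|\Delta_H^{1/2}\zeta\|$ on $K+iK=\dom(\Delta_K^{1/2})$; the hypothesis gives $\Delta_H^{it}S_K\Delta_H^{-it}=S_{\Delta_H^{it}K}=S_K$, hence (uniqueness of the polar decomposition of a closed antilinear operator) $\Delta_H^{it}$ commutes with $\Delta_K$ and $J_K$; and once $\Delta_K=\Delta_H$ is known, $J_K\Delta_K^{1/2}=S_K\subset S_H=J_H\Delta_K^{1/2}$ forces $J_K=J_H$ on the dense range of $\Delta_K^{1/2}$, so $S_K=S_H$. (Your closing argument via $H\cap iH=\{0\}$ is fine; in fact $\dom\Delta_K^{1/2}=\dom\Delta_H^{1/2}$, which already follows from $\Delta_K=\Delta_H$, suffices there.)

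The one step you should justify rather than merely name is the core claim. The textbook invariant-domain theorem (a dense subspace of $\dom(A)$ invariant under $e^{itA}$ is a core for $A$) concerns the \emph{generator} of the unitary group, whereas here the group is $\Delta_H^{it}=e^{it\log\Delta_H}$ and the operator whose core you need is $\Delta_H^{1/2}=e^{\frac12\log\Delta_H}$, which is not that generator. The statement you need is nonetheless true, by the usual Gaussian smoothing: for $\eta\in\dom(\Delta_H^{1/2})$ and $\xi^{(k)}\in K+iK$ with $\xi^{(k)}\to\eta$ in $\H$, the vectors $g_n(\log\Delta_H)\xi^{(k)}$ with $g_n(\mu)=e^{-\mu^2/4n}$ lie in the graph-norm closure of $K+iK$ (Riemann sums of $\sqrt{n/\pi}\int e^{-nt^2}\Delta_H^{it}\xi^{(k)}\,dt$ stay in the $\Delta_H^{it}$-invariant subspace $K+iK$ and converge in the graph norm of $\Delta_H^{1/2}$), they converge as $k\to\infty$ to $g_n(\log\Delta_H)\eta$ in that graph norm because $\lambda^{1/2}g_n(\log\lambda)$ is bounded, and $g_n(\log\Delta_H)\eta\to\eta$ in graph norm as $n\to\infty$. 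With this supplied, $K+iK$ is a common form core for $\Delta_K$ and $\Delta_H$, the two closed forms coincide, and the proof is complete.
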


Let us denote the set of (left and right) wedges by $\W$ in the two-dimensional Minkowski space $M$.
The one parameter group of Lorentz boosts $\Lambda_{W_\L}(t)=\delta(-t)\times\delta(t)$ is associated to $W_\L$.
We associate to a general left wedge $W=g W_\L$ the one-parameter group $\Lambda_W(t)=g\Lambda_{W_\L}(t)g^{-1}$,
and to a right wedge $W=g W_\R$ the reversed one-parameter group $\Lambda_W(t)=g\Lambda_{W_\L}(-t)g^{-1}$.

Let $\poincarej$ be the proper Poincar\'e group, namely, it is generated by the connected component $\poincare$
of the full Poincar\'e group and $j: (a_\L, a_\R) \mapsto (-a_\L, -a_\R)$. Let $\alpha$ be the action of $j$ on $\Poi$, then $\poincarej=\Poi\rtimes_\alpha\ZZ_2$. 
We introduce also $j_W := g j g^{-1}$, where $g\in \poincare$ is such that $W = gW_\L$ or $gW_\R$,
depending on whether $W$ is a left or right wedge (this does not depend on the choice of $g$, hence is well-defined).
Let $U$ be a (anti-)unitary, positive energy representation of $\poincarej$, namely unitary on $\Poi$ and anti-unitary on $j_W\Poi$. In particular $j_W$ is represented by an anti-unitary operator $U(j_W)$. 
Define $\Delta_W$ by the equation $U(\Lambda_W(t)) = \Delta_W^{i2\pi t}$.
Brunetti, Guido and Longo associated a standard real subspace $H(W) = \ker (\1 - U(j_W)\Delta_W^\frac12)$.
They form a ($\poincarej$-covariant) {\bf net of standard subspaces for wedges}
in the following sense \cite[Theorem 4.7]{BGL02}:
 \begin{enumerate}[{(SS}1{)}]
\item\label{ssisotony} \textbf{Isotony:}  If $W_1,W_2$ and $W_1\subset W_2$ then $H(W_1)\subset H(W_2)$;
\item \textbf{Poincar\'e Covariance:}   $U(g)H(W)=H(gW)$ for $g\in\poincarej,\, W\in\W$;
\item \textbf{Positivity of energy:} the joint spectrum of translations in $U$ is contained in $\overline{V_+}$;
\item \textbf{Reeh-Schlieder property:} $H(W)$ is standard in $\H$ for any $W\in\W$
\item \textbf{Locality:}  for any $W_1\subset W_2'$ then $H(W_1)\subset H(W_2)'$,
where $W'$ denotes the causal complement of $W$ in the Minkowski space $M$
\item\label{SSbw} \textbf{Bisognano-Wichmann property:} $U(\Lambda_W(t)) = \Delta_W^{i2\pi t}$ for every $W\in\W$ and $t\in\RR$.
\end{enumerate}

Given a net of standard subspaces $\{H(W)\}$ for wedges, the {\bf dual net} $H^\dual$ of real subspaces is defined for a double cone $O$ by
\begin{align}\label{eq:HOdual}
 H^\dual(O)\dot=H(O')',
\end{align}
where the causal complement $O'$ of $O$ in the Minkowski space $M$ consists of two wedges $W_1,W_2$,
and $H(O')$ is the real closed subspace spanned by $H(W_1)$ and $H(W_2)$.
We do not know for which class of $U$ $H(O)$ is standard, but we can prove it for a concrete class in Section \ref{nochiral}.
In that case, the net of standard subspaces $\{H^\dual(O)\}$ satisfies
isotony, $\poincarej$-covariance, positivity of energy, the Reeh-Schlieder property and locality
in the natural sense.

If $U$ extends to the group $G\rtimes_\alpha \ZZ_2$ (where $G$ is the two-dimensional M\"obius group),
there is another choice to assign a standard subspace to double cones.
Let us define $H(O) = \ker(\1-U(j_O)\Delta_O^\frac12)$, where $U(j_O) = U(g)U(J_W)U(g)^*$ with $g$ such that $O = gW$
and $\Delta_O = U(g)\Delta_W U(g)^*$.
With this definition, the net of standard subspaces $\{H(O)\}$ satisfies
isotony, $G\rtimes_\alpha \ZZ_2$-covariance, positivity of energy, the Reeh-Schlieder property and locality
in the natural sense where $O$ is a double cone on $\cyl$.
We refer to this the {\bf BGL construction} of standard subspaces associated with $U$.

\paragraph{Bosonic second quantization.} Let $\H$ be a (complex) Hilbert space and $\F_+(\H)$ be the associated bosonic Fock space.
Given a real subspace $H\subset\H$, we shall denote with ${\mathcal R}_+(H)$ the second quantization von Neumann algebra 
 \[
  {\mathcal R}_+(H)=\{W_+(f):f\in H\}''\subset\B(\F_+(\H))
 \]
where $W_+(f)$ are the Weyl operator\footnote{The symbol $W_+$ should not be confused with wedges $W$.
Weyl operators appear only in this section.}
on the Fock space, satisfying the CCR 
 \[
  W_+(f)W_+(g)=e^{\Im(f,g)}W_+(f+g),\qquad f,g\in H)
 \]
and $\omega (W_+(f)) =(\Omega,W_+(f)\Omega)=e^{-\frac12 \|f\|^2}$.
The second quantization construction respects the lattice structure and the modular theory. Let $\Gamma_+(A)$ be the multiplicative Bose second
quantization of a one-particle operator $A$ on $\H$.
\begin{proposition}\label{prop:secquant} \cite{Araki63,LRT78,LMR16}
Let $H$ and $\{H_\k\}$ be closed, real linear subspaces of $\H$. We have
\begin{itemize}\itemsep0mm
\item[$(a)$] ${\mathcal R}_+(H)' = {\mathcal R}_+(H')$; 
\item[$(b)$] ${\mathcal R}_+(\sum_\k H_\k) = \bigvee_\k {\mathcal R}_+(H_\k)$;
\item[$(c)$] ${\mathcal R}_+(\bigcap_\k H_\k) = \bigcap_\k {\mathcal R}_+(H_\k)$.
\item[$(d)$] If $H$ is standard, then $S_{{\mathcal R}_+(H),\Omega} = \Gamma_+(S_H)$, \; $J_{{\mathcal R}_+(H),\Omega} =\Gamma_+(J_H)$,
\; $\Delta_{{\mathcal R}_+(H),\Omega}= \Gamma_+(\Delta_H)$. 
\end{itemize}
\end{proposition}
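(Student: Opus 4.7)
The plan is to establish the four statements in an order that exploits Fock-space structure and duality: first prove (d) for standard $H$, then use it to obtain (a), then (b) directly, and finally derive (c) from (a) and (b) by taking symplectic complements.

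For (d), I would work with coherent vectors $\Exp(f) = \sum_{n\geq 0} f^{\otimes n}/\sqrt{n!}$, which are total in $\F_+(\H)$, and use that $W_+(f)\Om = e^{-\frac12 \|f\|^2}\Exp(f)$. For $f \in H$ one has $W_+(f)\Om \in \overline{{\mathcal R}_+(H)\Om}$ and the prospective Tomita operator sends it to $W_+(-f)\Om$. Analytically extending in $f$ to $H+iH$ and computing on coherent states yields the identification $S_{{\mathcal R}_+(H),\Om}\,\Exp(f) = \Exp(S_H f)$ on a core consisting of finite linear combinations of coherent vectors with $f \in H + iH$. Taking polar decompositions and using functoriality of $\Gamma_+$ then gives $J_{{\mathcal R}_+(H),\Om} = \Gamma_+(J_H)$ and $\Delta_{{\mathcal R}_+(H),\Om} = \Gamma_+(\Delta_H)$. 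The cyclicity and separation of $\Om$ for ${\mathcal R}_+(H)$ follow from the fact that $H$ is standard together with the density of $\{\Exp(f):f \in H+iH\}$ in $\F_+(\H)$.

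For (a), the inclusion ${\mathcal R}_+(H') \subset {\mathcal R}_+(H)'$ is immediate from the CCR, since $\Im\<f,g\> = 0$ for $f \in H$, $g \in H'$ implies that the generators commute. For the converse I would first reduce to the case where $H$ is standard by splitting off the cyclic and the separating parts on orthogonal subspaces of $\H$ (where ${\mathcal R}_+$ factorizes as a tensor product). Then from (d), $\Gamma_+(J_H)$ is the modular conjugation of $({\mathcal R}_+(H),\Om)$, so it implements an antiisomorphism ${\mathcal R}_+(H)\to {\mathcal R}_+(H)'$; but $\Ad \Gamma_+(J_H)$ sends $W_+(f)$ to $W_+(J_H f)$, hence maps ${\mathcal R}_+(H)$ onto ${\mathcal R}_+(J_H H) = {\mathcal R}_+(H')$, giving the reverse inclusion. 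For (b), the inclusion $\supset$ is trivial; for $\subset$ one uses strong continuity of $f\mapsto W_+(f)$ so that any $f\in\overline{\sum_\k H_\k}$ is a strong limit of finite sums, whose Weyl operators factor via the CCR as products in $\bigvee_\k {\mathcal R}_+(H_\k)$. Finally (c) follows from $(\bigcap_\k H_\k)' = \overline{\sum_\k H_\k'}$ in the symplectic sense: applying ${\mathcal R}_+$, then using (a) and (b), yields ${\mathcal R}_+(\bigcap_\k H_\k) = ({\mathcal R}_+(\overline{\sum_\k H_\k'}))' = (\bigvee_\k {\mathcal R}_+(H_\k'))' = \bigcap_\k {\mathcal R}_+(H_\k')' = \bigcap_\k {\mathcal R}_+(H_\k)$.

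The main obstacle is (d): carefully handling domains for the unbounded antilinear operator $\Gamma_+(S_H) = \Gamma_+(J_H)\Gamma_+(\Delta_H^{1/2})$ on Fock space and verifying that the coherent-vector identity passes to the closure, so that $\Gamma_+(S_H)$ is a closed densely defined antilinear involution and coincides with $S_{{\mathcal R}_+(H),\Om}$ on a common core. The reduction in (a) from general $H$ to standard $H$ is also a potential subtlety, but it can be handled by orthogonally decomposing $\H$ with respect to $\overline{H+iH}$ and $(H\cap iH)^\perp$-type subspaces so that ${\mathcal R}_+$ factorizes multiplicatively.
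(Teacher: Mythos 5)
The paper does not actually prove Proposition \ref{prop:secquant}; it is quoted from \cite{Araki63,LRT78,LMR16}, so the only comparison available is with the standard proofs in those references. Your outline reproduces exactly that standard modular-theoretic route (essentially the one in \cite{LRT78,LMR16} and in Longo's notes \cite{Longo08}): establish (d) for standard $H$, deduce (a) from $J_{{\mathcal R}_+(H),\Om}=\Gamma_+(J_H)$ together with $J_H H=H'$, handle general $H$ via the decomposition $\H=(H\cap iH)\oplus\H_0\oplus(H'\cap iH')$ on which ${\mathcal R}_+$ factorizes (with $\B(\F_+)$ and $\CC\1$ on the degenerate factors), get (b) from the Weyl relations and strong continuity of $f\mapsto W_+(f)$, and get (c) from (a), (b) and the lattice identity $(\bigcap_\k H_\k)'=\overline{\sum_\k H_\k'}$ together with $H''=H$. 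That ordering and those reductions are all sound.

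Two caveats. First, a convention slip: with the Weyl relations as normalized in the paper one has $W_+(f)\Om=e^{-\frac12\|f\|^2}\Exp(icf)$ for a suitable real constant $c$ (an imaginary unit must appear), not $e^{-\frac12\|f\|^2}\Exp(f)$; with your formula the identity $S_{{\mathcal R}_+(H),\Om}W_+(f)\Om=W_+(f)^*\Om$ would read $\Exp(f)\mapsto\Exp(-f)$ while $\Gamma_+(S_H)\Exp(f)=\Exp(f)$ for $f\in H$, so the computation would not close; the $i$ restores the match and then indeed $S_{{\mathcal R}_+(H),\Om}$ and $\Gamma_+(S_H)$ agree on Weyl vectors. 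Second, the step you call ``the main obstacle'' is where the entire content of (d) sits, and your polar-decomposition step presupposes rather than proves it. Concretely one needs: (i) $\lspan\{W_+(f)\Om:f\in H\}$ is a core for $S_{{\mathcal R}_+(H),\Om}$ --- this follows from the Kaplansky density theorem applied to the $*$-algebra generated by the Weyl operators --- which yields $S_{{\mathcal R}_+(H),\Om}\subset\Gamma_+(S_H)$; and (ii) the converse inclusion, for which one must either show that the coherent vectors $\Exp(g)$, $g\in\dom S_H=H+iH$ (reached by your analytic continuation in the Weyl parameter, using closedness of $S_{{\mathcal R}_+(H),\Om}$) span a core for $\Gamma_+(\Delta_H^{1/2})$, or avoid the domain analysis by checking the KMS condition of the vacuum state for $\Ad\Gamma_+(\Delta_H^{it})$ on Weyl operators and invoking Takesaki's theorem, after which $S_{{\mathcal R}_+(H),\Om}\subset\Gamma_+(S_H)$ with equal modular operators forces equality of domains and hence of the Tomita operators. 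Either completion is standard, but it has to be carried out; as written, ``taking polar decompositions'' does not by itself bridge the two closures.
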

In particular if $O\mapsto H(O)$ is a Poincar\'e covariant net of standard subspaces satisfying
(SS\ref{isotony})-(SS\ref{SSbw}), then its second quantization  $O\mapsto \A(O)={\mathcal R}_+(H(O))$
is a Poincar\'e covariant net of von Neumann algebras satisfying (HK\ref{isotony})--(HK\ref{bw}) in Section \ref{haag-kastler}.
Furthermore, it follows from Proposition \ref{prop:secquant}, that
$\A^\dual(O) = {\mathcal R}_+(H^\dual(O))$, where $\A^\dual(O) := \A(O')'$ is the dual net of $\A$.

\subsubsection{M\"obius covariant nets without chiral components}\label{nochiral}
We apply the BGL construction to a particular class of representations of $G$.
To be specific, let us take the irreducible positive energy representation $U_0$ of $\mob$
with lowest weight $1$. Let $j_0$ be the map $z\mapsto \bar z$ on $S^1$
and it acts on $\mob$ through the identification of $\mob$ as $\mathrm{SU}(1,1)$, which we call $\alpha_0$.
$U_0$ extends to an (anti-)unitary representation of $\mob \rtimes_{\alpha_0} \ZZ_2$ which we denote again by $U_0$,
namely $U_0$ is unitary on $\mob$ and anti-unitary on $j_0\Mob$.
(see e.g.\! \cite[Section 1.6.2]{Longo08}).
We can define a positive-energy (anti-)unitary representation of $G\rtimes_\alpha \ZZ_2$ by $U(g_\L\times g_\R) := U_0(g_\L)\otimes U_0(g_\R)$
and $U(j) := U(j_0)\otimes U(j_0)$ where, on the real line picture, $j_0: \RR \ni a \mapsto -a$.
Note further that the joint spectrum of $U$ restricted to the translation group $\RR^{1+1}$
has no nontrivial spectral projections corresponding to the sets $\RR_+ \times \{0\}$ or $\{0\}\times \RR_+$.
In the second quantization $\Gamma_+(U)$, these spectral projections contain only $\CC\Omega$.

By the BGL construction, we obtain a M\"obius covariant net $(\A_U, \G_+(U), \Omega)$.
This net does not have chiral components, i.e.\!
$\A_{U,\L}^{\mathrm{max}}(I_\L) := \A(I_\L \times I_\R)\cap \left(\G_+(U)(\iota\times\uMob)\right)'= 
\CC\1 =\A(I_\L \times I_\R)\cap \left(\G_+(U)(\uMob\times\iota)\right)' =: \A_{U,\R}^{\mathrm{max}}(I_\R)$
(see \cite[Definition 2.1]{Rehren00}).
Indeed, if these algebras were nontrivial, the representation $\G_+(U)(\tau\times\tau)$ of $\RR^{1+1}$
would have nontrivial spectral projections (properly larger than $\CC\Omega$)
corresponding to the sets $\RR_+ \times \{0\}$ or $\{0\}\times \RR_+$, which is a contradiction.

A M\"obius covariant net $\A$ is said to have the {\bf split property} if
for each $O \Subset \tilde O$ (namely, $\overline{O} \subset \tilde O$)
there is an intermediate type I factor ${\mathcal R}_{O,\tilde O}$ such that $\A(O) \subset {\mathcal R}_{O, \tilde O} \subset \A(\tilde O)$.
To show that $\A_U$ has the split property,
let us consider the theory on restricted to the timelike line $a_\R = a_\L$,
namely $\RR\supset I\mapsto \A_U(O_I)$ where $O_I$ is the minimal double cone including $I$.
The diagonal action $\G_+(U)(g\times g)$ acts on $\{\A(O_I)\}$ covariantly, where $g \in \mob$
and the one-particle conformal Hamiltonian is $L_0\otimes1 + 1\otimes L_0$, where $L_0$ is the rotation generator in $U_0$.
As we took $U_0$ as an irreducible representation of $\mob$,
the one-particle conformal Hamiltonian satisfies the trace class property:
 \[
  \Tr (e^{-\beta(L_0\otimes1 +1\otimes L_0)})= \Tr(e^{-\beta L_0})^2<\infty,\qquad \text{for any}\; \beta>0
 \]
since  $\Tr(e^{-\beta  L_0})<\infty$.
Now the trace class property is preserved through second quantization \cite[Corollary 7.4.2]{Longo08}
and  it ensures the split property for any inclusion $\A_U(O_I)\subset \A_U(O_{\tilde I})$,
with $I\Subset \tilde I$ \cite[Corollary 6.4]{BDL07}.
For any inclusion of double cones $O\Subset \tilde O$,
we can find $O_1$ such that $O\Subset O_1 \Subset \tilde O$
and $O_1\Subset \tilde O$ is conformally equivalent to some $O_I \Subset O_{\tilde I}$
for which the split property holds, therefore,
$\A_U(O)\subset \A_U(\tilde O)$ with $O\Subset \tilde O\subset\RR^{1+1}$ satisfies the split property.

\subsubsection{Dual net without M\"obius covariance}\label{dual}
As $(\A, \G_+(U), \Omega)$ is M\"obius covariant, and especially Poincar\'e-dilation covariant,
the dual net $(\A_U^\dual, \G_+(U), \Omega)$, where $\A(O) = \A(O')' = \A(W_1)\cap \A(W_2)$,
where $W_1, W_2$ are wedges such that $O = W_1\cap W_2$, remains to be Poincar\'e-dilation covariant,
because the set of wedges is closed under Poincar\'e and dilation transformations.
Here we show that it cannot be extended to a M\"obius covariant net
because $\A_U^\dual(V_+) = \B(\H)$. We identify $U$ with the direct integral of massive representations $U_m$
of $\poincare$, for which the property $H_{U_m}(V_+) = \H_m$ is well known.
This provides an example of a M\"obius covariant net $(\A_U, \G_+(U),\Omega)$
whose dual net $(\A_U^\dual, \G_+(U), \Omega)$ neither is M\"obius covariant nor satisfies the split property.

\paragraph{The massive free field.}
 Let $U_m$ be the scalar representation of the Poincar\'e group $\poincarej$ with mass $m$.
 It has the form
\begin{align*}
 (U_{m}(a,\l)\xi)(p_1) &= e^{ia\cdot p_m}\xi(\l^{-1}(p_1)),\qquad (a,\l)\in \poincare \\
 (U_m(j)\xi)(p_1) &= \overline{\xi(p_1)},
\end{align*}
where $p_m(p_1) = (\omega_m(p_1), p_1) \in \RR^{1+1}$
(in $(p_0,p_1)$-coordinate, not in $(p_\L, p_\R)$-coordinate),
$\omega_m(p_1) = \sqrt{m^2 + p_1^2}$,
$\l(p_1) = -\sinh(\l) \,\omega_m(p_1) + \cosh(\l) \,p_1$
(where we identify $\l \in \RR$ and an element of the Lorentz group)
and $\xi\in \H_m=L^2(\RR, \frac{dp_1}{2\omega_m(p_1)})$.
Let $\{H_m(W)\}$ be the net of standard subspaces for wedges associated to $U_m$ as in Section \ref{standard}.
We define subspaces relatively to double cones by duality as in Equation \eqref{eq:HOdual}.
Actually, the more traditional construction of the free massive field net satisfies Haag duality, c.f.\!  \cite{O},
hence the one-particle local subspaces can be explicitly described as
\begin{align*}
\begin{array}{l}
 H_m(O) = \overline{\{\hat f^+(p_m) \in \H_m: f \in \mathscr{S}(\RR^{1+1}, \RR), \supp f\subset O\}}, \\
 H_m(W) = \overline{\{\hat f^+(p_m) \in \H_m: f \in \mathscr{S}(\RR^{1+1}, \RR), \supp f\subset W\}},
\end{array} \, \hat f^+(p_1) = \frac1{2\pi}\int d^2a\, f(a)e^{-ia\cdot p_m}. 
\end{align*}

We associate the following real subspaces to the forward and the backward light cones $V_\pm$:
 \[
  H_m(V_\pm)=\overline{\sum_{O\subset V_\pm} H_m(O)}.
 \]

The following proposition is partly an adaptation of the arguments in \cite{SW}.
\begin{proposition}\label{prop:full} $H_m(V_\pm)=\H_m$
\end{proposition}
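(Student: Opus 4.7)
The plan is to show that the symplectic complement $H_m(V_+)'$ is trivial, which by the standard theory of standard subspaces will give $H_m(V_+) = \H_m$. Suppose $\xi \in H_m(V_+)'$; by the explicit description of $H_m(V_+)$ as the closed real span of $\hat f^+$ with $f \in \mathscr{S}(\RR^{1+1}, \RR)$ and $\supp f \subset V_+$, this means $\Im\langle \xi, \hat f^+\rangle = 0$ for every such $f$. Interchanging integrations and using that $f$ is real, the condition rewrites as $\Im F \equiv 0$ on $V_+$ in the distributional sense, where $F$ is the one-particle positive-frequency wave
\[
F(a) := \int \overline{\xi(p_1)}\, e^{-ia\cdot p_m}\,\frac{\dd p_1}{2\omega_m(p_1)}.
\]
By construction $F$ is a tempered positive-frequency solution of the Klein--Gordon equation, i.e., the boundary value of a function holomorphic in the tube $\RR^{1+1} - iV_+$.

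The key step is to promote local vanishing $\Im F|_{V_+} = 0$ to global vanishing $\Im F \equiv 0$ on $\RR^{1+1}$. One natural route uses real analyticity: the holomorphic extension of $F$ to the forward tube has real-analytic boundary values on $\RR^{1+1}$ (after a density reduction to smooth $\xi$), and a real-analytic function vanishing on the open set $V_+$ vanishes identically. A more concrete route uses the Cauchy problem for the wave operator: the hyperbola $\Sigma := \{(t,x) : t = \sqrt{1 + x^2}\}$ lies entirely inside $V_+$ and is a global Cauchy surface for $\RR^{1+1}$; since $\Im F$ vanishes in an open neighborhood of $\Sigma$, its Cauchy data on $\Sigma$ are zero, and uniqueness of the Cauchy problem forces $\Im F \equiv 0$ on all of Minkowski space.

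Once $\Im F \equiv 0$ we have $F = \overline F$ on $\RR^{1+1}$. Taking Fourier transforms: $F$ is a distribution supported on the forward mass hyperboloid $\{p_0 = \omega_m(p_1)\}$, whereas $\overline F$ is supported on the backward mass hyperboloid $\{p_0 = -\omega_m(p_1)\}$. The two supports are disjoint, so both distributions must vanish, giving $F \equiv 0$ and hence $\overline{\xi} = 0$ on the mass shell, i.e., $\xi = 0$. This proves $H_m(V_+)' = \{0\}$ and so $H_m(V_+) = \H_m$; the equality $H_m(V_-) = \H_m$ follows by applying the same argument to the time-reversed representation. The main technical obstacle is the passage from distributional vanishing on $V_+$ to global vanishing, which requires either the wavefront-set analysis supporting the boundary-value/real-analyticity step or a density reduction to smooth $\xi$ for which classical Cauchy uniqueness applies directly.
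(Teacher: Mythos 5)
Your reduction of the statement to ``a real tempered solution $u=\Im F$ of the Klein--Gordon equation vanishing on $V_+$ must vanish identically'' is the right reduction, and your concluding step (disjointness of the forward and backward mass shells forces $F=0$, hence $\xi=0$) is fine. The gap is that neither of your two proposed mechanisms actually establishes the central implication. The Cauchy-surface route fails outright: the hyperbola $\Sigma=\{a_0=\sqrt{1+a_1^2}\}$ is \emph{not} a Cauchy surface for $\RR^{1+1}$ --- an inextendible causal curve such as $t\mapsto(t,\sqrt{t^2+4}+C)$ with $C$ large never meets it --- and its domain of dependence is only $\overline{V_+}$, so finite-propagation-speed uniqueness returns exactly the region where you already know $u=0$. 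The real-analyticity route is also unjustified: $F$ is only the distributional boundary value of a function holomorphic in a tube $\RR^{1+1}-iV_+$, and such boundary values are not real-analytic for general $\xi\in L^2(\RR,\frac{dp_1}{2\omega_m})$; moreover the density reduction you invoke is unavailable, since smoothing a general element of the closed subspace $H_m(V_+)'$ (say by convolving $F$ with $\chi\in C_c^\infty$ supported near $0$) produces a smooth but still not real-analytic solution, and an arbitrary smooth approximant of $\xi$ need not lie in $H_m(V_+)'$. A decisive sign that something is missing: both of your routes are blind to the hypothesis $m>0$, yet the implication ``$u|_{V_+}=0\Rightarrow u\equiv0$'' is \emph{false} at $m=0$ (take $u(a)=g(a_0-a_1)$ with $g$ real, smooth, supported in $\RR_-$: a nonzero real solution of the wave equation vanishing on the half-plane $a_0>a_1\supset V_+$). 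Any correct proof must therefore use the spectral gap.

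That is exactly what the paper's proof does, by a one-variable argument: fixing $a_1$, the function $a_0\mapsto f(a_0,a_1)$ is supported in a half-line (because $f$ vanishes on the cone), so its Fourier transform in $a_0$ is the boundary value of a function analytic in a half-plane; since that Fourier transform also vanishes on the gap $(-m,m)$, the reflection principle forces it to vanish identically. If you repair your middle step along these lines, your ending is a legitimate (and arguably more direct) alternative to the paper's second half, which instead deduces from $f\equiv0$ that $\eta$ lies in $H_m(W)\cap H_m(W')$ for every wedge, is fixed by all boosts via the Bisognano--Wichmann property, and hence vanishes by irreducibility of $U_m$.
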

\begin{proof}
We prove the claim for $V_-$. It can be proved analogously for $V_+$.

Let $O\subset V_-$ and take the vectors  $\xi \in H_m(O)$ and $\eta \in H_m(V_-)'$.  Consider the function
 \[
  f(a)=\im\langle\eta, U(a)\xi\rangle.
 \]
It follows that $f$ is real, it vanishes for any $a\in V_-$ and, since $(\Box +m^2)f=0$, $\supp\hat f\subset ( -\Omega_m\cup \Omega_m)$
as a distribution.

 We claim that $f\equiv 0$. Let $d\mu(p)=\hat f(p) \,\delta(p^2-m^2)\, d^2p$ be the measure associated to the Fourier transform of $f$, namely
 \[
  f(a_0)=\int e^{ia_0p_0}d\mu_{a_1}(p_0),\qquad a=(a_0,a_1), p=(p_0,p_1)
 \]
 where 
\begin{align*}
d\mu_{a_1}(p_0) & =\int e^{-ia_1p_1}d\mu(p_0,p_1)\\ &=
\left(e^{-ia_1\sqrt{p_0^2-m^2}}f(p_0,\textstyle{\sqrt{p_0^2-m^2}})+e^{-ia_1\sqrt{p_0^2-m^2}}f(p_0,-\textstyle{\sqrt{p_0^2-m^2}})\right)dp_0.\end{align*}
  Now, note that the support of $a_0\mapsto f(a_0,a_1)$ is contained in $\RR^+-|a_1|$,
  thus the Fourier transform $d\mu_{a_1}(p_0)$ extends to an analytic function on the upper complex half-plane.
  Furthermore, $p_0\mapsto d\mu_{a_1} (p_0)$ is null on the interval $(-m,m)$ and hence for any test function $h$,
  the analytic function $h *d\mu_{a_1} (p_0)$ is  $0$ by the reflection principle.
  Therefore,   $f(\cdot,{a_1})\equiv 0$ for every ${a_1}\in\RR$, and hence $f\equiv 0$.

Now let $\eta$ be in the symplectic complement of any $H_m(O)$:
$\eta \in \left(\overline{\sum_{O\subset M}H_m(O)}\right)'$,
hence in particular it belongs to $H_m(W')' \cap H_m(W)'=  H_m(W) \cap H_m(W')$.
Indeed,
\[
\eta\in\bigcap_{O\subset W'}H_m(O)'= \left(\overline{\sum_{O\subset W'} H_m(O)}\right)'=H_m(W')'=H_m(W),
\]
and similarly for $W'$.

We show that $\eta=0$. Firstly we observe that $\eta\in H_m(W)\cap H_m(W')$ hence $\Delta^{it}_{H_m(W)}\eta=\eta$. Indeed, 
\begin{align*}\eta\in \ker (\1-S_{H_m(W)})\cap\ker (\1-S_{H_m(W')})&\Leftrightarrow\\
\eta\in \ker (\Delta_{H_m(W)}^{1/2}-\Delta_{H_m(W)}^{-1/2})&\Leftrightarrow\\ \Delta_{H_m(W)}\eta=\eta.\end{align*}
In particular, by the Bisognano-Wichmann property, we have that  $U(\Lambda_W(t))\eta=\eta$ for any wedge $W$ and $t\in\RR$.
Since $U_m$ is an irreducible representation of $\poincare$,
boosts does not have proper invariant vectors,
and we conclude that $\eta=0$.

\end{proof}

This proof can be adapted in any Minkowski space $\RR^{1+s}$ with $s\geq 1$.

\paragraph{The product representation as the direct integral of massive representations.} 
We have seen in \cite[Section 5.2]{BT15} that
the representation $U_0$, restricted to the translation-dilation group,
can be realized on $L^2(\RR_+, pdp)$ as follows:
\begin{align*}
 (U_0(\tau(t))\xi)(p) &= e^{itp}\xi(p) \\
 (U_0(\delta(s))\xi)(p) &= e^{-s}\xi(e^{-s} p) \\
 (U_0(j_0)\xi)(p) &= \overline{\xi(p)}.
\end{align*}
and accordingly the product representation $U$ restricted to the Poincar\'e-dilation group on
$L^2(\RR_+, p_\L dp_\L) \otimes L^2(\RR_+, p_\R dp_\R)$ is given by
\begin{align*}
 (U(\tau(t_\L)\times \tau(t_\R))\xi)(p_\L, p_\R) &= e^{i(t_\L p_\L + t_\R p_\R)}\xi(p_\L, p_\R) \\
 (U(\delta(s_\L)\times\delta(s_\R))\xi)(p_\L, p_\R) &= e^{-s_\L - s_\R}\xi(e^{-s_\L} p_\L, e^{-s_\R}p_\R) \\
 (U(j_0\times j_0)\xi)(p_\L, p_\R) &= \overline{\xi(p_\L, p_\R)}.
\end{align*}

With this realization and the correspondence
$2p_\L p_\R = p_0^2 - p_1^2 = m^2, p_1 = \frac{p_\R - p_\L}{\sqrt2}$,
we have the natural identification
\begin{align*}
L^2(\RR_+, p_\L dp_L)\otimes L^2(\RR_+, p_\R dp_\R)
&\cong L^2(\RR_+^2, p_\L p_\R dp_\L dp_\R) \\
&\cong L^2(V_+, \textstyle{\frac{m^3\, dmdp_1}{2\omega_m(p_1)}}) \\ 
&= \int^\oplus_{\RR_+} d\mu(m)\, L^2(\RR, \textstyle{\frac{m^3\,dp_1}{2\omega_m(p_1)}})
\end{align*}
This identification is given by the map $\xi(p_\L, p_\R) \mapsto \xi'(m, p_1) = 2\xi(\frac{\omega_m(p_1) + p_1}{\sqrt2}, \frac{\omega_m(p_1) - p_1}{\sqrt2})$,
where the factor $2$ is needed to make it unitary.
It is straightforward to check that this intertwines the representation $U$ above and
\[
 (U'(a,\l)\xi')(m,p_1)=e^{ia\cdot (\omega_m(p_1),p_1)}\xi'(m,\l^{-1}(p_1))\qquad (a,\l)\in\poincare,
\]
acting on 
 $\H= L^2\left(\RR_+\times\RR, d\mu(m)dp\frac{m^3}{2\omega_m(p_1)}\right) =\int^\oplus_{\RR_+} m^3d\mu(m)\,L^2(\RR, \textstyle{\frac{\,dp_1}{2\omega_m(p_1)}})$,
 where the boost $\l$ corresponds to $\delta(\l)\times\delta(-\l)$. 
Then, $U$ decomposes into the direct integral
\[
U=\int_{\RR^+}^\oplus{m^3\,d\mu(m)}\,U_{m}. 
\]

A generalized free field can act on the second quantization of this Hilbert space, and it is covariant with respect to $\G_+(U)$.
It is well known that the net of von Neumann algebras for generalized free fields
does not always satisfy Haag duality \cite{Landau74}, depending on the measure on the space of $m$:
it is proven that if the measure decays exponentially, Haag duality fails \cite{Landau74}.
We will show that the measure $m^3\, d\mu(m)$ is associated with a M\"obius covariant net while
the dual net cannot be made M\"obius covariant.
Accordingly, we conjecture that the generalized free field corresponding to the measure $m^3 d\mu$ fails to have Haag duality.

\paragraph{The dual net $\A^\dual$ is not M\"obius covariant.}
Here we show that the dual net $\A^\dual$ does not satisfy (HK\ref{rsv}),
therefore, it cannot be made M\"obius covariant.
It turns out that, since $U$ is a direct integral of massive representations,
$\A^\dual$ does not satisfy the split property either.

Let $\{H_U(W)\}$ (respectively $\{H^\dual_U(O)\}$) be the covariant (respectively dual) BGL
net of standard subspaces for wedges (respectively double cones)
associated with the representation $U=\int_{\RR^+}^\oplus{m^3\,d\mu(m)}\,U_{m}$.

We show that $H_U^\dual(O)=\int^\oplus_{\RR^+}m^3\,d\mu(m)H_m(O)$,
where the direct integral is taken with $\H_U$ as a real Hilbert space.
Note also that $H_m(O)$ is a $\mu$-measurable family of (real) subspaces
in the sense of \cite[Section II.1.7]{Dixmier81} (see Appendix \ref{directintegral}),
because one can take a sequence of real test functions supported in $O$
which separate the points in $O$ (see Proposition \ref{prop:dix}).
Now, we have that
\begin{align*}
H^d_U(O)=
&{\bigcap_{W\supset O}H_U(W)}
\stackrel{(\bullet)}= {\bigcap_{W\supset O}\int_{\RR_+}^\oplus m^3\, d\mu(m) H_m(W)}\\
\stackrel{(\star)}=&\int_{\RR_+}^\oplus m^3\, d\mu(m){ \bigcap_{W\supset O} H_m(W)}=\int_{\RR_+}^\oplus m^3\, d\mu(m) H_m(O).
\end{align*}
Since $U$ disintegrates as $\int_{\RR_+}^\oplus m^3d\mu(m)U_m$ it is easy to see that  $H_U(W)= \ker (\1 - U(j_W)\Delta_W^\frac12) \supset
\int_{\RR_+}^\oplus m^3\, d\mu(m) H_m(W)$. The converse inclusion follows since the Lorentz boosts
$U(\Lambda_W(t))=\int_{\RR_+}^\oplus m^3\, d\mu(m)U_m(\Lambda_W(t))$ fix the subspace $\int_{\RR_+}^\oplus m^3\, d\mu(m) H_m(W)$ for any $t\in\RR$. Since the latter subspace is standard and (SS\ref{SSbw}) holds on $H_U$, we conclude $(\bullet)$ by Lemma \ref{inc}. To justify 
$(\star)$, we refer to the Appendix B, Lemma \ref{lem:ABC}(b),
and note that we only need two wedges whose intersection is $O$. 
Therefore, by Lemma \ref{lem:ABC}(c), 
\[
 H_U^\dual(V_+) = \overline{\sum_{O\supset V_+}H_U^\dual(O)}=\int_{\RR_+}^\oplus m^3\, d\mu(m) H_m(V_+).
\]
We have seen in Proposition \ref{prop:full} that $H_m(V_+) = \H_m$,
hence it follows that $H_U^\dual(V_+) = \H$.
In particular, (HK\ref{rsv}) fails and the net $\A^\dual$ cannot be made M\"obius covariant
by replacing $\G_+(U)$ by any other representation.

It is known that the dual net of the generalized free field does not satisfy the split property
if the measure is not atomic \cite[Theorem 10.2]{DL84}. Our measure is $m^3\,d\mu(m)$, therefore,
the dual net $\A^\dual$ fails to have the split property\footnote{This can be seen at the
one-particle level. The second quantized net $\A_U$ has the split property if and only if 
the operator
\[
 \int_{0}^{+\infty}\Delta_{F_m^{O,\tilde O}}|_{[0,1]} d\mu(m)
\]
is trace class, where $F_m^{O, \tilde O}$ is the intermediate type I factor subspace  between $H_m(O)\subset H_m(\tilde O)$, cf. \cite{FG94,DL84}. In particular it is a necessary condition for the split property that $d\mu$ has to be purely atomic, concentrated on isolated points, cf. \cite{Morinelli18}.
}.

\section{Comments on assumptions and examples}\label{comments}

\subsection{Operator-algebraic assumptions}\label{oa}
As we discuss the question whether dilation covariance can be promoted to M\"obius covariance,
(HK\ref{dilation}) is a natural assumption.
The physical meaning of (HK\ref{rsv}) and (HK\ref{bwcone}) are not very clear,
but they are necessary conditions for a Haag-Kastler net to be M\"obius covariant.
Indeed, a M\"obius covariant net extends to the cylinder $\cyl$,
any double cone is conformally equivalent to a lightcone, hence (HK\ref{rsv}),
and the Bisognano Wichmann property holds automatically for wedges, double cones and lightcones,
hence (HK\ref{bwcone}).
We showed in Section \ref{generalized} that (HK\ref{rsv}) excludes the dual net of certain generalized free fields
which are counterexamples to the implication ``dilation $\Longrightarrow$ M\"obius''
without any additional assumption.
Furthermore, even if (HK\ref{rsv}) is satisfied,
(HK\ref{bwcone}) may fail and in this case M\"obius covariance cannot be expected,
as shown in Section \ref{internal}.
Conversely, if $(\A, U, \Omega)$ satisfies (HK\ref{isotony})--(HK\ref{bw}) and (HK\ref{rsv}),
the modular group of $\A(V_+)$ commutes with Lorentz boosts $U(\delta(t)\times\delta(-t))$ (because
it preserves $\A(V_+)$ and $\Omega$, hence Theorem \ref{thm:bor} applies),
however, it is unclear whether it acts covariantly on $\A$.

(HK\ref{hsmib}) is certainly a necessary condition for M\"obius covariance, but might seem too strong,
because it requires that $\Delta_{B_\L}^{it}$ acts as a certain M\"obius transformation up to
an inner symmetry. On the other hand, we proved the implication (HK\ref{sa}) $\Longrightarrow$ (HK\ref{hsmib})
(under (HK\ref{isotony})--(HK\ref{bwcone})) and (HK\ref{sa}) does not refer to any M\"obius transformation,
therefore, Corollary \ref{cr:mob} is rather satisfactory.
Furthermore, let us stress that our proofs do not rely on either stress-energy tensor or current,
c.f.\! Section \ref{physics}.
Indeed, there are M\"obius covariant nets without stress-energy tensor.
Therefore, at least from the mathematical point of view, the modular theory is more essential
for M\"obius covariance.

Yet, there are examples of M\"obius covariant net not satisfying (HK\ref{sa}).
The simplest examples are two-dimensional nets $\vir_c\otimes\vir_c$,
where $\vir_c$ is the Virasoro net with central charge $c > 1$.
As the chiral components $\vir_c$ does not satisfy strong additivity \cite[Section 4, P.122]{BS90},
the two-dimensional net fails to satisfy (HK\ref{sa}).
Another family of examples is the derivatives of the $\mathrm{U}(1)$-current \cite[Corollary 2.11]{GLW98},
and one can again construct two-dimensional nets by tensor product which do not satisfy (HK\ref{sa}).
These examples satisfy (HK\ref{hsmib}), hence are covered by Theorem \ref{th:mob}
(although M\"obius covariance for these examples is trivial because each chiral component if $\mob$-covariant).
Interestingly, in both examples the current is missing, and
there is even no stress-energy tensor in the latter case \cite[Proposition 3]{Koester03},
hence it is not covered by the physics arguments (see below).

In the $d=1$ case, Guido, Longo and Wiesbrock have shown that any local net which is
covariant with respect to the translation-dilation group and satisfies the Bisognano-Wichmann property
extends to a M\"obius covariant net on $S^1$ \cite[Theorem 1.4]{GLW98}.
This does not straightforwardly generalize to $d=1+1$ because
the inclusion $\A(W_\R + (0,a_\R))\subset \A(W_\R)$ is in general not a standard inclusion,
and hence, even if one assumes the Bisognano-Wichmann property for wedges, it is not enough
to construct a representation of the group $\uMob\times\uMob$.
This is why we needed the assumption (HK\ref{hsmib}).

\subsection{Physics literature}\label{physics}
The arguments by Zamolodchikov and Polchinski \cite{Zomolodchikov86}\cite{Polchinski88}
are considered a proof in physics literature (e.g.\! \cite{Nakayama15}).
They are expressed in terms of Wightman-type assumptions on pointlike observables $O_k(x)$, i.e.
\cite[Section 3.1]{Nakayama15}:
\begin{itemize}
 \item unitarity
 \item Poincar\'e covariance
 \item (unbroken) dilation covariance
 \item discrete spectrum in scaling dimension:
 each of the pointlike observables $O_k(x)$ has a definite ``scaling dimension''
 $\Ad U(\delta(t))(O_k(x)) = t^{\Delta_k}O_k(tx)$.
 \item existence of scale current: there are a stress-energy tensor $T_{\mu\nu}(x)$
 and a current $J_\mu(x)$ such that $\int d^{d-1}x\,[x^\rho T_{\rho 0}(x) - J_0(x)]$
 is the generator of dilations.
\end{itemize}
Note that, apart from implicit assumptions such as $\{O_k(x)\}$
are Wightman fields
and the existence of the integral for the generator of dilations, the well-definedness of scaling dimension is also a quite strong
assumption: it implicitly says that there is a family of
fields $\{O_k(x)\}$ which {\it linearly} generate the whole Hilbert space from the vacuum.
This is different from the usual Wightman situation where the whole Hilbert space
is generated {\it algebraically}, namely, one is allowed to use polynomial of such (smeared) fields.
Once a complete set of fields $\{O_k(x)\}$ is obtained,
one defines the ``scaling dimension matrix'' $\gamma_{k\ell}$
which is defined by $[D, O_k(x)] = -i(\sum_\ell \gamma_{k\ell} O_\ell(x) + x^\mu\partial_\mu O_k(x))$.
Actually it appears that it is not always diagonalizable only from dilation covariance \cite[Section 2.2, below (2.6)]{Nakayama15}.
Therefore, the discreteness of scaling dimension,
and invertibility of $\gamma_{k\ell}$, strongly anticipates the M\"obius covariance.

Under these assumptions, the main concern of \cite{Polchinski88} is whether one can take a stress-energy tensor
$T_{\mu\nu}$ which has the ``canonical scaling dimension'', namely
$\Ad U(\delta(t)) T_{\mu\nu}(x) = t^2 T_{\mu\nu}(tx)$.
Such a new stress-energy tensor is obtained by the inverse of $\gamma_{k\ell}$.
Once $T_{\mu\nu}$ acquires the canonical scaling dimension,
the L\"uscher-Mack theorem \cite[Theorem 3.1]{FST89} gives a clear explanation
of why it satisfies the Lie algebra of vector fields.

By smearing the stress-energy tensor with a slightly singular function
(whose existence is implicitly assumed, see below), one obtains the generators of the Lie algebra of $\uMob$.
However, we are unable to find arguments or explanation on the following points:
\begin{itemize}
 \item Once a stress-energy tensor with canonical dimension is found, how can one show that
 the rest of the fields $\{O_k(x)\}$ are M\"obius covariant?
 Indeed, if one of $\{O_k(x)\}$ are ``descendant'', one is forced to introduce
 the primary field \cite{Koester03, ENR11}.
 Namely, there is no guarantee that the original set of fields $\{O_k(x)\}$ is sufficient.

 It is not always possible to find such an extended family solely from the representation of $\uMob$.
 Indeed, we have the example from Section \ref{generalized}, where the representation $U$ extends
 to $\uMob\times\uMob$, but the observables do not extend to $\cyl$.
 \item Does the stress-energy really give the generator of rotation?
 Here we have two examples in which stress-energy tensor is not directly connected with rotation.
 \begin{itemize}
  \item The $\mathrm{U}(1)$-current net. One can take a new stress-energy tensor
 $T^c_{\mu\nu}$ with central charge $c > 1$ which are relatively local to the $\mathrm{U}(1)$-current $J$
 and generate the same translation-dilation group, but does not give the correct rotation \cite{BS90}.
  \item The dual net of the Virasoro net $\vir_c$ with $c>1$.
  The stress-energy tensor $T^c_{\mu\nu}$ generate the correct translation-dilation group,
  but not the rotations for the dual net.  
 \end{itemize}
 Besides, in order to obtain the generator of rotation, one has to smear the stress-energy tensor
 with the function $x^2+1$ (c.f\! \cite[below (2.14)]{Nakayama15}), which
 is more singular than other generators ($1$ for translations and $x$ for dilations).
 Indeed, this is exactly why $T^c$ fails to extend to $S^1$ on the whole
 Hilbert space of the $\mathrm{U}(1)$-current.
 It appears to be known that in this case the conformal covariance cannot be obtained \cite{Nakayama18}.
 \item Stress-energy tensor is not uniquely determined, even if it is assumed to have the canonical dimension,
 as in the $\mathrm{U}(1)$-current net. Which one is the ``physically'' correct stress-energy tensor?
 \end{itemize}

 Generalized free fields (c.f.\! Section \ref{generalized}) are referred to as ``fake counterexamples'' in \cite[Section 3.1]{DR09},
 because they do not possess stress-energy tensor.
 In fact, one can construct a stress-energy tensor
 which generate the Poincar\'e symmetry, but it turns out very singular \cite[Section 3]{DR03}.
 Yet, the smeared stress-energy tensor gives a quadratic form on the Wightman domain,
 which means that the Wightman-type assumptions are crucial.
 Besides, the existence of stress-energy tensor in this sense does not depend on the scaling dimension $\Delta$,
 while the extension of $U$ to the M\"obius group (or the conformal group in the case of $(3+1)$-dimensions) depends on $\Delta$.
 Therefore, one might conclude that the absence of conformal covariance
 in generalized free fields for certain $\Delta$ is due to the representation theory of the M\"obius/conformal group,
 and not to the absence of stress-energy tensor.

\subsection{Open problems}

We do not know whether (HK\ref{hsmib}) or (HK\ref{sa}) can be dropped or weakened:
(HK\ref{hsmib}) is surely a necessary condition for M\"obius covariance,
but we do not have an example where (HK\ref{isotony})--(HK\ref{bwcone}) hold
but (HK\ref{hsmib}) fails.
A natural candidate for a counterexample is the generalized free field
with the measure $m^{s}d\mu$ where $s < 1$. It is clear that the {\it field} cannot be $\uMob\times\uMob$
covariant because such a field should possess negative scaling dimension, which violates unitarity.
However, to show that the {\it net} generated by it cannot be extended to a M\"obius covariant net,
one has to show that there is no extension of $U$ which makes the net covariant. The latter is
more difficult than the (non-)covariance of the field, because the possible extension is determined
by the modular group, which is difficult to compute if it is not {\it a priori} M\"obius covariant.

One would naturally expect that a similar result should hold in $d=3+1$.
However, this problem is widely open.
Differently from $d=1+1$ where the group of symmetry is a product $\uMob\times\uMob$,
the conformal group in four dimensions is locally isomorphic to $\mathrm{SU}(1,1)$,
which is a simple Lie group \cite{Mack77}.
In order to extend the representation $U$ of the Poincar\'e-dilation group,
one might use the modular groups of double cones, but they must act in a compatible way.
To us it is unclear how this can be obtained from Poincar\'e-dilation covariance,
even with some additivity property or any other natural condition.

Another important problem raised by the examples in Section \ref{generalized}
is which nice properties are expected to be inherited by the dual net,
under which conditions.
As we have shown that the dual net of a M\"obius covariant net in Section \ref{generalized} fails to have the split property,
it is a natural question whether the dual nets of $\vir_c$ have the split property
(if not, they could not be conformally covariant \cite{MTW17}).

\appendix

\section{The two-dimensional conformal spin-statistics theorem}\label{spinstatistics}
We saw in the proof of Theorem \ref{th:mob} that for
a Haag-Kastler net $(\A, U, \Omega)$ satisfying (HK\ref{isotony})--(HK\ref{hsmib})
$U$ extends to $\uMob\times\uMob$ and the net $\A$ remains locally covariant with respect to $U$.
To conclude it, we have to show that $U$ factors through $G$, where $G$ is the quotient of
$\uMob\times\uMob$ by the normal subgroup generated by $\{\rho_{-2\pi}\times\rho_{2\pi}\}$
(see Section \ref{2dminkowski}).
A similar statement for nets on $\RR$ is known as the conformal spin-statistic theorem
\cite{BGL93, GL96}, and the two-dimensional version is known to experts (see e.g.\! \cite[Proposition 2.1]{KL04-2}),
but proof is missing in literature. For the reader's convenience, we present a self-contained proof.

Let $j$ denote the spacetime reflection, namely $j(a_\L,a_\R)=(-a_\L,-a_\R)$.
We denote by  $j_W:=gjg^{-1}$ the spacetime reflection with respect to the wedge $W$, where $W=gW_R$, $g\in\Poi$.
\begin{lemma}\label{lemma:covariancej}
Let $(\A,U,\Omega)$ a Poincar\'e covariant net of von Neumann algebras on wedges satisfying (HK\ref{isotony})--(HK\ref{bw})
of Section \ref{haag-kastler}. Then $U$ extends to the group $\poincare \rtimes \ZZ_2$ by
the modular conjugation:
\[
 U(j):= J_{\A(W_R),\Omega}=J_{\A(W_L),\Omega}
\]
and wedge algebras are covariant with respect to this extension, namely,
$\Ad U(j)\A(W) = \A(jW)$.
\end{lemma}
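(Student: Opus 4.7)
The plan is to (a) establish wedge duality $\A(W_\R) = \A(W_\L)'$, (b) deduce that the modular conjugations $J_{\A(W_\L),\Omega}$ and $J_{\A(W_\R),\Omega}$ coincide, (c) use Borchers's theorem to show that this common modular conjugation implements the reflection $j$ on $\poincare$, and (d) conclude wedge covariance.

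For (a), locality (HK\ref{locality}) gives $\A(W_\R) \subset \A(W_\L)'$. The vector $\Omega$ is cyclic for $\A(W_\R)$ by (HK\ref{vacuum}) applied to any double cone in $W_\R$ together with isotony, and separating as a cyclic vector for the larger algebra $\A(W_\L)'$. By (HK\ref{bw}), $\Delta^{it}_{\A(W_\L),\Omega}=U(\Lambda_{W_\L}(2\pi t))$, so the modular automorphism group of $\A(W_\L)'$ is $\Ad U(\Lambda_{W_\L}(-2\pi t))$. Since $\Lambda_{W_\L}(-t)$ fixes the wedge $W_\R$, Poincar\'e covariance (HK\ref{poincare}) shows this group preserves $\A(W_\R)$. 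Takesaki's theorem on globally invariant subalgebras, together with the cyclicity of $\Omega$ for $\A(W_\R)$, then forces $\A(W_\R) = \A(W_\L)'$.

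For (b), wedge duality yields $\Delta_{\A(W_\R),\Omega} = \Delta^{-1}_{\A(W_\L),\Omega}$, and equating the two expressions $J_{\A(W_\R),\Omega}\Delta^{1/2}_{\A(W_\R),\Omega}$ and $J_{\A(W_\L),\Omega}\Delta^{-1/2}_{\A(W_\L),\Omega}$ for the Tomita operator of $\A(W_\R) = \A(W_\L)'$ gives $J_{\A(W_\R),\Omega} = J_{\A(W_\L),\Omega}$; denote this common operator $U(j)$. For (c), each of the two lightlike translation subgroups $\tau_\pm$ admits a semigroup direction preserving $\A(W_\L)$ whose generator has definite-sign spectrum by (HK\ref{positiveenergy}), so Theorem \ref{thm:bor} gives $U(j)U(\tau_\pm(t))U(j) = U(\tau_\pm(-t))$ for each; since these subgroups generate all of $\RR^{1+1}$, we obtain $U(j)U(\tau)U(j) = U(-\tau)$ for every translation. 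Moreover, Lemma \ref{lem:fix} applied to $U(\Lambda_{W_\L}(s))$, which preserves $\A(W_\L)$ and $\Omega$, yields $[U(j), U(\Lambda_{W_\L}(s))] = 0$. Since $j$ reverses translations and commutes with Lorentz boosts, these identities combine to $U(j)U(g)U(j) = U(jgj^{-1})$ for every $g \in \poincare$, producing an extension of $U$ to $\poincare \rtimes \ZZ_2$.

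For (d), wedge duality and Tomita--Takesaki give $\Ad U(j)\A(W_\L) = \A(W_\L)' = \A(W_\R) = \A(jW_\L)$, and symmetrically $\Ad U(j)\A(W_\R) = \A(jW_\R)$. Any wedge is a Poincar\'e image of $W_\L$ or $W_\R$, so combining Poincar\'e covariance (HK\ref{poincare}) with the relation $U(j)U(g)U(j) = U(jgj^{-1})$ yields $\Ad U(j)\A(W) = \A(jW)$ for every wedge $W$. The main obstacle is the wedge duality step (a), which requires a careful invocation of Takesaki's theorem exploiting the fact that the Bisognano--Wichmann property turns the modular flow of $\A(W_\L)'$ into a geometric Poincar\'e symmetry; once (a) is secured, the identification of $U(j)$ as the reflection via Borchers's theorem and the propagation of covariance to all wedges are essentially bookkeeping.
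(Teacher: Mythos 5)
Your proposal is correct and follows essentially the same route as the paper: wedge duality as a consequence of the Bisognano--Wichmann property, the identification $J_{\A(W_\R),\Omega}=J_{\A(W_\L),\Omega}$, Borchers's Theorem \ref{thm:bor} for the (lightlike) translations together with Lemma \ref{lem:fix} (equivalently (HK\ref{bw})) for the boosts, and propagation to general wedges by Poincar\'e covariance. The only difference is that you prove wedge duality in-line via the Takesaki-type invariance argument, whereas the paper simply cites it; since that citation is exactly this standard argument, the two proofs coincide in substance.
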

\begin{proof}
From the Bisognano-Wichmann property (HK\ref{bw}), wedge duality $\A(W_\R)=\A(W_\L)'$ follows (see e.g.\! \cite[Proposition A.2]{Tanimoto12-2}).
Locality (HK\ref{locality}) and wedge duality properties together
imply that $J_{\A(W_\R),\Omega}\A(W_\R)J_{\A(W_\R),\Omega}=\A(W_\R)'=\A(W_\L).$
Again by (HK\ref{bw}),
$J_{\A(W_\R)}$ commutes with the boosts and by Theorem \ref{thm:bor}
it satisfies the right commutation relations with translations.
For a general wedge $W = \tau(a)W_\R, a\in\RR^{1+1}$,
we have
\begin{align*}
\Ad J_{\A(W_\R)} (\A(W)) &= \Ad \left[J_{\A(W_\R)} U(\tau(a))\right](\A(W_\R))
= \Ad \left[U(\tau(-a))J_{\A(W_\R)}\right](\A(W_\R)) \\
&= \Ad U(\tau(-a))\A(W_\L) = \A(\tau(-a)W_\L), 
\end{align*}
which is the covariance of wedge algebras.
\end{proof}
\begin{remark}
We do not know whether covariance holds for the whole net $\A$
including double cones. It does if we assume Haag duality for double cones,
but that assumption might be too strong for M\"obius covariant nets,
as we saw in Section \ref{dual} that going to the dual net may break M\"obius covariance. 
\end{remark}

Consider the local action of $\uMob\times\uMob$ on the Minkowski space
given by $(g_\L\times g_\R)\cdot(a_\L,a_\R)=(g_\L a_\L, g_\R a_\R)$.
We identify $\RR$ with the universal covering of $S^1$,
and with this identification, $\uMob\times\uMob$ acts on $\RR^2$.
The Minkowski space can be identified with $(-\pi,\pi)\times (-\pi,\pi) \subset \RR \otimes \RR$ and
we denote it by $M_0$.
See \cite{BGL93}.
Any diamond with center in $(a,b)\in\RR^2$
of the form $(a-\pi,a+\pi)\times (b-\pi,b+\pi)$ is a copy of the Minkowski spacetime
and is denoted by $M_{(a,b)}$.
Let  $\theta\mapsto \rho_\theta \in\uMob$ and $t\mapsto \tau_t\in\uMob$
be the lifts of the rotation and translation groups, respectively.
Copies $\{M_{(a,b)}\}$ of the Minkowski spacetimes are transformed to each other
by some element in $\RR\times\RR\ni(\theta_1,\theta_2)\mapsto \rho_{\theta_1}\times \rho_{\theta_2}\in\uMob\times\uMob$.
Let $t\mapsto \Lambda_W(t)$  be the lift of the boosts associated with the wedge $W$.
Wedges and double cones are diamonds with sides shorter than $2\pi$, and in $\RR^2$ they are indistinguishable.
We call then double cones (in $\RR^2$).

\begin{figure}[ht]\centering
\begin{tikzpicture}[scale=0.75]
        \draw [->] (-3,0) --(5,0) node [above right] {$\frac{a_\R-a_\L}{\sqrt{2}}$};
         \draw [->] (0,-3)--(0,4) node [ right] {$\frac{a_\R+a_\L}{\sqrt 2}$};
          \draw [thick] (-2,0)-- (0,2) node [ left] {$M_0$};
          \draw [thick] (0,2)-- (2,0);
          \draw [thick] (-2,0)-- (0,-2);
          \draw [thick] (2,0)-- (0,-2);
          
           \draw [ thick] (0,0)-- (1,1);
          \draw [thick] (0,0)-- (1,-1);
          
\draw [ thick] (-0.4,0) -- (-0.9,0.5);
\draw [thick] (-0.4,0) -- (-0.9,-0.5);
\draw [thick] (-1.4,0) -- (-0.9,0.5);
\draw [thick] (-1.4,0) -- (-0.9,-0.5);
          
          \draw [thick,dotted] (-2,-3)--(-2,4);
          \draw [thick,dotted] (2,-3)--(2,4);
          
          \draw [thin,->] (-3,-3) -- (3.5,3.5) node [above right] {$a_R$};
          \draw [thin,->] (3,-3) -- (-3.5,3.5) node [above left] {$a_L$};
  \fill [color=black,opacity=0.2]
               (-0.4,0) -- (-0.9,0.5) -- (-1.4,0) --(-0.9,-0.5);
           \fill [color=black,opacity=0.6]
              (0,0) -- (1,1) -- (2,0) -- (1,-1);

\end{tikzpicture}
\caption{The Minkowski space $M_0$ in $\RR^2$. The cylinder is obtained by identifying the dotted lines.
The dark grey region is the right wedge $W_\R\subset M_0$ and the light grey region is a doublecone $O\subset M_0$}
\label{fig:1}
\end{figure}
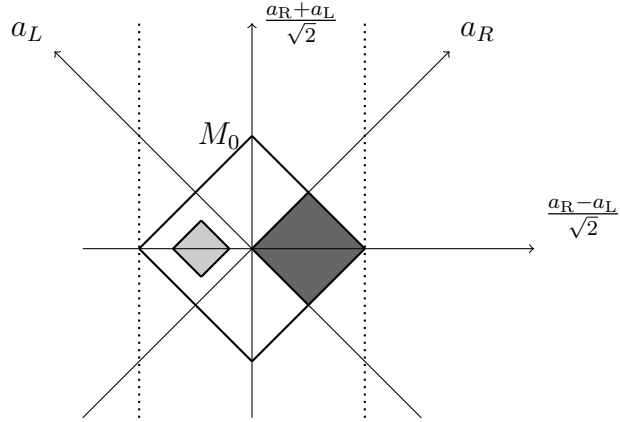

Let $(\A,U,\Omega)$ satisfy (HK\ref{isotony})--(HK\ref{bw}) and
assume the local $\uMob\times\uMob$-covariance:
\begin{itemize}
\item[(LM)] $U$ extends to $\uMob\times \uMob$ and $U(g) \A(O)U(g)^* =\A(gO),$ for $g$ in a small
neighborhood $\U_O$ of $\i\times\i\in\uMob\times \uMob$
such that $\U_O O$ stays in $M_0$.
\end{itemize}
The net $\A$ of von Neumann algebras can be extended to any double cone in $\RR^2$ (diamonds with sides shorter than $2\pi$)
by covariance. Isotony (HK\ref{isotony}) and locality (HK\ref{locality}) continue to hold
for double cones included in one copy $M$ of Minkowski space.
For a double cone which is a wedge $W$ in $\M_0$, there are a priori two definitions: $\A(W) = \bigvee_{D\subset W}\A(D)$
and by covariance. But actually they coincide by the continuity of $U$,
hence we only have to deal with $\{\A(D)\}$ where $D$ is a ``double cone'' in $\RR$.

If $\Ad U(\rho_{-2\pi}\times \rho_{2\pi})\A(O) = \A(O)$ holds for any double cone in $\RR^2$,
we can identify points connected by $\rho_{-2\pi}\times \rho_{2\pi}$ and
obtain a net on $\cyl$. In this case, we say that $(\A, U, \Omega)$ reduces to the cylinder $\cyl$.

\begin{lemma} \label{lemma:cyl} 
Let $(\A,U,\Omega)$ be a $\uMob$ covariant net satisfying (HK\ref{isotony})--(HK\ref{bw}) and (LM).
Then, it reduces to $\cyl$.
\end{lemma}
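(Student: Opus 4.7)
Set $V := U(g)$ with $g := \rho_{-2\pi}\times\rho_{2\pi}$. Since $g$ is central in $\uMob\times\uMob$, $V$ commutes with all of $U(\uMob\times\uMob)$; in particular, by Bisognano-Wichmann (HK\ref{bw}) and Poincar\'e covariance transported to every wedge, $V$ commutes with the modular group $\Delta^{it}_{\A(W)} = U(\Lambda_W(2\pi t))$ of each wedge algebra. Moreover $V\Omega = \Omega$, since the extension of $U$ from the Poincar\'e group to $\uMob\times\uMob$ in (LM) is built from modular groups of wedges together with geometric Poincar\'e transformations, all of which preserve the vacuum.

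The main construction is geometric transport along the continuous path $g(s) := \rho_{-2\pi s}\times\rho_{2\pi s}$, $s\in[0,1]$, from $\i\times\i$ to $g$. Fixing a double cone $O\subset M_0$, I would subdivide $[0,1]$ finely enough that each incremental element $g(s_{k+1})g(s_k)^{-1}$ lies in an (LM)-neighborhood adapted to the current region $g(s_k)\cdot O$, keeping it within a single Minkowski copy $M_{(a,b)}\subset\RR^2$. Iterating (LM) after extending $\A$ to double cones of $\RR^2$ by covariance, this yields $\Ad V(\A(O)) = \A(gO)$. It then remains to identify $\A(gO) = \A(O)$ as von Neumann algebras in $\B(\H)$: by Tomita-Takesaki theory, this reduces to matching the modular data with respect to $\Omega$. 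The modular operators agree by the preceding paragraph. For the modular conjugations, Lemma \ref{lemma:covariancej} gives $J_{\A(W)} = U(j_W)$, and the computation $j_W g j_W^{-1} = g^{-1}$ in $\uMob\times\uMob \rtimes \ZZ_2$ yields $J_{\A(W)} V J_{\A(W)} = V^{-1}$, equivalently $\Ad V(J_{\A(W)}) = J_{\A(W)} V^{-2}$. Hence $J_{\A(gW)}$ and $J_{\A(W)}$ coincide precisely when $V^2 = \1$.

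The principal obstacle is this last identity $V^2 = \1$, the two-dimensional conformal spin-statistics condition: it says the representation $U$ of $\uMob\times\uMob$ factors through $G = (\uMob\times\uMob)/\langle g\rangle$. My plan to establish it is to express $V^2$ as a suitable product of modular conjugations $J_{\A(W_i)}$, each an involution, whose composition encodes the identity of $G$, in analogy with the one-dimensional Guido-Longo argument extracting $U(\rho_{2\pi}) \in \{\pm\1\}$ from involutive relations among modular conjugations. An alternative is to use that $V$ lies in $U(\uMob\times\uMob)'$ by centrality and to show this commutant reduces to scalars on the vacuum representation, exploiting uniqueness of the vacuum together with the Reeh-Schlieder property. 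Once $V^2 = \1$, the identification in the previous paragraph closes the argument and $\A$ descends to $\cyl$.
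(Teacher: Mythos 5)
There is a genuine gap, and in fact your route runs against the logical order of the paper. You correctly isolate that your argument stands or falls with the identity $V^2=\1$ for $V=U(\rho_{-2\pi}\times\rho_{2\pi})$, but this is precisely the two-dimensional conformal spin--statistics theorem (Theorem \ref{th:ss}), which in the paper is proved \emph{after} and \emph{by means of} Lemma \ref{lemma:cyl}: the key input $J_{\A(W)}=U(j_W)$ on double cones (Lemma \ref{lm:bwj}) already presupposes that the net lives on $\cyl$. So deferring $V^2=\1$ to a ``plan'' modelled on the Guido--Longo argument is not a deferral of a technicality but of the hard part, and executed with the tools available at this stage it would be circular. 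Your fallback (triviality of $U(\uMob\times\uMob)'$) does not help either: $U$ is in general highly reducible, and in any case what must be shown is that $\Ad V$ \emph{preserves} each $\A(O)$, not that $V$ commutes with it. Separately, even granting $V^2=\1$, the step ``$\A(gO)$ and $\A(O)$ have the same modular data with respect to $\Omega$, hence coincide'' is false as stated: two von Neumann algebras with a common cyclic separating vector and identical $(J,\Delta)$ need not be equal (e.g.\ $\M$ and $\M'$ in a tracial standard form share $J$ and $\Delta=\1$). Equality of modular data only determines the standard subspace $\overline{\M_{sa}\Omega}$, not $\M$; to conclude one needs an \emph{inclusion} plus a Takesaki-type argument (Lemma \ref{inc}), and your proposal never produces such an inclusion.

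The paper's proof sidesteps all of this with a purely geometric use of wedge duality (which follows from (HK\ref{bw}) and locality). The right wedge $W_\R$ of $M_0$ is simultaneously the \emph{left} wedge $W^\pi_\L$ of the overlapping copy $M_{(-\pi,\pi)}$; computing its commutant via wedge duality inside $M_0$ gives $\A(W_\L)$, while computing it inside $M_{(-\pi,\pi)}$ gives $\A(W^\pi_\R)=\A\bigl((\rho_{-2\pi}\times\rho_{2\pi})W_\L\bigr)$. Hence $\A(W_\L)=\A\bigl((\rho_{-2\pi}\times\rho_{2\pi})W_\L\bigr)$ with no reference to $V^2=\1$; translations extend this to all wedges, and viewing a wedge of $M_0$ as a double cone of $M_{(-\epsilon,\epsilon)}$ transfers it to double cones by covariance. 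If you want to salvage your approach, the missing ingredient is exactly this duality relation: it supplies the inclusion $\A(gW_\L)\subset\A(W_\L)$ (locality in $M_{(-\pi,\pi)}$ plus wedge duality in $M_0$) that would let a modular-theoretic equality argument go through --- but at that point you have reproduced the paper's proof and no longer need spin--statistics at all.
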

\begin{proof}
We show that for any double cone $O \subset M_0$
it holds that $\Ad U(\rho_{-2\pi}\times \rho_{2\pi})\A(O) = \A(O)$.
As any double cone on $\RR^2$ are contained in a copy $M$ of the Minkowski space,
and the net $\{\A(O)\}$ is defined by covariance with respect to $\rho\times\rho$,
this suffices to conclude
that $\Ad U(\rho_{-2\pi}\times \rho_{2\pi})\A(O) = \A(O)$ holds for any double cone.

Let $W_{\R / \L}$ and $W^\pi_{\R / \L}$  be the right and left wedges in $M_0$
and $M_{(-\pi,\pi)}=(\rho_{-\pi}\times \rho_{\pi})M_0$, respectively.
Note that $W_\R$ coincides with the left wedge $W^\pi_\L$ in $M_{(-\pi,\pi)}$, see Figure \ref{fig:wed}.
Therefore, by wedge duality, $\A(W_\R)'=\A(W_\L)$ and $\A(W_\R)'=\A(W^\pi_\L)'=\A(W_\R^\pi)$. Altogether,
\begin{equation}\label{eq:period}
A(W_\L) = \Ad U(\rho_{-2\pi}\times \rho_{2\pi})(\A(W_\L))=\A(W^\pi_\R).
\end{equation}

\begin{figure}[ht]\centering 
\begin{tikzpicture}[scale=0.75]
         \draw [->] (-3,0) --(5,0);
         \draw [->] (0,-3)--(0,4); 
         \draw [thick] (-2,0)-- (0,2) node [ left] {$M_0$};
          \draw [thick] (0,2)-- (2,0);
          \draw [thick] (-2,0)-- (0,-2);
          \draw [thick] (2,0)-- (0,-2);
          
          \draw [thick] (0,0)-- (2,2)  node [right] {$M_\pi$};
          \draw [thick] (2,2)-- (4,0);
          \draw [thick] (0,0)-- (2,-2);
          \draw [thick] (4,0)-- (2,-2);
          
          \draw [thick] (0,0)-- (-1,1);
          \draw [thick] (0,0)-- (-1,-1);

          \draw [thick] (2,0)-- (3,1);
          \draw [thick] (2,0)-- (3,-1);
          
                \draw [thick,dotted] (-2,-3)--(-2,4);
          \draw [thick,dotted] (2,-3)--(2,4);
          \fill [color=black,opacity=0.2]
               (-2,0) -- (-1,1) -- (0,0) -- (-1,-1);
           \fill [color=black,opacity=0.2]
               (4,0) -- (3,1) -- (2,0) -- (3,-1);
           \fill [color=black,opacity=0.6]
               (0,0) -- (1,1) -- (2,0) -- (1,-1);
\end{tikzpicture}
\caption{Light grey areas are wedge regions in the two copies of the Minkowski spacetimes $M_\textbf{0}$ and $M_{\pi}$.
Equation \eqref{eq:period} shows that algebras associated with these regions are the equal.}
\label{fig:wed}
\end{figure}
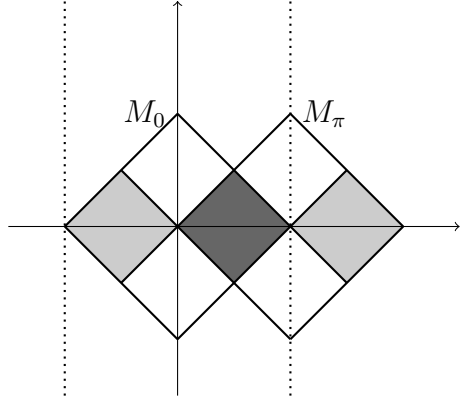

By composing with an appropriate $\tau_t\times \tau_s$, \eqref{eq:period} holds for any wedge $W$ in $M_0$.
It also holds for double cones. Indeed, $W_\R$ is a double cone in $M_{(-\epsilon,\epsilon)}$ with $\epsilon>0$
(see Figure \ref{fig:epsilon})
and by covariance one can infer that for any double cone $O\subset M_0$ 
\[
\A(O)=\Ad U(\rho_{-2\pi}\times \rho_{2\pi})(\A(O)) = \A((\rho_{-2\pi}\times \rho_{2\pi})O).
\]

\begin{figure}[ht]\centering
\begin{tikzpicture}[scale=0.75]
        \draw [->] (-3,0) --(5,0);
         \draw [->] (0,-3)--(0,4);
          \draw [thick] (-2,0)-- (0,2) node  [left] {$M_0$};
          \draw [thick] (0,2)-- (2,0);
          \draw [thick] (-2,0)-- (0,-2);
          \draw [thick] (2,0)-- (0,-2);

          \draw [thick] (-1,0)-- (1,2) node [right] {$M_{(-\epsilon,\epsilon)}$};
          \draw [thick] (1,2)-- (3,0);
          \draw [thick] (-1,0)-- (1,-2);
          \draw [thick] (3,0)-- (1,-2);

           \draw [thick] (0,0)-- (1,1);
          \draw [thick] (0,0)-- (1,-1);          
          
          
                   \fill [color=black,opacity=0.6]
               (0,0) -- (1,1) -- (2,0) -- (1,-1);
\end{tikzpicture}
\caption{The shaded region represents the right wedge in $M_0$ and a double cone in $M_{(-\epsilon,\epsilon)}$}
\label{fig:epsilon}
\end{figure}
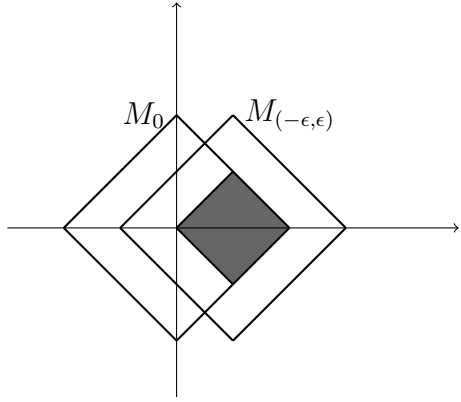

\end{proof}

\begin{lemma}\label{lm:bwj}
Let $(\A,U,\Omega)$ be a $\uMob$ covariant net satisfying (HK\ref{isotony})--(HK\ref{bw})
and (LM).
Then, for any wedge $W$ in a copy of Minkowski space $M$, the wedge modular conjugation $J_W$ implements $j_W$ on $M$.
\end{lemma}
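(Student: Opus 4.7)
The plan is to reduce to the already established case of $W_\R$ in $M_0$ from Lemma \ref{lemma:covariancej} by transporting via an element $g\in \uMob\times\uMob$ that carries $W_\R$ to $W$ and $M_0$ to $M$.

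First, I would verify that the representation $U$ of $\uMob\times\uMob$ supplied by Theorem \ref{th:mob} preserves $\Omega$. This is essentially by construction, since each new generator appearing in the proof of Lemma \ref{lem:pxp} and Theorem \ref{th:mob} is a product of a modular unitary with a geometric symmetry of the net, both of which fix $\Omega$.

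Next, I would upgrade the local covariance (LM) to a global statement on the cylinder: for every $g\in \uMob\times\uMob$ and every double cone $O\subset\cyl$,
\[
 U(g)\A(O)U(g)^{*}=\A(gO).
\]
Since $\uMob\times\uMob$ is connected, any $g$ factors as a finite product $g_1g_2\cdots g_n$ with each $g_k$ in a small neighborhood of the identity, and one iterates (LM) along the chain $O,\,g_nO,\,g_{n-1}g_nO,\ldots$. The role of Lemma \ref{lemma:cyl} is precisely to legitimate this chain on $\cyl$ so that it does not exit the domain of the net; (LM) itself extends from $M_0$ to an arbitrary Minkowski copy $M_{(a,b)}$ by conjugating with $\rho_{a}\times\rho_{b}$ in view of the definition of $\A$ on $\cyl$.

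With those preparations, given a wedge $W$ in a Minkowski copy $M$, pick $g\in \uMob\times\uMob$ such that $gW_\R=W$ and $gM_0=M$. Global covariance gives $U(g)\A(W_\R)U(g)^{*}=\A(W)$, and since $U(g)\Omega=\Omega$, the standard intertwining law for Tomita operators of algebras sharing the same cyclic and separating vector (compare Lemma \ref{lem:fix}) gives
\[
 J_W=U(g)\,J_{W_\R}\,U(g)^{*}.
\]
For any double cone $O\subset M$, set $O':=g^{-1}O\subset M_0$; Lemma \ref{lemma:covariancej} yields $\Ad J_{W_\R}(\A(O'))=\A(jO')$, and conjugating back,
\[
 \Ad J_W(\A(O))=U(g)\A(jO')U(g)^{*}=\A(gjg^{-1}O)=\A(j_W O),
\]
which is the claim, since $j_W=gjg^{-1}$ by definition.

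The main obstacle is the upgrade from local to global covariance: one must check that (LM) genuinely transports to each Minkowski copy via $\rho\times\rho$, that the chain of small transformations keeps the intermediate double cones inside a region where (LM) applies (which is exactly where Lemma \ref{lemma:cyl} is essential, by making the action well-defined on $\cyl$), and that the resulting iterated relation is consistent. Once global covariance is in hand, the transport of the Tomita operator and the geometric action of $J_W$ on $\A(O)$ for $O\subset M$ reduce to standard modular-theoretic bookkeeping together with the already-proved case $W=W_\R$.
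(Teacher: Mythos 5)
Your reduction collapses at its final step: you invoke Lemma \ref{lemma:covariancej} to claim $\Ad J_{W_\R}(\A(O'))=\A(jO')$ for an arbitrary double cone $O'\subset M_0$, but that lemma only establishes covariance of \emph{wedge} algebras under $U(j)=J_{\A(W_\R),\Omega}$. The Remark immediately following it stresses that covariance of double cone algebras under $U(j)$ is precisely what is \emph{not} known at that stage (it would follow from Haag duality for double cones, which is not assumed and may fail, cf.\ Section \ref{dual}). Since Lemma \ref{lm:bwj} is exactly the statement that $J_W$ acts geometrically on the double cone algebras of $M$, your argument assumes the essential content of what is to be proved; the transport $J_W=U(g)J_{W_\R}U(g)^*$ and the bookkeeping $\Ad J_W(\A(O))=\A(j_W O)$ are the easy part.

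What the paper actually does, and what is missing from your proposal, is the cylinder trick: after extending the net to $\cyl$ (Lemma \ref{lemma:cyl}), the wedge $W_{\R,1}\subset M_0$ is simultaneously a double cone in the shifted copy $M_{(-\pi,\pi)}$. Applying the wedge covariance of Lemma \ref{lemma:covariancej} to $W_{\L,1}$, and using wedge duality together with the identification \eqref{eq:period} coming from the cylinder, one finds that $\Ad U(j)$ sends this particular double cone algebra of $M_{(-\pi,\pi)}$ to the algebra of its reflection by $j_{W^\pi_\R}$; Poincar\'e covariance then propagates this to all double cones of $M_{(-\pi,\pi)}$, and conjugation by $U(\rho_{-2\pi}\times\rho_{2\pi})$ brings the statement back to $M_0$ and thence to any copy. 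Your preliminary steps (vacuum invariance of the extended $U$, upgrading (LM) to covariance on $\cyl$ by iterating small transformations) are reasonable and broadly parallel to the paper's setup, but without a step converting wedge covariance into double cone covariance the proof does not go through. A minor additional point: the extension $U$ and its properties should be taken from the hypothesis (LM), not from Theorem \ref{th:mob}, whose proof relies on this appendix through Theorem \ref{th:ss}, so your appeal to it is circular in the paper's logical order.
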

\begin{proof}
The net can be extended to the cylinder by Lemma \ref{lemma:cyl}.
Let $W_\L=(0,\pi)\times(-\pi,0)$ and $W_{\L,1}=(1,\pi)\times (-\pi,-1) =(\tau(1)\times\tau(-1))W_\L\subset W_\L$
be wedges in $M_0$. The restriction of the net $\A$ on $M_0$
satisfies the assumptions of Lemma \ref{lemma:covariancej}, hence we have
 \begin{equation}
 \label{eq:dcdc}
 \Ad U(j)(\A(W_{\L,1})) = \A (W_{\R,1}),
 \end{equation}
 where $W_{\R,1}=(-\pi,-1)\times (1,\pi)$. Now, note that $U(j)=J_{W_\L}=J_{W_\R}$ and that on $M_{(-\pi,\pi)}$
 the region $W_{\R,1}$ is a double cone (see Figure \ref{fig:wedgej}). Thus, by \eqref{eq:dcdc},
\[
 \Ad U(j)(\A(W_{\R,1})) = \A(W_{\L,1}) = \A((\rho_{-2\pi}\times\rho_{2\pi})W_{\L,1}),
\]
where the last equality follows because the net is defined on $\cyl$.
Note that $(\rho_{-2\pi}\times\rho_{2\pi})W_{\L,1}$ is the double cone
which is obtained by reflecting $W_{\R,1}$ by $j_{W^\pi_\R}$ in $M_{(-\pi, \pi)}$,
namely, this is covariance of double cone algebras in $M_{(-\pi, \pi)}$.
We obtain covariance for any other double cone by $\poincare$-covariance, and
this result can be brought back to $M_0$ by $\Ad U(\rho_{-2\pi}\times \rho_{2\pi})$.

\begin{figure}[ht]\centering
\begin{tikzpicture}[scale=0.75]
         \draw [->] (-3,0) --(5,0);
         \draw [->] (0,-3)--(0,4);
          \draw [thick,dotted] (-2,-3)--(-2,4);
          \draw [thick,dotted] (2,-3)--(2,4);

          \draw [thick] (-2,0)-- (0,2) node [ left] {$M_0$};
          \draw [thick] (0,2)-- (2,0);
          \draw [thick] (-2,0)-- (0,-2);
          \draw [thick] (2,0)-- (0,-2);
          
          \draw [thick] (0,0)-- (2,2)  node [right] {$M_\pi$};
          \draw [thick] (2,2)-- (4,0);
          \draw [thick] (0,0)-- (2,-2);
          \draw [thick] (4,0)-- (2,-2);
          
          \draw [thick] (0,0)-- (-1,1);
          \draw [thick] (0,0)-- (-1,-1);

           \draw[thick](1,0)--(1.5,0.5);
          \draw[thick](1,0)--(1.5,-0.5);
          
          \draw[thick](3,0)-- (2.5,0.5);
          \draw[thick](3,0)-- (2.5,-0.5);

          \draw[thick](-1,0)--(-1.5,-0.5);
          \draw[thick](-1,0)--(-1.5,0.5);

          \draw [thick] (2,0)-- (3,1);
          \draw [thick] (2,0)-- (3,-1);
          
          \fill [color=black,opacity=0.2]
               (-2,0) -- (-1.5,-0.5) -- (-1,0) -- (-1.5,0.5);
           \fill [color=black,opacity=0.2]
               (2,0) -- (2.5,0.52) -- (3,0) -- (2.5,-0.5);
           \fill [color=black,opacity=0.6]
               (1,0) -- (1.5,0.5) -- (2,0) -- (1.5,-0.5);

\end{tikzpicture}
\caption{$U(j)$ transforms the algebras in light grey to dark grey regions (and vice versa). They are wedge and doublecone algebras in the $M_0$ and $M_\pi$ pictures, respectively.}
\label{fig:wedgej}
\end{figure}
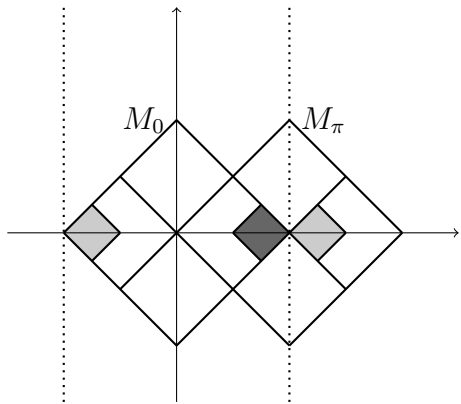
\end{proof}

 Now we can prove the conformal spin-statistics theorem.
\begin{theorem}\label{th:ss}
Let $(\A,U,\Omega)$ satisfy (HK\ref{isotony})--(HK\ref{bw}) and (LM).
Then the representation $U$ of $\uMob \times \uMob$ factors through a
representation of $G$.
\end{theorem}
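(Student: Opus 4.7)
My plan is to show $Z := U(\rho_{-2\pi}\times\rho_{2\pi}) = \1$, since this element generates the kernel of the projection $\uMob\times\uMob \to G$. The element $\rho_{-2\pi}\times\rho_{2\pi}$ is central in $\uMob\times\uMob$ (as $\rho_{\pm 2\pi}$ generates the center of $\uMob$), so $Z$ commutes with every $U(g)$, and in particular with the wedge modular groups $\Delta_W^{it} = U(\Lambda_W(2\pi t))$ from the Bisognano-Wichmann property.

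By Lemma \ref{lemma:cyl}, $\Ad Z$ preserves every $\A(W)$, and $Z\Omega = \Omega$; hence Lemma \ref{lem:fix} applied with $\M = \A(W)$ gives $[Z, S_{\A(W),\Omega}] = 0$, and combined with $[Z,\Delta_W] = 0$ this yields $[Z, J_W] = 0$ for every wedge $W$.

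Next I invoke Lemma \ref{lm:bwj}, which identifies $J_W = U(j_W)$. In $\uMob\times\uMob\rtimes\ZZ_2$, the reflection $j_W$ acts on each $\uMob$-factor by an orientation-reversing involution sending $\rho_\theta$ to $\rho_{-\theta}$, so
\[
 j_W(\rho_{-2\pi}\times\rho_{2\pi})j_W^{-1} = \rho_{2\pi}\times\rho_{-2\pi} = Z^{-1},
\]
which lifts through $U$ to $J_W Z J_W^{-1} = Z^{-1}$. Combined with $[Z, J_W] = 0$ from the previous step, this forces $Z^2 = \1$.

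The final step, upgrading $Z^2 = \1$ to $Z = \1$, is the main obstacle: it is the 2D analogue of distinguishing Bose from Fermi sectors in the 1D conformal spin-statistics theorem of Guido-Longo. I would argue that $\Ad Z$ is an $\Omega$-preserving automorphism of every wedge algebra that commutes with the modular data, with the translation automorphisms, and with $J_W$; adapting the Guido-Longo argument, bosonic locality (HK\ref{locality}) forces such an automorphism of $\A(W)$ to be trivial. Consequently $Z$ commutes pointwise with every local algebra, and by irreducibility ($\bigvee_O \A(O) = \B(\H)$, which follows from uniqueness of the vacuum) $Z$ is a scalar; together with $Z\Omega = \Omega$ and $Z^2 = \1$, this gives $Z = \1$, and $U$ descends to $G$.
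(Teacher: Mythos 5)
Your route to $Z^2=\1$ is essentially sound, modulo one point of rigor: the relation $J_W\,U(\rho_{\theta_1}\times\rho_{\theta_2})\,J_W=U(\rho_{-\theta_1}\times\rho_{-\theta_2})$ does not simply ``lift through $U$'' from the group relation, because $U(j_W)=J_W$ is only known a priori to intertwine the Poincar\'e subgroup correctly (Lemma \ref{lemma:covariancej}), and $\mathrm{\mathbf{P}}\times\mathrm{\mathbf{P}}$ does not generate $\uMob\times\uMob$. One must first show that $J_W$ also commutes with the anti-translation generators $\bar P_{\L},\bar P_{\R}$ (Borchers' theorem applied to $\A(W)$ and the anti-translation semigroups, which makes sense only after the extension to $\cyl$ of Lemma \ref{lemma:cyl}), and then use $L_0=\frac{P+\bar P}{2}$ and antilinearity. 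The genuine gap, however, is your last step. The asserted rigidity --- that an $\Omega$-preserving automorphism of every wedge algebra commuting with the modular data, the translations and $J_W$ must be trivial --- is false: any net carrying a one-parameter group of inner symmetries $V(s)$, e.g.\ the examples in Section \ref{internal} of this very paper, gives $\Ad V(s)$ preserving each $\A(O)$ and $\Omega$ and commuting with all of $U(G)$, hence (by Lemma \ref{lem:fix}) with every $\Delta_W^{it}$ and $J_W$, while acting nontrivially. So no soft argument of this kind can upgrade $Z^2=\1$ to $Z=\1$; after your steps $Z$ is merely a self-adjoint unitary, and excluding the $(-1)$-eigenspace (the ``Fermi'' possibility) is precisely the content of the spin--statistics theorem you are trying to prove, so invoking ``the Guido--Longo argument'' here is circular unless you actually run it.

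What closes the argument is a quantitative identity exhibiting a square root of $Z$ as a product of two \emph{commuting} conjugations. Take $O_1=(-\frac\pi2,\frac\pi2)\times(-\frac\pi2,\frac\pi2)$, so that $jO_1=O_1$; by Lemma \ref{lm:bwj}, $\Ad J_{\A(W_\R),\Omega}$ is an antilinear automorphism of $\A(O_1)$ fixing $\Omega$, whence $J_{\A(W_\R),\Omega}$ commutes with $J_{\A(O_1),\Omega}$ and their product is a self-adjoint unitary. On the other hand, $\A(W_\R)=\Ad U(\rho_{-\pi/2}\times\rho_{\pi/2})(\A(O_1))$ combined with the conjugation relation above yields
\[
 J_{\A(W_\R),\Omega}\,J_{\A(O_1),\Omega}=U(\rho_{-\pi}\times\rho_{\pi}),
\]
and squaring gives $Z=U(\rho_{-2\pi}\times\rho_{2\pi})=\bigl(J_{\A(W_\R),\Omega}J_{\A(O_1),\Omega}\bigr)^2=\1$ outright, with no separate passage through $Z^2=\1$. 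This use of the \emph{half}-rotation, realized as a product of two commuting antiunitary involutions, is the idea missing from your proposal.
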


\begin{proof} We have to show that $U(\rho_{-2\pi}\times \rho_{2\pi}) = \1$.
Let  $O_1 = (-\frac\pi 2,\frac\pi 2)\times(-\frac\pi 2, \frac\pi 2)$ be a double cone in $M_0 = (-\pi,\pi)\times(-\pi,\pi)$.
We saw in Lemma \ref{lm:bwj} that $\Ad J_{\A(W_R),\Omega}$ implements the reflection $j$,
hence it is an antilinear automorphism of $\A(O_1)$. In particular, $J_{\A(W_R),\Omega}$ commutes with $J_{\A(O_1),\Omega}$
and the unitary $J_{\A(W_R),\Omega}J_{\A(O_1),\Omega}$ is self-adjoint.

We are going to show that
\begin{align}\label{eq:jju}
 J_{\A(W_R),\Omega}J_{\A(O_1),\Omega}=U(\rho_{-\pi}\times \rho_{\pi}).
\end{align}
Let $L_{0, \L}$ be the generator of $\theta\mapsto U(\rho_\theta\times \i)$, 
$P_\L$ be the  generator of $t\mapsto U(\tau_t\times\i)$ and 
$\bar P_\L$ be the generator of anti-translations $U((\rho_\pi\circ \tau_t\circ\rho_{-\pi})\times \i)$.
Analogously we introduce $L_{0,\R}$, $P_{\R}$ and $\bar P_{\R}$.
As operators on an appropriate domain, 
\[
 L_{0,\R}=\frac{P_{\R}+\bar P_{\R}}2,\qquad  L_{0,\L}= \frac{P_{\L}+\bar P_{\L}}2,
\]
see e.g.\! \cite[Proposition 1]{Koester02}. By applying Theorem \ref{thm:bor}
to $\A(W_\R)$ and translations or anti-translations,
$J_{\A(W_R),\Omega}$ commutes with each of $P_{\R}, \bar P_{\R}, P_{\L}, \bar P_{\L}$,
thus by antilinearity of $J_{\A(W_R)}$,
\begin{align*}
 J_{\A(W_R),\Omega}U(\rho_{\theta_1}\times \rho_{\theta_2})J_{\A(W_R),\Omega}
 &=J_{\A(W_R),\Omega}\cdot e^{i\theta_1 L_{0,\L} + i\theta_2 L_{0,\R}}\cdot J_{\A(W_R),\Omega} \\
 &=e^{-i\theta_1 L_{0,\L} - i\theta_2 L_{0,\R}} = U(\rho_{-\theta_1}\times \rho_{-\theta_2}).
\end{align*}
The claimed equation \eqref{eq:jju} follows since $\A(W_R) = \Ad U(\rho_{-\frac\pi 2}\times \rho_{\frac \pi 2}){\A(O_1)}$,
therefore,
\[
 J_{\A(W_R),\Omega} = U(\rho_{-\frac\pi 2}\times \rho_{\frac \pi 2})J_{\A(O_1),\Omega}U(\rho_{-\frac\pi 2}\times \rho_{\frac \pi 2})^*.
\]
From Equation \eqref{eq:jju} we conclude that
$U(\rho_{-2\pi}\times \rho_{2\pi}) = (J_{\A(W_R),\Omega}J_{\A(O_1),\Omega})^2 = \1$, hence the Lemma. 
\end{proof}

\section{Basic properties of direct integrals}\label{directintegral}
 Here we follow \cite[Section II.1]{Dixmier81}
 and supply some additional results concerning direct integral of real and complex Hilbert spaces.
 
 Given a field of Hilbert spaces $m\mapsto \H_m$ on a standard measure space $(X,\mu)$
 we can construct the direct integral Hilbert space $\int_X^\oplus \H_m d\mu(m)$ if the field is $\mu$-measurable:
 This definition requires and depends on the choice of a linear subspace $\mathcal{S}$
 of $\Pi_{m\in X}\H_m$ which are by definition $\mu$-measurable vector fields 
 ($\mathcal{S}$ must (i) consist of fields whose norm is measurable,
 (ii) be complete in the sense that it contains any vector field whose pointwise inner product with any other field in $\mathcal{S}$ is measurable
 (iii) and contain a sequence which is total at any point $m \in X$,
 see \cite[Section II.1.3, Definition 1]{Dixmier81}). Note that given a sequence of measurable vector field $\xi_n$
 $\mu$-a.e.\! pointwise converging to $\xi$, namely $\|\xi_n(m)-\xi(m)\|\overset{n\to\infty}\longrightarrow 0$
 for $\mu$-a.e.\! $m\in X$, then $\xi$ is a $\mu$-measurable vector field.
 We also recall that a vector field of bounded operators $m\mapsto T_m\in \B(\H_m)$ is $\mu$-measurable
 if for any $\mu$-measurable field $m\mapsto \xi(m)\in\H_m$ then $m\mapsto T_m\xi(m)\in \H_m$ is $\mu$-measurable. 
 In our concrete case, we take $X = \RR, \H_m = L^2(\RR, \frac{dp}{2\omega_m(p)})$ and $\mathcal S$ consists of Lebesgue-measurable functions in $\RR\times\RR$. 

 Most of the results \cite[Section II.1]{Dixmier81} which we need are written for complex Hilbert spaces,
 but actually one can consider direct integral of real Hilbert spaces and similar results
 hold\footnote{We use caligraphic letters (such as $\H_m$) for complex Hilbert spaces and
 roman letters (such as $H_m$) for real Hilbert spaces.}.
 If $\H = \int^\oplus_X \H_m d\mu(m)$ with the scalar product $\<\cdot,\cdot\>$ is the direct integral
 of complex Hilbert spaces $\H_m$ with the scalar product $\re \<\cdot,\cdot\>_m$,
 $\H$ as a real Hilbert space with $\re \<\cdot,\cdot\>$ can be seen as the direct integral
 of real Hilbert spaces $\H_m$ with $\re\<\cdot, \cdot \>_m$: the $\mu$-measurable set $\mathcal{S}$ can be regarded as the set of
 real $\mu$-measurable vector fields
 ((i) The norm does not change. (ii) If $\xi \in \mathcal{S}$, then also $i\xi \in \mathcal{S}$
 and $\<\eta, \xi\>_m$ is measurable if and only if both of $\re\<\eta, \xi\>_m$ and
 $\re\<\eta, i\xi\>_m = -\im\<\eta, \xi\>_m$ are measurable. (iii) The metric does not change,
 hence from a total sequence $\{\xi_n\}$ with respect to the complex scalar product
 we can make a sequence $\{\xi_n, i\xi_n\}$ which is total with respect to the real scalar product).

 In order to properly describe the BGL-net associated to a direct integral representation in Section \ref{dual}, we 
 need the following two propositions, see \cite{Halperin62} and \cite[Section II.1.7, Proposition 9]{Dixmier81}, respectively.
 The proofs work for $\CC$ as well as for $\RR$.
 
 \begin{proposition}\label{prop:inters}
Let $\K_1,\ldots,\K_k$ be closed subspaces of a Hilbert space $\H$
and $\K=\bigcap_k \K_k$. Let $P_{\K_j}$ and $P_\K$ be the associated orthogonal projections, then
$\left( P_{\K_1},\ldots P_{\K_k}\right)^n$ strongly converge to $P_\K$, as $n\rightarrow+\infty$.
\end{proposition}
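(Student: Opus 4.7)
This is the classical theorem of Halperin on alternating projections (with von Neumann's result as the case $k=2$), and my plan is to follow the standard approach. Setting $T:=P_{\K_1}\cdots P_{\K_k}$, I would first observe that every $P_{\K_j}$ fixes $\K$ pointwise, so both $T$ and $T^*$ fix $\K$; consequently $\K$ and $\K^\perp$ are both $T$-invariant with $T|_\K=I$, and it suffices to prove $T^n x\to 0$ strongly for every $x\in\K^\perp$.

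The central computation is the telescoping identity obtained from $\|y\|^2=\|Py\|^2+\|(I-P)y\|^2$ for orthogonal projections $P$, applied successively along the chain $T^n x,\;P_{\K_k}T^n x,\;P_{\K_{k-1}}P_{\K_k}T^n x,\ldots,T^{n+1} x$:
\[
\|T^n x\|^2-\|T^{n+1} x\|^2 \;=\; \sum_{j=1}^{k}\bigl\|(I-P_{\K_j})P_{\K_{j+1}}\cdots P_{\K_k}T^n x\bigr\|^2.
\]
This shows $\|T^n x\|$ is monotonically decreasing, hence convergent, and summing over $n$ forces each summand on the right to tend to zero. A back-substitution from $j=k$ down to $j=1$ then yields $(I-P_{\K_j})T^n x\to 0$ strongly for every $j$, and consequently $(I-T)T^n x\to 0$ strongly. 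Extracting a weak cluster point $y$ of $\{T^n x\}$ by weak sequential compactness of bounded sets in Hilbert space, the previous step forces $P_{\K_j}y=y$ for every $j$, so $y\in\K$; combined with $T^n x\in\K^\perp$ this gives $y=0$, whence $T^n x\rightharpoonup 0$ weakly.

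The main obstacle will be promoting weak convergence to strong convergence. My plan is to consider the decreasing sequence of positive contractions $B_n:=T^{*n}T^n$; its strong limit $B$ exists, satisfies $T^*BT=B$, and gives $\|T^n x\|^2=\<B_n x,x\>\to\<Bx,x\>$. The heart of the matter is to show $B=P_\K$. The key lemma here is that $\|Ty\|=\|y\|$ forces $y\in\K$, via the equality case of the chain $\|Ty\|\le\|P_{\K_k}y\|\le\cdots\le\|y\|$: each equality propagates back through the successive projections and places $y$ in each $\K_j$. Combined with the invariance $T^*BT=B$, a standard argument (cf.\ \cite{Halperin62}) forces the range of $B$ to lie in $\K$, giving $B=P_\K$ and hence $\|T^n x\|\to 0$ for $x\in\K^\perp$. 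Since no complex scalar multiplication is used anywhere in the argument, it applies verbatim to real Hilbert spaces as well.
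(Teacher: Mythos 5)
The paper itself does not prove this proposition: it is quoted directly from Halperin's paper \cite{Halperin62}, with only the remark that the argument works over $\RR$ as well as over $\CC$. So your sketch must be judged on its own merits. Its first two thirds are correct and well organized: the reduction to showing $T^n x\to 0$ on $\K^\perp$ for $T=P_{\K_1}\cdots P_{\K_k}$, the telescoping identity, the summability argument giving $(I-P_{\K_j})P_{\K_{j+1}}\cdots P_{\K_k}T^n x\to 0$, the back-substitution yielding $(I-P_{\K_j})T^n x\to 0$ and $(I-T)T^n x\to 0$, the identification of every weak cluster point with $0$, and the construction of $B=\slim T^{*n}T^n$ with $T^*BT=B$ and $\|T^n x\|^2\to \<Bx,x\>$ are all sound.

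The gap sits exactly where you locate ``the heart of the matter,'' and the two ingredients you name do not combine as stated. The relation $T^*BT=B$ says that $\|B^{1/2}Tx\|=\|B^{1/2}x\|$ for every $x$, i.e.\ that the seminorm $\|B^{1/2}\slot\|$ is $T$-invariant; it does \emph{not} say that $T$ preserves the norm of vectors in $\ran B$, so your key lemma ($\|Ty\|=\|y\|\Rightarrow y\in\K$) cannot be applied to elements of $\ran B$, and ``the range of $B$ lies in $\K$'' does not follow from these two facts alone. This is not a pedantic complaint: the entire difficulty of the theorem is concentrated here, since when the subspaces meet at zero angle a sequence of unit vectors can be asymptotically contained in every $\K_j$ while staying far from $\K=\bigcap_j\K_j$, so the qualitative information gathered so far is genuinely insufficient. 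The step can, however, be closed with tools you already have: iterating $T^*BT=B$ gives $\|B^{1/2}T^n z\|=\|B^{1/2}z\|$ for all $n$ and $z$, hence $\|B^{1/2}(I-T)x\|=\|B^{1/2}T^n(I-T)x\|=\|B^{1/2}(I-T)T^n x\|\to 0$ by your step $(I-T)T^n x\to 0$; therefore $B(I-T)=0$, i.e.\ $BT=B$ and, taking adjoints, $T^*B=B$. Now every $y\in\ran B$ is fixed by $T^*=P_{\K_k}\cdots P_{\K_1}$, and your key lemma applied to this reversed product (same subspaces, same intersection) places $y$ in $\K$. Since $B$ acts as the identity on $\K$ (where $T^n=\1$) and is self-adjoint, $B$ is the orthogonal projection $P_\K$; then $\|T^n x\|\to\|P_\K x\|$ together with the weak convergence $T^n x\rightharpoonup P_\K x$ gives strong convergence. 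With this insertion your route is a legitimate alternative to Halperin's original argument, which instead establishes a quantitative inequality of the type $\|x-T^p x\|^2\le k^2\bigl(\|x\|^2-\|T^p x\|^2\bigr)$ and deduces that $\{T^n x\}$ is norm-Cauchy. Your closing remark that nothing uses complex scalars is correct and is the point the paper actually needs.
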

\begin{proposition}\label{prop:dix}
 Let $m\mapsto \H_m$ be a $\mu$-measurable field of complex Hilbert spaces over a measure space $(X,\mu)$.
 Let $H_m$ be a closed linear subspace of $\H_m$ and $E_H(m)$ be the  projection onto $H_m$. Let $\mathcal S_H$ be the set of all measurable vector fields
 $m\mapsto \xi(m)$ such that $\xi(m)\in H_m$. Then the following are equivalent:
 \begin{itemize}
\item[(i)] the field of subspaces $m\mapsto H_m$, endowed with $\mathcal S_H$, is $\mu$-measurable;
\item [(ii)] There exists a sequence $\{\xi_n\}_{n\in\NN}$ of $\mu$-measurable vector fields  ($m\mapsto \xi_n(m)\in \H_m$)
such that $\{\xi_n(m)\}_{n\in\NN}$ is a total sequence in each $H_m$;
\item [(iii)] for any measurable vector field $\xi$ with respect to $\mathcal S$, the field $m\mapsto E_H(m)\xi(m)$ is measurable.
\end{itemize}
\end{proposition}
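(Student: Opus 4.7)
\textbf{Proof plan for Proposition \ref{prop:dix}.} The plan is to establish the cyclic chain of implications (i) $\Rightarrow$ (ii) $\Rightarrow$ (iii) $\Rightarrow$ (i). The implication (i) $\Rightarrow$ (ii) is essentially immediate from the definition of a $\mu$-measurable field of Hilbert spaces: by axiom (iii) of that definition, the set $\mathcal{S}_H$ contains a sequence $\{\xi_n\}$ which is total in each fiber $H_m$, and these are by construction measurable vector fields.

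For (ii) $\Rightarrow$ (iii), the plan is to carry out a measurable Gram--Schmidt orthonormalization on the total sequence $\{\xi_n\}$ in order to produce a countable family $\{e_n\}$ of measurable vector fields such that, for $\mu$-a.e.\! $m$, the nonzero $e_n(m)$ form an orthonormal basis of $H_m$. The delicate point is that the linear dependence pattern among the $\xi_n(m)$ may vary with $m$: I would handle this by defining $e_n(m)$ recursively as $(\xi_n(m)-\sum_{k<n}\<e_k(m),\xi_n(m)\>e_k(m))/\|\cdot\|$ whenever the numerator is nonzero and setting $e_n(m)=0$ otherwise. Measurability is preserved at each step since inner products and norms of measurable fields are measurable scalar functions, and division by a measurable nonvanishing function on a measurable set remains measurable. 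Once $\{e_n\}$ is constructed, the projection satisfies $E_H(m)\xi(m)=\sum_n \<e_n(m),\xi(m)\>e_n(m)$ with pointwise (hence $\mu$-a.e.) convergence of partial sums, and each partial sum is manifestly measurable; (iii) follows since a pointwise a.e.\! limit of measurable vector fields is measurable.

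For (iii) $\Rightarrow$ (i), the plan is to exhibit the data making $m\mapsto H_m$ into a measurable field. Starting from a total sequence $\{\eta_n\}$ of measurable vector fields in the ambient field $m\mapsto\H_m$, define $\zeta_n(m):=E_H(m)\eta_n(m)$. By (iii) each $\zeta_n$ is a measurable field, and since $\{\eta_n(m)\}$ is total in $\H_m$ and $E_H(m)$ is the orthogonal projection onto $H_m$, the family $\{\zeta_n(m)\}$ is total in $H_m$ at every point. Declaring $\mathcal{S}_H$ to be the set of measurable vector fields with values in $H_m$, I would verify: the norm of such a field is measurable (inherited from $\mathcal{S}$); completeness, i.e.\! any vector field $\eta$ in $H_m$ whose inner products with every element of $\mathcal{S}_H$ are measurable actually lies in $\mathcal{S}_H$ (check by pairing with the ambient total sequence $\{\eta_n\}$ and noting $\<\eta(m),\eta_n(m)\>=\<\eta(m),\zeta_n(m)\>$); and totality, which is the sequence $\{\zeta_n\}$ just constructed. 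This yields (i).

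The step I expect to be the main obstacle is the measurable Gram--Schmidt inside (ii) $\Rightarrow$ (iii): one has to carefully control the measurable set where each numerator vanishes and avoid implicit assumptions of constant dimension of $H_m$. The real-Hilbert-space version requires no modification since all scalar products, norms and projections used are real-linear expressions in the vector fields, so the argument transcribes verbatim, in accordance with the remark preceding the statement.
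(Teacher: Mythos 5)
Correct. The paper itself offers no proof of this proposition (it is quoted from Dixmier, Section II.1.7, Proposition 9, with only the remark that the argument works over $\RR$ as well as $\CC$), and your cyclic argument --- (i)$\Rightarrow$(ii) from the totality axiom since $\mathcal S_H\subset\mathcal S$, (ii)$\Rightarrow$(iii) by measurable Gram--Schmidt with measurable handling of the vanishing set and pointwise convergence of the partial sums of $\sum_n \langle e_n(m),\xi(m)\rangle e_n(m)$, and (iii)$\Rightarrow$(i) by projecting an ambient fundamental sequence and verifying the three axioms for $\mathcal S_H$ --- is exactly the standard textbook proof; the only implicit ingredient worth noting is the standard criterion (also in Dixmier, Section II.1) that a vector field is measurable as soon as its inner products with a fundamental sequence of measurable fields are measurable, which you use in the completeness check, and that step is unproblematic.
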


This proposition allows us to consider the direct integral $\int^\oplus_X H_m d\mu(m)$ of standard subspaces.
 A vector $\xi$ in this subspace is a $L^2$-measurable field of vectors $\{\xi(m)\}$
 such that $\xi(m) \in H_m$ for almost all $m$.
 In our application, $H_m$ is the closure of
 the Fourier transforms of real compactly supported continuous functions on the Minkowski space $\RR^{1+1}$,
 restricted to the mass shell with mass $m$.

\begin{lemma}\label{lem:ABC}With the definitions in Lemma \ref{prop:dix},
let $H_m$ be a $\mu$-measurable field of subspaces, and consider $H=\int_X^\oplus H_md\mu(m)$,
\begin{itemize}
\item[(a)]
$m\mapsto H^\perp_m$ $m\mapsto H^\perp_m$ is a $\mu$-measurable field and $H^\perp=\int_X^\oplus H^\perp_md\mu(m)$.
If $\H$ is a real Hilbert space obtained from a complex Hilbert space and $\re\<\cdot,\cdot\>$,
then $m\mapsto H'_m$ $m\mapsto H'_m$ is a $\mu$-measurable field and $H'=\int_X^\oplus H'_md\mu(m)$.

\item[(b)] let $\{H_k\}_{k\in\NN}$ be a countable family of $\mu$-measurable fields, then $\bigcap_{k\in\NN} (H_k)_m$ is a $\mu$-measurable field and  $\bigcap_{k\in\NN} H_k=\int_X^\oplus \bigcap_{k\in\NN} (H_k)_m d\mu(m)$.

\item[(c)] let $\{H_k\}_{k\in\NN}$ be a countable family of $\mu$-measurable fields, then $\overline{\sum_{k\in\NN} (H_k)}_m$ is a $\mu$-measurable field and  $\overline{\sum_{k\in\NN} H_k}=\int_X^\oplus \overline{\sum_{k\in\NN} (H_k) }_md\mu(m)$.
\end{itemize}
\end{lemma}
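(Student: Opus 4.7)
The plan is to prove the three items in the order (a), (b), (c), using Proposition \ref{prop:dix}(iii) throughout as the working criterion: a field of subspaces $m\mapsto H_m$ is $\mu$-measurable if and only if the associated field of orthogonal projections $m\mapsto E_H(m)$ carries $\mu$-measurable vector fields to $\mu$-measurable vector fields.

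For (a), I would first observe that $E_{H^\perp}(m)=\1-E_H(m)$ is a measurable operator field whenever $E_H(m)$ is, so $m\mapsto H_m^\perp$ is $\mu$-measurable. For the identification $H^\perp=\int_X^\oplus H_m^\perp\,d\mu(m)$, the inclusion $\supset$ is immediate from $\langle\eta,\xi\rangle=\int\langle\eta(m),\xi(m)\rangle_m\,d\mu(m)=0$ whenever $\eta(m)\in H_m^\perp$ and $\xi(m)\in H_m$ a.e. Conversely, if $\eta\in H^\perp$, then the measurable field $m\mapsto E_H(m)\eta(m)$ lies in $H$, so $\|E_H\eta\|^2=\langle\eta,E_H\eta\rangle=0$, forcing $E_H(m)\eta(m)=0$ a.e., i.e.\ $\eta(m)\in H_m^\perp$ a.e. For the symplectic complement version, write $H'=(iH)^{\perp_\RR}$; since multiplication by $i$ is a bounded (in fact unitary) operator, the field $m\mapsto iH_m$ is $\mu$-measurable (by checking Proposition \ref{prop:dix}(ii) on the sequence $\{i\xi_n(m)\}$), and the previous argument applied in the real Hilbert space setting yields the claim.

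For (b), I would first treat finite intersections. By Proposition \ref{prop:inters} applied pointwise, $E_{H_1\cap\cdots\cap H_N}(m)$ is the strong limit as $n\to\infty$ of $(E_{H_1}(m)\cdots E_{H_N}(m))^n$; each finite product is a measurable operator field and a pointwise strong limit of measurable operator fields remains measurable (since this amounts to a pointwise limit of scalar measurable functions after pairing with measurable vector fields). For the countable case, set $K_N=\bigcap_{k=1}^N H_k$; the decreasing sequence of projections $E_{K_N}$ converges strongly to $E_{\bigcap_k H_k}$, giving measurability by the same limit argument. The disintegration identity $\bigcap_k H_k=\int_X^\oplus\bigcap_k(H_k)_m\,d\mu(m)$ follows by selecting, for each $k$, a full-measure set on which $\xi\in H_k$ is equivalent to $\xi(m)\in(H_k)_m$, and intersecting countably many such sets to get one common conull set.

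For (c), I would use duality with (a) and (b): $\overline{\sum_k H_k}=\bigl(\bigcap_k H_k^\perp\bigr)^\perp$ (respectively $\bigl(\bigcap_k H_k'\bigr)'$ in the real/symplectic case), so the measurability and disintegration of the closed sum follow by combining the two previous parts. The main technical nuisance, which I expect to require the most care, is not any single step in isolation but the bookkeeping of $\mu$-null sets across countably many indices in (b); once one fixes the common conull set on which all the pointwise relations simultaneously hold, everything else is a direct application of the measurability criterion of Proposition \ref{prop:dix} and the projection identities above.
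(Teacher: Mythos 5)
Your proposal is correct and follows essentially the same route as the paper: the criterion of Proposition \ref{prop:dix}(iii) together with $\1-E_H(m)$ and multiplication by $i$ for (a), Proposition \ref{prop:inters} plus a decreasing sequence of projections for (b), and duality $\overline{\sum_k H_k}=\bigl(\bigcap_k H_k^\perp\bigr)^\perp$ (resp.\ the symplectic analogue) for (c). The only cosmetic difference is in the final disintegration step of (b), where you intersect countably many conull sets directly while the paper reuses the projection sequence via $\xi=P_{\cap_k H_k}\xi$; both arguments are valid.
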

\begin{proof}
(a) By definition $H^\bot=\{\xi\in\H: \langle \xi,\eta\rangle=0, \eta\in H\}$.
Since $H_m$ is a measurable field, by Proposition \ref{prop:dix} (iii)  for any $\xi\in\H$,
$m\mapsto E_H(m)\xi(m)\in S_H$ is a measurable map.
Then for every $\xi\in\H$, the vector field $m\mapsto (1-E_H(m)) \xi(m)$ is clearly $\mu$-measurable,
hence by implication (iii) $\Rightarrow$ (i) in Proposition \ref{prop:dix},
we have that  $ m\mapsto H_m^\bot$ is a $\mu$-measurable vector field of subspaces.
We denote their direct integral by $\int_X^\oplus d\mu(m) H_m^\bot$.
The inclusion $H^\bot\supset\int_X^\oplus d\mu(m) H_m^\bot$ is obvious.
Any $\xi$ can be decomposed as $E_H\xi + (1-E_H)\xi$,
hence $\int_X^\oplus d\mu(m) H_m^\bot$ must coincide with $H^\perp$.

Now consider $m\mapsto H_m$ $\mu$-measurable field of real subspaces,
then $m\mapsto iH_m$ is also a measurable of real subspaces.
Indeed, let $E_m$ be the projection on $H_m$ for any $x\in \H$, $m\mapsto iE(m)x(m)$ is a measurable vector field,
as $\mathcal{S}$ is closed under the multiplication by $i$.
Therefore, (i) in Proposition \ref{prop:dix} holds
and we obtain $iH = \int^\oplus_X iH_m d\mu(m)$.
We conclude by recalling that $H'=(iH)^\bot$, and combining the last comments, that $H'=\int_X^\oplus d\mu(m)H_m'$.

\vspace{0.3cm}
(b)
 Let $\{H_k\}_{k\in\NN}$ be a family of measurable fields of subspaces. We need to show that $m\mapsto \bigcap_k(H_m)_k$ is a measurable field of real spaces and $\int_X^\oplus d\mu(m)\left(\bigcap_k(H_m)_k\right)\subset\bigcap H_n$. 
Firstly, if the family is finite, namely if we have $H_1, \ldots, H_K$, then for any $\xi\in\H$,
then $((P_{H_1})_m\ldots (P_{H_K})_m)^n\xi(m)\stackrel{n\rightarrow+\infty}\longrightarrow (P_{\cap_k H_k})_m\xi(m)$
for each $m$ by Proposition \ref{prop:inters},
and this is measurable. Hence by implication (iii) $\Rightarrow$ (i) in Proposition \ref{prop:dix}
we conclude that  $m\mapsto \bigcap_{k=1}^K(H_m)_k$ is a measurable field of real spaces.
If the family is countably infinite, we take $\tilde P_K:m\mapsto (\tilde P_K)_m$ as the projection on
$\bigcap_{k=0}^K (H_k)_m$.
This is a $\mu$-measurable family of decreasing projections,
thus for any $\xi\in H$ the  limit $(\tilde P_K)_m\xi(m)\stackrel{K\rightarrow+\infty}\longrightarrow (P_{\cap^\infty_{k=1}H_k})_m\xi(m)$  is still measurable. Thus if $H_k$ is a countable  family of real subspaces, $m\mapsto \bigcap_k (H_k)(m)$ gives a measurable family of real subspaces.
The same sequence of projections shows that, if $\xi \in \bigcap_k H_k$,
then $\xi = P_{\cap_k H_k}\xi$ and hence $\xi(m) \in H_m$, which concludes the claim.

\vspace{0.3cm}

(c) This follows by combining (a) and (b).
\end{proof}

\subsubsection*{Acknowledgements}
We thank Yu Nakayama for interesting discussions and bibliographical information.

The authors acknowledge the MIUR Excellence Department Project awarded
to the Department of Mathematics, University of Rome Tor Vergata, CUP E83C18000100006.

{\small
\def\cprime{$'$} \def\polhk#1{\setbox0=\hbox{#1}{\ooalign{\hidewidth
  \lower1.5ex\hbox{`}\hidewidth\crcr\unhbox0}}} \def\cprime{$'$}

}

\begin{thebibliography}{CLTW12}

\bibitem[Ara63]{Araki63}
Huzihiro Araki.
\newblock A lattice of von {N}eumann algebras associated with the quantum
  theory of a free {B}ose field.
\newblock {\em J. Mathematical Phys.}, 4:1343--1362, 1963.
\newblock \url{https://doi.org/10.1063/1.1703912}.

\bibitem[AZ05]{AZ05}
Huzihiro Araki and L{\'a}szl{\'o} Zsid{\'o}.
\newblock Extension of the structure theorem of {B}orchers and its application
  to half-sided modular inclusions.
\newblock {\em Rev. Math. Phys.}, 17(5):491--543, 2005.
\newblock \url{https://arxiv.org/abs/math/0412061}.

\bibitem[BDL07]{BDL07}
Detlev Buchholz, Claudio D'Antoni, and Roberto Longo.
\newblock Nuclearity and thermal states in conformal field theory.
\newblock {\em Communications in Mathematical Physics}, 270(1):267--293, 2007.
\newblock \url{https://arxiv.org/abs/math-ph/0603083}.

\bibitem[BDM10]{BDM10}
Henning Bostelmann, Claudio D'Antoni, and Gerardo Morsella.
\newblock On dilation symmetries arising from scaling limits.
\newblock {\em Comm. Math. Phys.}, 294(1):21--60, 2010.
\newblock \url{https://arxiv.org/abs/0812.4762}.

\bibitem[BGL93]{BGL93}
R.~Brunetti, D.~Guido, and R.~Longo.
\newblock Modular structure and duality in conformal quantum field theory.
\newblock {\em Comm. Math. Phys.}, 156(1):201--219, 1993.
\newblock \url{http://projecteuclid.org/euclid.cmp/1104253522}.

\bibitem[BGL02]{BGL02}
R.~Brunetti, D.~Guido, and R.~Longo.
\newblock Modular localization and {W}igner particles.
\newblock {\em Rev. Math. Phys.}, 14(7-8):759--785, 2002.
\newblock \url{https://arxiv.org/abs/math-ph/0203021}.

\bibitem[Bor92]{Borchers92}
H.-J. Borchers.
\newblock The {CPT}-theorem in two-dimensional theories of local observables.
\newblock {\em Comm. Math. Phys.}, 143(2):315--332, 1992.
\newblock \url{http://projecteuclid.org/euclid.cmp/1104248958}.

\bibitem[BSM90]{BS90}
Detlev Buchholz and Hanns Schulz-Mirbach.
\newblock Haag duality in conformal quantum field theory.
\newblock {\em Rev. Math. Phys.}, 2(1):105--125, 1990.
\newblock \url{https://www.researchgate.net/publication/246352668}.

\bibitem[BT15]{BT15}
Marcel Bischoff and Yoh Tanimoto.
\newblock Integrable {QFT} and {L}ongo-{W}itten endomorphisms.
\newblock {\em Ann. Henri Poincar\'e}, 16(2):569--608, 2015.
\newblock \url{https://arxiv.org/abs/1305.2171}.

\bibitem[CLTW12]{CLTW12-2}
Paolo Camassa, Roberto Longo, Yoh Tanimoto, and Mih{\'a}ly Weiner.
\newblock Thermal {S}tates in {C}onformal {QFT}. {II}.
\newblock {\em Comm. Math. Phys.}, 315(3):771--802, 2012.
\newblock \url{https://arxiv.org/abs/1109.2064}.

\bibitem[Dix81]{Dixmier81}
Jacques Dixmier.
\newblock {\em von {N}eumann algebras}, volume~27 of {\em North-Holland
  Mathematical Library}.
\newblock North-Holland Publishing Co., Amsterdam, 1981.
\newblock \url{https://books.google.com/books?id=8xSoAAAAIAAJ}.

\bibitem[DL84]{DL84}
S.~Doplicher and R.~Longo.
\newblock Standard and split inclusions of von {N}eumann algebras.
\newblock {\em Invent. Math.}, 75(3):493--536, 1984.
\newblock \url{https://eudml.org/doc/143108}.

\bibitem[DR03]{DR03}
Michael D\"utsch and Karl-Henning Rehren.
\newblock Generalized free fields and the {A}d{S}-{CFT} correspondence.
\newblock {\em Ann. Henri Poincar\'e}, 4(4):613--635, 2003.
\newblock \url{https://arxiv.org/abs/math-ph/0209035}.

\bibitem[DR09]{DR09}
Daniele Dorigoni and Vyacheslav~S. Rychkov.
\newblock {Scale Invariance + Unitarity $=>$ Conformal Invariance?}
\newblock 2009.
\newblock \url{https://arxiv.org/abs/0910.1087}.

\bibitem[ESNR11]{ENR11}
Sheer El-Showk, Yu~Nakayama, and Slava Rychkov.
\newblock What {M}axwell theory in {$d\neq 4$} teaches us about scale and
  conformal invariance.
\newblock {\em Nuclear Phys. B}, 848(3):578--593, 2011.
\newblock \url{https://arxiv.org/abs/1101.5385}.

\bibitem[FG94]{FG94}
Franca Figliolini and Daniele Guido.
\newblock On the type of second quantization factors.
\newblock {\em J. Operator Theory}, 31(2):229--252, 1994.
\newblock
  \url{https://www.researchgate.net/publication/265588862_On_the_type_of_second_quantization_factors}.

\bibitem[FJ96]{FJ96}
Klaus Fredenhagen and Martin J{\"o}r{\ss}.
\newblock Conformal {H}aag-{K}astler nets, pointlike localized fields and the
  existence of operator product expansions.
\newblock {\em Comm. Math. Phys.}, 176(3):541--554, 1996.
\newblock \url{https://projecteuclid.org/euclid.cmp/1104286114}.

\bibitem[Flo98]{Florig98}
Martin Florig.
\newblock On {B}orchers' theorem.
\newblock {\em Lett. Math. Phys.}, 46(4):289--293, 1998.
\newblock \url{https://dx.doi.org/10.1023/A:1007546507392}.

\bibitem[FST89]{FST89}
P.~Furlan, G.~M. Sotkov, and I.~T. Todorov.
\newblock Two-dimensional conformal quantum field theory.
\newblock {\em Riv. Nuovo Cimento (3)}, 12(6):1--202, 1989.
\newblock \url{https://doi.org/10.1007/BF02742979}.

\bibitem[GF93]{GF93}
Fabrizio Gabbiani and J{\"u}rg Fr{\"o}hlich.
\newblock Operator algebras and conformal field theory.
\newblock {\em Comm. Math. Phys.}, 155(3):569--640, 1993.
\newblock \url{http://projecteuclid.org/euclid.cmp/1104253398}.

\bibitem[GL96]{GL96}
Daniele Guido and Roberto Longo.
\newblock The conformal spin and statistics theorem.
\newblock {\em Comm. Math. Phys.}, 181(1):11--35, 1996.
\newblock \url{http://projecteuclid.org/euclid.cmp/1104287623}.

\bibitem[GLW98]{GLW98}
D.~Guido, R.~Longo, and H.-W. Wiesbrock.
\newblock Extensions of conformal nets and superselection structures.
\newblock {\em Comm. Math. Phys.}, 192(1):217--244, 1998.
\newblock \url{https://arxiv.org/abs/hep-th/9703129}.

\bibitem[Hal62]{Halperin62}
Israel Halperin.
\newblock The product of projection operators.
\newblock {\em Acta Sci. Math. (Szeged)}, 23:96--99, 1962.
\newblock
  \url{http://pub.acta.hu/acta/showCustomerVolume.action?id=7188&dataObjectType=volume&noDataSet=true&style=}.

\bibitem[KL04]{KL04-2}
Yasuyuki Kawahigashi and Roberto Longo.
\newblock Classification of two-dimensional local conformal nets with {$c<1$}
  and 2-cohomology vanishing for tensor categories.
\newblock {\em Comm. Math. Phys.}, 244(1):63--97, 2004.
\newblock \url{https://arxiv.org/abs/math-ph/0304022}.

\bibitem[Koe03]{Koester03}
S.~Koester.
\newblock {Absence of stress energy tensor in {CFT}$_2$ models}.
\newblock 2003.
\newblock \url{https://arxiv.org/abs/math-ph/0303053}.

\bibitem[K{\"o}s02]{Koester02}
S.~K{\"o}ster.
\newblock Conformal transformations as observables.
\newblock {\em Lett. Math. Phys.}, 61(3):187--198, 2002.
\newblock \url{https://arxiv.org/abs/math-ph/0201016}.

\bibitem[Lan74]{Landau74}
Lawrence~J. Landau.
\newblock On local functions of fields.
\newblock {\em Comm. Math. Phys.}, 39:49--62, 1974.
\newblock \url{http://projecteuclid.org/euclid.cmp/1103860121}.

\bibitem[LMR16]{LMR16}
Roberto Longo, Vincenzo Morinelli, and Karl-Henning Rehren.
\newblock Where infinite spin particles are localizable.
\newblock {\em Comm. Math. Phys.}, 345(2):587--614, 2016.
\newblock \url{https://arxiv.org/abs/1505.01759}.

\bibitem[Lon08]{Longo08}
Roberto Longo.
\newblock Real {H}ilbert subspaces, modular theory, {${\rm SL}(2,{\bf R})$} and
  {CFT}.
\newblock In {\em Von {N}eumann algebras in {S}ibiu: {C}onference
  {P}roceedings}, pages 33--91. Theta, Bucharest, 2008.
\newblock
  \url{https://www.mat.uniroma2.it/longo/Lecture-Notes_files/LN-Part1.pdf}.

\bibitem[LRT78]{LRT78}
Pen Leyland, John Roberts, and Daniel Testard.
\newblock Duality for quantum free fields.
\newblock {\em unpublished manuscript, Marseille}, 1978.

\bibitem[Mac77]{Mack77}
G.~Mack.
\newblock All unitary ray representations of the conformal group {${\rm SU}(2,
  2)$} with positive energy.
\newblock {\em Comm. Math. Phys.}, 55(1):1--28, 1977.
\newblock \url{https://projecteuclid.org/euclid.cmp/1103900926}.

\bibitem[Mor18]{Morinelli18}
Vincenzo Morinelli.
\newblock The {B}isognano-{W}ichmann property on nets of standard subspaces,
  some sufficient conditions.
\newblock {\em Ann. Henri Poincar\'e}, 19(3):937--958, 2018.
\newblock \url{https://arxiv.org/abs/1703.06831}.

\bibitem[MTW18]{MTW17}
Vincenzo Morinelli, Yoh Tanimoto, and Mih\'aly Weiner.
\newblock Conformal covariance and the split property.
\newblock {\em Comm. Math. Phys.}, 357(1):379--406, 2018.
\newblock \url{https://arxiv.org/abs/1609.02196}.

\bibitem[Nak15]{Nakayama15}
Yu~Nakayama.
\newblock Scale invariance vs conformal invariance.
\newblock {\em Phys. Rep.}, 569:1--93, 2015.
\newblock \url{https://arxiv.org/abs/1302.0884}.

\bibitem[Nak18]{Nakayama18}
Yu~Nakayama.
\newblock {On the realization of impossible anomalies}.
\newblock 2018.
\newblock \url{https://arxiv.org/abs/1804.02940}.

\bibitem[Ost73]{O}
Konrad Osterwalder.
\newblock Duality for free {B}ose fields.
\newblock {\em Comm. Math. Phys.}, 29:1--14, 1973.
\newblock \url{http://projecteuclid.org/euclid.cmp/1103858474}.

\bibitem[Pol88]{Polchinski88}
Joseph Polchinski.
\newblock Scale and conformal invariance in quantum field theory.
\newblock {\em Nuclear Phys. B}, 303(2):226--236, 1988.
\newblock \url{https://doi.org/10.1016/0550-3213(88)90179-4}.

\bibitem[Pon46]{Pontryagin46}
L.~S. Pontryagin.
\newblock {\em Topological groups}.
\newblock Translated from the Russian by Emma Lehmer. Princeton University
  Press, 1946.
\newblock
  \url{https://books.google.it/books?id=eS0PAAAAIAAJ&hl=it&source=gbs_book_other_versions}.

\bibitem[Reh00]{Rehren00}
K.-H. Rehren.
\newblock Chiral observables and modular invariants.
\newblock {\em Comm. Math. Phys.}, 208(3):689--712, 2000.
\newblock \url{https://arxiv.org/abs/hep-th/9903262}.

\bibitem[SW71]{SW}
P.~Sadowski and S.~L. Woronowicz.
\newblock Total sets in quantum field theory.
\newblock {\em Rep. Mathematical Phys.}, 2(2):113--120, 1971.
\newblock \url{https://doi.org/10.1016/0034-4877(71)90024-3}.

\bibitem[Tak03]{TakesakiII}
M.~Takesaki.
\newblock {\em Theory of operator algebras. {II}}, volume 125 of {\em
  Encyclopaedia of Mathematical Sciences}.
\newblock Springer-Verlag, Berlin, 2003.
\newblock Operator Algebras and Non-commutative Geometry, 6.
  \url{https://books.google.com/books?id=-4GyR1VlQz4C}.

\bibitem[Tan12]{Tanimoto12-2}
Yoh Tanimoto.
\newblock Construction of {W}edge-{L}ocal {N}ets of {O}bservables {T}hrough
  {L}ongo-{W}itten {E}ndomorphisms.
\newblock {\em Comm. Math. Phys.}, 314(2):443--469, 2012.
\newblock \url{https://arxiv.org/abs/1107.2629}.

\bibitem[Tan14]{Tanimoto14-1}
Yoh Tanimoto.
\newblock Construction of two-dimensional quantum field models through
  {L}ongo-{W}itten endomorphisms.
\newblock {\em Forum Math. Sigma}, 2:e7, 31, 2014.
\newblock \url{https://arxiv.org/abs/1301.6090}.

\bibitem[Wei05]{Weinerthesis}
Mih\'aly Weiner.
\newblock Conformal covariance and related properties of chiral qft.
\newblock 2005.
\newblock Ph.D.\! thesis, Universit\'a di Roma ``Tor Vergata''.
  \url{http://arxiv.org/abs/math/0703336}.

\bibitem[Wie93]{Wiesbrock93-1}
Hans-Werner Wiesbrock.
\newblock Half-sided modular inclusions of von-{N}eumann-algebras.
\newblock {\em Comm. Math. Phys.}, 157(1):83--92, 1993.
\newblock \url{https://projecteuclid.org/euclid.cmp/1104253848}.

\bibitem[Wie98]{Wiesbrock98}
Hans-Werner Wiesbrock.
\newblock Modular intersections of von {N}eumann algebras in quantum field
  theory.
\newblock {\em Comm. Math. Phys.}, 193(2):269--285, 1998.
\newblock
  \url{https://www-sfb288.math.tu-berlin.de/pub/Preprints/preprint193.ps.gz}.

\bibitem[Zom86]{Zomolodchikov86}
A.~B. Zomolodchikov.
\newblock ``{I}rreversibility'' of the flux of the renormalization group in a
  {$2$}{D} field theory.
\newblock {\em Pis\cprime ma Zh. \`Eksper. Teoret. Fiz.}, 43(12):565--567,
  1986.
\newblock \url{http://www.jetpletters.ac.ru/ps/1413/article_21504.shtml}.

\end{thebibliography}
\end{document}